\providecommand{\U}[1]{\protect\rule{.1in}{.1in}}
\providecommand{\U}[1]{\protect\rule{.1in}{.1in}}
\newcommand{\R}{\mathbb{R}}
\newtheorem{assumption}{Assumption}
\newtheorem{theorem}{Theorem}
\newtheorem{corollary}{Corollary}
\newtheorem{lemma}{Lemma}
\newtheorem{proposition}{Proposition}
\newtheorem{remark}{Remark}
\useunder{\uline}{\ul}{}
\newcommand{\multiline}[1]{  \begin{tabularx}{\dimexpr\linewidth-\ALG@thistlm}[t]{@{}X@{}}
#1
\end{tabularx}
}
\begin{document}

\title{{\LARGE \textbf{Event-Driven Receding Horizon Control of Energy-Aware Dynamic Agents For Distributed Persistent Monitoring}}\vspace{-10pt}}
\author{Shirantha Welikala and Christos G. Cassandras \thanks{$^{\star}$Supported in part by NSF under grants ECCS-1931600, DMS-1664644, CNS-1645681, by AFOSR under grant FA9550-19-1-0158, by ARPA-E under grant DE-AR0001282 and by the NEXTCAR program under grant DE-AR0000796 and by the MathWorks.} \thanks{The authors are with the Division of Systems Engineering and Center for Information and Systems Engineering, Boston University, Brookline, MA 02446, \texttt{{\small \{shiran27,cgc\}@bu.edu}}.}}
\maketitle

\begin{abstract}
This paper addresses the persistent monitoring problem defined on a network where a set of nodes (targets) needs to be monitored by a team of dynamic energy-aware agents. The objective is to control the agents' motion to jointly optimize the overall agent energy consumption and a measure of overall node state uncertainty, evaluated over a finite period of interest. To achieve these objectives, we extend an established event-driven Receding Horizon Control (RHC) solution by adding an optimal controller to account for agent motion dynamics and associated energy consumption. The resulting RHC solution is computationally efficient, distributed and on-line. Finally, numerical results are provided highlighting improvements compared to an existing RHC solution that uses energy-agnostic first-order agents. 
\end{abstract}

\thispagestyle{empty} \pagestyle{empty}

\section{Introduction}


We consider the problem of controlling a group of mobile \emph{agents} deployed to monitor a finite set of ``points of interest'' (henceforth called \emph{targets}) in a mission space. In particular, each agent follows second-order unicycle dynamics and each target has an ``uncertainty'' metric associated with its state that increases when no agent is monitoring (i.e., sensing or collecting information from) the target and decreases when one or more agents are monitoring it by dwelling in its vicinity. The goal is to optimally control each agent's motion so as to collectively minimize the overall agent energy consumption and a measure of target uncertainties - evaluated over a fixed period of interest. This problem setup is widely known as the \emph{persistent monitoring} problem and it encompasses applications such as  
environmental sensing \cite{Elwin2020}, 
surveillance \cite{Kingston2008}, 
traffic monitoring \cite{Reshma2016},
data collection \cite{Smith2011},
event detection \cite{Yu2015} and 
energy management \cite{Mathew2015}. In order to suit different application scenarios, this persistent monitoring problem has been studied in the literature under different objective functions \cite{Hari2019}, agent dynamic models \cite{Welikala2020J4,Wang2017} and target state dynamic models \cite{Zhou2019,Lan2013}.

A common way to categorize persistent monitoring problem setups is based on whether the shapes of trajectory segments (available for the agents to travel between targets) are predefined \cite{Smith2011,Zhou2019} or not \cite{Lan2013,Lin2013}. In the latter case, the main challenge is to search for the optimal agent trajectory shapes. This is often achieved by restricting agent trajectory shapes to specific parametric families (elliptical, Fourier, etc. \cite{Lin2013}) and optimizing the objective function of interest within these families. In contrast, when the shapes of trajectory segments are predefined, the challenge is to search for: 1) the optimal target visiting schedules of agents and 2) the optimal control laws to govern agents on corresponding trajectory segments - assuming an agent has to remain stationary on a target to monitor it. As introduced in \cite{Zhou2019} and illustrated in Fig. \ref{Fig:GraphAbstraction}, this can be seen as a Persistent Monitoring on a Network (PMN) problem where targets and trajectory segments are modeled as nodes and edges of a network, respectively. Such PMN problems are significantly more complicated than the NP-hard traveling salesman problems \cite{Kirk2020} and thus have inspired many different solution approaches \cite{Welikala2020J4,Zhou2019,Rezazadeh2019}. 

\begin{figure}[!h]
    \centering
    \includegraphics[width=3in]{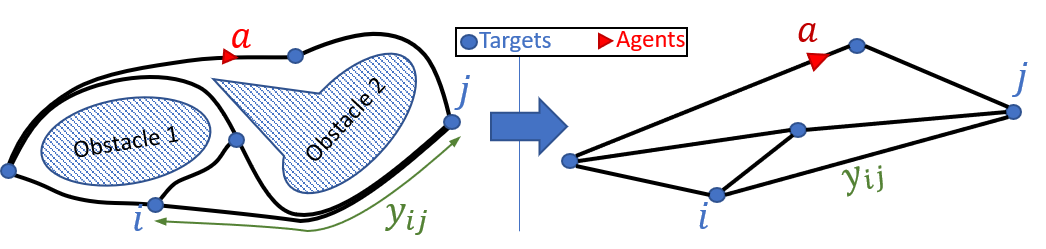}
    \caption{The network abstraction.}
    \label{Fig:GraphAbstraction}
    \vspace{-2mm}
\end{figure}

The work in \cite{Rezazadeh2019} proposes a centralized off-line greedy algorithm to determine the optimal target visiting schedules of agents (i.e., each agent's sequence of targets to visit and respective dwell-times to be spent at visited targets) in PMN problems. In contrast, for the same task, \cite{Zhou2019} proposes a gradient-based distributed on-line approach - which, however, requires a brief centralized off-line initialization stage to address non-convexities. An alternative approach is taken in the recent work \cite{Welikala2020J4} which exploits the event-driven nature of PMN systems to develop a distributed on-line solution based on event-driven Receding Horizon Control (RHC) \cite{Li2006}. 
This RHC solution enjoys many promising features such as being computationally cheap, parameter-free, gradient-free and robust in the presence of various forms of state and system perturbations. 

However, the work mentioned above \cite{Welikala2020J4,Zhou2019,Rezazadeh2019} ignores agent dynamics by assuming each trajectory segment has a predefined transit-time value that an agent has to spend in order to travel on it. This assumption allows one to focus on determining the optimal target visiting schedules of agents, ignoring how the agents are governed during the transition periods where they travel on trajectory segments. In essence, it is identical to assuming each agent follows a first-order dynamic model controlled by its velocity.

In contrast, in this paper, we assume each agent follows a second-order dynamic model governed by acceleration rather than velocity. This leads to a better approximation of actual agent behaviors in practice and smoother agent state trajectories \cite{Wang2017}. In particular, we incorporate agent energy consumption into the objective function to limit agent accelerations and velocities and also to motivate agents to make energy-efficient decisions. Under these modifications, we show how each agent needs to optimally select each transit-time value on its trajectory based on current local state information - instead of using a fixed set of predefined transit-time values. In particular, we explicitly derive optimal control laws to govern each agent on each trajectory segment. Finally, we not only compare the improvements achieved with respect to an existing RHC solution \cite{Welikala2020J4} that uses energy-agnostic first-order agents but also derive energy-aware optimal control laws for even such first-order agents. 

In this paper, first, we show that each agent's trajectory is fully characterized by the sequence of decisions it makes at specific discrete event times in its trajectory. Second, considering an agent at each such event-time, we formulate a Receding Horizon Control Problem (RHCP) that determines the agent's optimal immediate control decisions over an \emph{optimally determined} planning horizon. 
These control decisions are subsequently executed over a shorter action horizon defined by the next event that the agent observes, and the same process is continued in this event-driven manner. 
As the third step, we show that this RHCP includes an optimal control component and it is then solved considering energy-aware second-order agents.
Finally, several different numerical examples (i.e.,  PMN problems) are used to compare the developed RHC solution with respect to the RHC solution proposed in \cite{Welikala2020J4} that uses energy-agnostic first-order agents.

This paper is organized as follows. Section \ref{Sec:ProblemFormulation} presents the problem formulation and overview of the RHC approach. Sections \ref{Sec:SolutionToTheRHCPs} and \ref{Sec:FOMethods} present the formulation and solution of the RHCP with second-order agents and first-order agents, respectively. Numerical results are provided in Section \ref{Sec:ExtentionsAndExamples}. Finally, Section \ref{Sec:Conclusion} concludes the paper.


\section{Problem Formulation}
\label{Sec:ProblemFormulation}
We consider a $2$-dimensional mission space containing $M$ targets (nodes) in the set $\mathcal{T}=\{1,2,\ldots,M\}$ where the location of target $i\in\mathcal{T}$ is fixed at $Y_{i}\in\mathbb{R}^{2}$. A team of $N$ agents in the set $\mathcal{A}=\{1,2,\ldots,N\}$ is deployed to monitor the targets. Each agent $a\in\mathcal{A}$ moves within this mission space where its location and orientation at time $t$ are denoted by $s_{a}(t)\in\mathbb{R}^{2}$ and $\theta_a(t)\in[0,2\pi]$, respectively.

\paragraph{\textbf{Target Model}}

Each target $i\in\mathcal{T}$ has an associated \emph{uncertainty state} $R_{i}(t)\in\mathbb{R}$ which follows the dynamics \cite{Zhou2019}:
\begin{equation}
\dot{R}_{i}(t)=
\begin{cases}
A_{i}-B_{i}N_{i}(t) &\mbox{ if } R_{i}(t)>0\mbox{ or } A_{i}-B_{i}N_{i}(t)>0\\
0 & \mbox{ otherwise,}
\end{cases}
\label{Eq:TargetDynamics}
\end{equation}
where $N_{i}(t)=\sum_{a\in\mathcal{A}}\mathbf{1}\{s_{a}(t)=Y_{i}\}$($\mathbf{1}\{\cdot\}$ denotes the indicator function) is the number of agents present at target $i$ at time $t$. According to \eqref{Eq:TargetDynamics}: (i) $R_i(t)$ increases at a rate $A_i$ when no agent is visiting target $i$, (ii) $R_i(t)$ decreases at a rate $A_i-B_iN_i(t)$ where $B_i$ is the uncertainty removal rate by a visiting agent to the target $i$ and (iii) $R_i(t)\geq0, \,\forall t$.

\paragraph{\textbf{Agent Model}}
The location and orientation $(s_a(t),\,\theta_a(t))$ of an agent $a\in\mathcal{A}$ follows the second-order unicycle dynamics given by
\begin{equation}\label{Eq:AgentDynamics}
\begin{aligned} 
\dot{s}_a(t) 
&= v_a(t)
\begin{bmatrix}
\cos(\theta_a(t)) &
\sin(\theta_a(t))
\end{bmatrix}^T,\\
\dot{v}_a(t) &= u_a(t),\\
\dot{\theta}_a(t) &= w_a(t), 
\end{aligned}
\end{equation}
where $v_a(t)$ is the tangential velocity, $u_a(t)$ is the tangential acceleration and $w_a(t)$ is the angular velocity. We consider $u_a(t)$ and $w_a(t)$ as the agent control inputs.

Note that according to \eqref{Eq:TargetDynamics}, the agent has to stay stationary on a target  $i\in\mathcal{T}$ for some positive amount of time to contribute to decreasing a positive target uncertainty $R_i(t)$. Therefore, during such a \emph{dwell-time} period, the agent must enforce $u_a(t)=v_a(t)=0$ with $s_a(t) = Y_i$. 


\paragraph{\textbf{Objective}}
Our aim is to minimize the composite objective $J_T$ of the \emph{total energy spent} $J_e$ (called the \emph{energy objective}) and the \emph{mean system uncertainty} $J_s$ (called the \emph{sensing objective}) over a finite time interval $[0,T]$:
\begin{equation} 
J_{T} 
\triangleq \alpha J_e + J_s 
= \alpha 
\underbrace{\int_0^T \sum_{a\in\mathcal{A}}u_a^2(t)\,dt}_{\triangleq\ J_e} 
+ 
\underbrace{\frac{1}{T}\int_{0}^{T}\sum_{i\in\mathcal{T}}R_{i}(t)\,dt}_{\triangleq\ J_s},
\label{Eq:MainObjective}%
\end{equation}
by controlling agent control inputs $u_a(t),w_a(t),\forall a\in\mathcal{A},t\in[0,T]$. Note that $\alpha$ in \eqref{Eq:MainObjective} is a weight factor that can also be manipulated to constrain the resulting optimal agent controls (details on selecting $\alpha$ to ensure proper normalization of the $J_T$ components are provided in Appendix \ref{App:NormalizationFactor}). Note also that the cost of angular velocity (steering) control is not included in \eqref{Eq:MainObjective}. The trade-off between $J_e$ and $J_s$ components of \eqref{Eq:MainObjective} is clear from the fact that the aggressiveness of agent transitions in-between targets affects negatively the $J_e$ component but positively the $J_s$ component.

\paragraph{\textbf{Graph Topology}}
We embed a directed graph topology $\mathcal{G}=(\mathcal{T},\mathcal{E})$ into the mission space so that the \emph{targets} are represented by the graph \emph{vertices} $\mathcal{T}=\{1,2,\ldots,M\}$ and the inter-target \emph{trajectory segments} are represented by the graph \emph{edges} $\mathcal{E}\subseteq\{(i,j):i,j\in\mathcal{T}\}$ (see also Fig. \ref{Fig:GraphAbstraction}). These trajectory segments may take arbitrary (prespecified) shapes so as to account for constraints in the mission space and agent motion. We use $\rho_{ij}$ to denote the \emph{transit-time} that an agent spends on a trajectory segment $(i,j)\in\mathcal{E}$ to reach target $j$ from target $i$. In contrast to \cite{Zhou2019} and \cite{Welikala2020J4} where these transit-time values were treated as predefined, in this work they are considered as control-dependent.
We also use $\mathcal{P}_{ij}$ to represent the \emph{transit-time interval} ($\mathcal{P}_{ij}\subset[0,T]$ of length $\rho_{ij}$) corresponding to the transit-time $\rho_{ij}$.

The \emph{neighbor set} and the \emph{neighborhood} of a target $i\in\mathcal{T}$ are defined based on the available trajectory segments $\mathcal{E}$ as
\begin{equation}
\mathcal{N}_{i}\triangleq\{j:(i,j)\in\mathcal{E}\}\mbox{ and }\bar{\mathcal{N}}_{i}=\mathcal{N}_{i}\cup\{i\}.\label{Eq:Neighborset}%
\end{equation}


\paragraph{\textbf{Control}}
As stated earlier, when an agent $a\in\mathcal{A}$ dwells on a target $i\in\mathcal{T}$, the agent control $u_a(t)$ is zero. However, over such a dwell-time period, the agent control $w_a(t)$ may or may not be zero (exact details will be provided later).
Next, when the agent is ready to leave the target $i$, it needs to decide the \emph{next-visit} target $j\in\mathcal{N}_{i}$ along with the corresponding control profiles $u_a(t),w_a(t)$ to be used on the trajectory segment $(i,j)\in\mathcal{E}$ over $t\in\mathcal{P}_{ij}$.

In essence, the overall control exerted on an agent can be seen as a sequence of: \emph{dwell-times} $\delta_{i}\in\mathbb{R}_{\geq0}$, \emph{next-visit} targets $j\in\mathcal{N}_{i}$ and \emph{control profile segments} $\{(u_a(\tau),w_a(\tau)): \tau\in\mathcal{P}_{ij}\}$. Our goal is to determine $(\delta_{i}(t_s),j(t_s),\{(u_a(\tau),w_a(\tau)): \tau\in\mathcal{P}_{ij}(t_s)\})$ for any agent $a\in\mathcal{A}$ residing at any target $i\in\mathcal{T}$ at any time $t_s\in [0,T]$, which is optimal in the sense of minimizing \eqref{Eq:MainObjective}. 

Clearly, this PMN problem is more complicated than the well known NP-Hard \emph{traveling salesman problem} (TSP) \cite{Kirk2020} due to its inclusion of: (i) multiple agents, (ii) target dynamics, (iii) agent dynamics, (iv) target dwell-times and (v) repeated target visits. Even though one can still resort to dynamic programming techniques to solve this PMN problem, for all the above reasons, the problem is intractable - even for the most simplistic problem configurations. 

\paragraph{\textbf{Receding Horizon Control}}
As a solution to this PMN problem, inspired by the prior work \cite{Welikala2020J4} (where we dealt with first-order agents without agent energy concerns), this paper proposes an \emph{Event-Driven} \emph{Receding Horizon Controller} (RHC) at each agent. The key idea behind RHC derives from Model Predictive Control (MPC). However, RHC exploits the problem's event-driven nature to significantly reduce the complexity by effectively decreasing the frequency of control updates. As introduced and extended later on in \cite{Li2006} and \cite{Chen2020,Welikala2020J4} respectively, the RHC is invoked by the agents in a \emph{distributed} manner at specific events of interest in their trajectories. Upon invoking it, RHC determines the agent controls that optimize the objective \eqref{Eq:MainObjective} over a \emph{planning horizon} and subsequently executes the determined optimal controls over a shorter \emph{action horizon}.

In particular, when the RHC is invoked at some event-time $t_s\in[0,T]$ by an agent $a\in\mathcal{A}$ while residing at target $i\in\mathcal{T}$, it determines: (i) the remaining dwell-time $\delta_i(t_s)$ at target $i$, (ii) the next-visit target $j(t_s)\in\mathcal{N}_{i}$, (iii) the control profile segments $\{u_a(\tau),w_a(\tau): \tau\in\mathcal{P}_{ij}(t_s)\}$ and (iv) the dwell-time $\delta_j(t_s)$ at target $j(t_s)$. These control decisions are jointly represented by $U_{ia}(t_s)$ and its optimal value is determined by solving an optimization problem of the form:
\begin{equation}
\begin{alignedat}{4}
& U_{ia}^{\ast}(t_s) = & \underset{\makebox[1.5cm]{\footnotesize  $U_{ia}(t_s)\in\mathbb{U}(t_s)$}}{\arg\min} &  && 
  J_{H}(X_{ia}(t_s),U_{ia}(t_s);H)
  +\hat{J}_{H}(X_{ia}(t_s+H)) &
\end{alignedat}
 \label{RHC problem0}%
\end{equation}
where $X_{ia}(t_s)$ is the current local state and $\mathbb{U}(t_s)$ is the feasible control set at time $t_s$ (exact definitions are provided later). The term $J_{H}(X_{ia}(t_s),U_{ia}(t_s);H)$ represents the immediate cost over the planning horizon $[t_s,t_s+H]$ and $\hat{J}_{H}(X_{ia}(t_s+H)$ is an estimate of the future cost based on the state at $t_s+H$. 

In particular, we follow the \emph{variable horizon} concept proposed in \cite{Welikala2020J4} where the planning horizon length is treated as an upper-bounded function of control decisions $\mathsf{w}(U_{ia}(t_s)) \leq H$ rather than an exogenously selected value $H$, and the $\hat{J}_{H}(X_{ia}(t_s+H)$ term is ignored. Hence, this approach incorporates the selection of planning horizon length $\mathsf{w}(U_{ia}(t_s))$ into the optimization problem \eqref{RHC problem0}, which now can be re-stated as 
\begin{equation}\label{RHC problem}
\begin{alignedat}{4}
& U_{ia}^{\ast}(t_s) = & \underset{\makebox[2cm]{\footnotesize $U_{ia}(t_s)\in\mathbb{U}(t_s)$}}{\arg\min} & \ && J_{H}(X_{ia}(t_s),U_{ia}(t_s);\ \mathsf{w}(U_{ia}(t_s))) & \\
&  & \makebox[2cm]{subject to} &  && \mathsf{w}(U_{ia}(t_s))\leq H. &
\end{alignedat}
\end{equation}

\subsection{Preliminary Results}
According to \eqref{Eq:TargetDynamics}, the target state (uncertainty) $R_{i}(t)$ of a target $i\in\mathcal{T}$ is piece-wise linear and its gradient $\dot{R}_{i}(t)$ changes only when one of the following (strictly local) \emph{events} occurs:
(\romannum{1}) An agent arrival at $i$, 
(\romannum{2}) $R_i(t)$ switches from positive to zero, denoted as $[R_{i}(t)\rightarrow 0^{+}]$, or 
(\romannum{3}) An agent departure from $i$. 
Let us denote the sequence of such event times (associated with the target $i$) as $t_{i}^{k}$ where $k \in \mathbb{Z}_{>0}$ with $t_{i}^{0}=0$. Then, it is easy to see from \eqref{Eq:TargetDynamics} that 
\begin{equation}
\dot{R}_{i}(t)=\dot{R}_{i}(t_{i}^{k}),\ \forall t\in\lbrack t_{i}^{k}%
,t_{i}^{k+1}).\label{Eq:TargetDynamics2}%
\end{equation}

\begin{remark}
\label{Rm:NonOverlapping} 
As pointed out in \cite{Yu2016, Welikala2020J4} (and the references therein), allowing multiple agents to simultaneously reside on a target (known also as \textquotedblleft simultaneous target sharing\textquotedblright) is known to lead to solutions with poor performance levels. Thus, we enforce a constraint \cite{Welikala2020J4} on the controller to ensure:
\begin{equation}\label{Eq:NoTargetSharing}
    N_{i}(t)\in\{0,1\},\ \forall t\in [0,T],\ \forall i\in\mathcal{T}.
\end{equation}
Clearly, this constraint only applies if $N\geq2$. 
\end{remark}

Under \eqref{Eq:NoTargetSharing}, it follows from \eqref{Eq:TargetDynamics} and \eqref{Eq:TargetDynamics2} that the sequence $\{\dot{R}_{i}(t_{i}^{k})\}_{k=0,1,\ldots}$ is a \emph{cyclic order} of three elements: $\{-(B_{i}-A_{i}),0,A_{i}\}$. Next, in order to make sure that each agent is capable of enforcing the event $[R_{i}\rightarrow0^{+}]$ at any target $i\in\mathcal{T}$, we assume the following simple stability condition \cite{Welikala2020J4}:

\begin{assumption}
\label{As:TargetUncertaintyRateInequality} 
Target uncertainty rate parameters $A_{i}$ and $B_{i}$ of each target $i\in\mathcal{T}$ satisfy $0<A_{i}<B_{i}$.
\end{assumption}

\paragraph{\textbf{Decomposition of the Sensing Objective $J_s$}} 
The following theorem provides a target-wise and temporal decomposition of the sensing objective $J_s$ defined in \eqref{Eq:MainObjective}.
\begin{theorem}(\cite[Th.1]{Welikala2020J4})
\label{Th:ContributionTarget} 
The contribution to the term $J_{s}$ in \eqref{Eq:MainObjective} by a
target $i\in\mathcal{T}$ during a time period $[t_{0},t_{1})\subseteq\lbrack
t_{i}^{k},t_{i}^{k+1})$ for some $k\in\mathbb{Z}_{\geq0}$ is $\frac{1}{T}J_{i}(t_{0},t_{1})$, where,
\begin{equation}
J_{i}(t_{0},t_{1}) = \int_{t_{0}}^{t_{1}}R_{i}(t)dt = \frac{(t_{1}-t_{0})}{2}
\left[ 2R_{i}(t_{0})+\dot{R}_{i}(t_{0})(t_{1}-t_{0})\right].
\label{Eq:ContributionTarget}
\end{equation}
\end{theorem}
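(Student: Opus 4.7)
The plan is to exploit the piecewise-linear structure of $R_i(t)$ that was already established by equation \eqref{Eq:TargetDynamics2}. The hypothesis $[t_0,t_1)\subseteq[t_i^k,t_i^{k+1})$ is the crucial ingredient: it guarantees that the entire integration window lies strictly between two consecutive events of target $i$, so no rate switching occurs inside $[t_0,t_1)$. By \eqref{Eq:TargetDynamics2}, this means $\dot{R}_i(t)=\dot{R}_i(t_i^k)$ is constant on the whole window, and in particular equals $\dot{R}_i(t_0)$.

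First, I would observe that the definition of $J_s$ in \eqref{Eq:MainObjective} is additive in targets and in time, so the contribution of a single target $i$ over $[t_0,t_1)$ to $J_s$ is exactly $\frac{1}{T}\int_{t_0}^{t_1}R_i(t)\,dt$, which is what is being called $\frac{1}{T}J_i(t_0,t_1)$. This reduces the theorem to evaluating the integral on the right.

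Second, since $\dot{R}_i$ is constant on $[t_0,t_1)$, integrating once gives the affine representation $R_i(t)=R_i(t_0)+\dot{R}_i(t_0)(t-t_0)$ for all $t\in[t_0,t_1)$. Substituting this into $\int_{t_0}^{t_1}R_i(t)\,dt$ yields $R_i(t_0)(t_1-t_0)+\tfrac{1}{2}\dot{R}_i(t_0)(t_1-t_0)^2$, which factors into the claimed form $\tfrac{t_1-t_0}{2}\bigl[2R_i(t_0)+\dot{R}_i(t_0)(t_1-t_0)\bigr]$.

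I do not anticipate any real obstacle here; the result is essentially the trapezoidal-rule identity applied to an affine function, and all the substantive work (namely, that $R_i$ has constant slope between consecutive events under \eqref{Eq:TargetDynamics}) has already been packaged into \eqref{Eq:TargetDynamics2}. The only subtlety worth flagging is the half-open interval convention: one should note that $R_i$ is continuous so the value of the integral is unaffected by whether the endpoint $t_1$ (which may coincide with $t_i^{k+1}$) is included, and hence the formula is insensitive to the boundary behavior at event times.
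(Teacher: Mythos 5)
Your proof is correct and is exactly the intended argument: the paper does not reprint a proof (it cites \cite[Th.~1]{Welikala2020J4}), but the result follows precisely as you describe, from the constancy of $\dot{R}_i$ on $[t_i^k,t_i^{k+1})$ guaranteed by \eqref{Eq:TargetDynamics2} and the integration of the resulting affine $R_i(t)$. Your remark on the half-open interval and continuity of $R_i$ is a sensible, if minor, point of care.
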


\paragraph{\textbf{Local Sensing Objective Function}}

The \emph{local sensing objective function} of a target $i\in\mathcal{T}$ over a period $[t_{0},t_{1})\subseteq [0,T]$ is defined as
\begin{equation}
\bar{J}_{i}(t_{0},t_{1})=\sum_{j\in\bar{\mathcal{N}}_{i}}J_{j}(t_{0},t_{1}),\label{Eq:LocalObjectiveFunction}%
\end{equation}
where each $J_{j}(t_{0},t_{1})$ term is evaluated using Theorem \ref{Th:ContributionTarget}.


\paragraph{\textbf{Decomposition of the Energy Objective $J_e$}} 

A similar decomposition result as Theorem \ref{Th:ContributionTarget} applies to the energy objective $J_e$ defined in \eqref{Eq:MainObjective}. However, this result is immediate from  \eqref{Eq:MainObjective} and is as follows. The contribution to the term $J_e$ in \eqref{Eq:MainObjective} by an agent $a\in\mathcal{A}$ from traversing a trajectory segment $(i,j) \in \mathcal{E}$ over the transit-time interval $[t_o,t_f]\triangleq\mathcal{P}_{ij}$\ \,is\  \,$J_a(t_o,t_f)$, where,
\begin{equation}
    J_a(t_o,t_f) = \int_{t_o}^{t_f}u_a^2(t)\,dt.
    \label{Eq:ContributionAgent}
\end{equation}
Note that the agent does not have any contribution to the $J_e$ term during dwell-time intervals as $u_a(t)=0$ during such periods.




\paragraph{\textbf{Agent Angular Velocity Profile} $w_a(t)$}
The control profile segment $\{w_a(t):t\in\mathcal{P}_{ij}\}$ that needs to be used by an agent $a\in\mathcal{A}$ over the transit-time interval $\mathcal{P}_{ij}$ on the trajectory segment $(i,j)\in\mathcal{E}$ can be obtained using only the following information: (i) the agent tangential acceleration profile $\{u_a(t):t\in \mathcal{P}_{ij}\}$ and (ii) the shape of the trajectory segment $(i,j)$ given in a parametric form $\{(x(p),y(p)):p\in[p_o,p_f]\}$. Note that the parameter values $p = p_o$ and $p = p_f$ correspond to the terminal target locations $Y_i \equiv (x(p_o),y(p_o))$ and $Y_j \equiv (x(p_f),y(p_f))$, respectively. For notational convenience, let us denote $x_p^\prime = \frac{dx(p)}{dp}$,\ $y_p^\prime = \frac{dy(p)}{dp}$,\ $x_p^{\prime\prime} = \frac{d^2x(p)}{dp^2}$ and $y_p^{\prime\prime} = \frac{d^2y(p)}{dp^2}$. 

First, we require a minor technical assumption regarding the said trajectory segment shape parameterization. 
\begin{assumption}\label{As:AngularVelocity}
There exists an injective (i.e., one-to-one) function $f:[p_o,p_f]\rightarrow[0,y_{ij}]$ such that   
\begin{equation}\label{Eq:ParametrizationFunction1}
    f(p) \triangleq \int_{p_o}^{p}\sqrt{(x_p^\prime)^2 + (y_p^\prime)^2}\,dp, 
\end{equation}
with $f(p_f) = y_{ij}$ and a corresponding inverse function $f^{-1}$.
\end{assumption}

This assumption simply means that we should be able to express the distance, say $l$, along the trajectory segment starting from $(x(p_o),y(p_o))$ to $(x(p),y(p))$ where $p\in[p_o,p_f]$, explicitly in terms of the parameter $p$ (i.e., $l=f(p)$) and vice versa (i.e., $p=f^{-1}(l)$). Clearly this assumption holds if the distance $l$ is used directly as the parameter $p$ (i.e., $p=l$) that characterizes the trajectory segment shape. 

Second, let us define a function $F:[p_o,p_f]\rightarrow\R$ such that
\begin{equation}\label{Eq:ParametrizationFunction2}
    F(p) \triangleq \frac{x'_p y''_p-y'_p x''_p}{\left((x'_p)^2 + (y'_p)^2\right)^{\frac{3}{2}}}.
\end{equation}

Finally, as shown in Fig. \ref{Fig:AgentTrajectory2}, let us denote by $l_a(t),\,t\in\mathcal{P}_{ij}$ the total distance the agent has traveled on the trajectory segment $(i,j)$ by time $t$. According to \eqref{Eq:AgentDynamics}, $v_a(t),\,t\in\mathcal{P}_{ij}$ represents the agent tangential velocity on the trajectory segment at time $t$. Considering the agent dynamics along the tangential direction to the trajectory segment, note that we can write 
\begin{equation}\label{Eq:AgentDynamicsInTangentialDirection0}
    l_a(t) = \int_{t_o}^{t} v_a(t)dt\ \mbox{ and }\ 
    v_a(t) = \int_{t_o}^{t} u_a(t)dt,
\end{equation}
for all $t\in[t_o,t_f]\triangleq \mathcal{P}_{ij}$ (note also that the terminal conditions $l_a(t_f) = y_{ij}$ and $v_a(t_f) = 0$ should be satisfied by \eqref{Eq:AgentDynamicsInTangentialDirection0}). 

\begin{figure}[!h]
\centering
\includegraphics[width=2.7in]{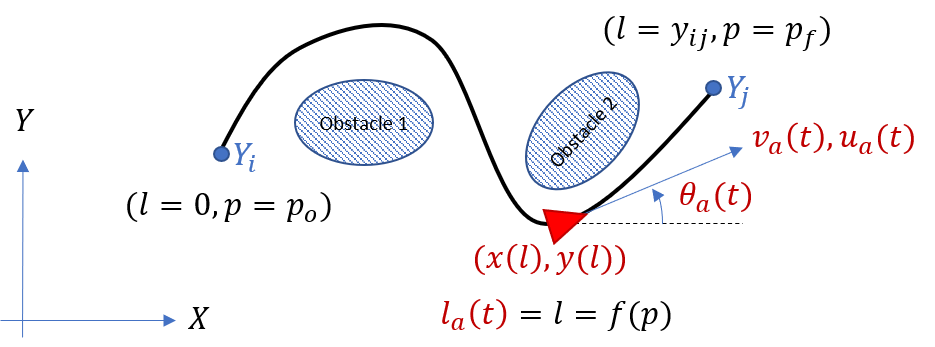} \caption{Angular velocity control of an agent $a\in\mathcal{A}$ while traversing a trajectory segment $(i,j)\in\mathcal{E}$.}%
\label{Fig:AgentTrajectory2}%
\end{figure}

\begin{theorem}\label{Th:AngularVelocity}
The required agent angular velocity profile $\{w_a(t): t\in\mathcal{P}_{ij}\}$ on trajectory segment $(i,j)\in\mathcal{E}$ is 
\begin{equation}\label{Eq:AngularVelocity}
    w_a(t) = F(f^{-1}(l_a(t)))\,v_a(t),
\end{equation}
where $f(\cdot)$ and $F(\cdot)$ are as in \eqref{Eq:ParametrizationFunction1} and \eqref{Eq:ParametrizationFunction2}, respectively.
\end{theorem}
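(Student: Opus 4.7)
The plan is to recognize that the quantity $F(p)$ in \eqref{Eq:ParametrizationFunction2} is precisely the signed curvature of the parametrically defined trajectory segment $(x(p),y(p))$, and that $f(p)$ in \eqref{Eq:ParametrizationFunction1} is its arc length measured from the starting point. The theorem then becomes the familiar relation between angular velocity, curvature, and tangential speed, specialized to a unicycle whose orientation is constrained to be tangent to the path.

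First, I would observe that since the agent must stay on the trajectory segment, its location satisfies $s_a(t)=(x(p(t)),y(p(t)))$ for some function $p(t)$ with $p(t_o)=p_o$ and $p(t_f)=p_f$. Differentiating and matching with \eqref{Eq:AgentDynamics} forces $\dot p(t)\,(x'_p,y'_p)=v_a(t)(\cos\theta_a(t),\sin\theta_a(t))$, so the orientation satisfies
\begin{equation*}
\cos\theta_a(t) = \frac{x'_p}{\sqrt{(x'_p)^2+(y'_p)^2}},\qquad \sin\theta_a(t) = \frac{y'_p}{\sqrt{(x'_p)^2+(y'_p)^2}},
\end{equation*}
and comparing norms gives $\dot p(t)\sqrt{(x'_p)^2+(y'_p)^2}=v_a(t)$. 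Next, I would connect $p(t)$ to $l_a(t)$: from \eqref{Eq:ParametrizationFunction1}, $f'(p)=\sqrt{(x'_p)^2+(y'_p)^2}$, so $\frac{d}{dt}f(p(t))=\dot p(t)\sqrt{(x'_p)^2+(y'_p)^2}=v_a(t)$. Since $l_a(t)=\int_{t_o}^t v_a(\tau)d\tau$ from \eqref{Eq:AgentDynamicsInTangentialDirection0}, with matching initial condition, $l_a(t)=f(p(t))$, and Assumption \ref{As:AngularVelocity} lets us invert this as $p(t)=f^{-1}(l_a(t))$.

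For the main calculation, since $\theta_a(t)=\arctan\!\bigl(y'_p/x'_p\bigr)$ along the curve (choosing the branch consistent with the tangent direction), the chain rule gives
\begin{equation*}
w_a(t)=\dot\theta_a(t)=\frac{d\theta_a}{dp}\,\dot p(t) = \frac{x'_p y''_p - y'_p x''_p}{(x'_p)^2+(y'_p)^2}\cdot\frac{v_a(t)}{\sqrt{(x'_p)^2+(y'_p)^2}} = F(p)\,v_a(t),
\end{equation*}
where the first factor comes from differentiating $\arctan(y'_p/x'_p)$ with respect to $p$ and the second from $\dot p = v_a/f'(p)$. Substituting $p=f^{-1}(l_a(t))$ yields \eqref{Eq:AngularVelocity}.

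I do not anticipate a serious obstacle; the computation is elementary once one identifies the roles of $f$ and $F$. The only subtle point is justifying the $\arctan$ formula for $\theta_a$ at parameter values where $x'_p=0$, which can be handled by locally switching charts (using $\mathrm{arccot}(x'_p/y'_p)$) without changing the derivative, or simply by differentiating the identities $\cos\theta_a$ and $\sin\theta_a$ above and verifying both yield the same $w_a$. A minor bookkeeping item is to confirm that the terminal conditions $l_a(t_f)=y_{ij}$ and $v_a(t_f)=0$ from \eqref{Eq:AgentDynamicsInTangentialDirection0}, together with $f(p_f)=y_{ij}$, are consistent with $p(t_f)=p_f$, which follows directly from the injectivity of $f$.
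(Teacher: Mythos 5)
Your proposal is correct and follows essentially the same route as the paper's proof in Appendix B: both differentiate the tangent-angle relation $\theta_a=\arctan(y'_p/x'_p)$ in time via the chain rule and identify the resulting factor as the curvature $F(p)$, with $p=f^{-1}(l_a(t))$ recovered from the arc-length map. The only cosmetic difference is that the paper first rewrites everything in the arc-length parameterization $l$ and then converts the derivatives back to $p$, whereas you differentiate directly in $p$ and use $\dot p = v_a/f'(p)$; your version is slightly more direct and also makes explicit the tangency of the orientation and the identification $l_a(t)=f(p(t))$, which the paper leaves implicit.
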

\begin{proof}
Provided in Appendix \ref{App:AngularVelocityProof}.
\end{proof}

For an example, if the trajectory segment $(i,j)$ (between target locations $Y_i$ and $Y_j$) takes a circular shape centered at $C_{ij}\in\R^2$ with a radius $r_{ij}$, it can be represented by the parametric form: $\{(x(p),y(p)):p\in[p_o,p_f]\}$ where $(x(p),y(p)) \equiv C_{ij}+r_{ij} [\cos(p),\, \sin(p)]^T$ and $p_o = \arctan(Y_i-C_{ij})$, $p_f = \arctan(Y_j-C_{ij})$. Using \eqref{Eq:ParametrizationFunction2} and \eqref{Eq:AngularVelocity}, it can be shown that $F(p) = 1/r_{ij}$ and $w_a(t) = v_a(t)/r_{ij}$, respectively.

Similarly, if the trajectory segment $(i,j)$ takes a linear shape, it can be shown that 
$f(p)=p$ and $F(p) = 0$ from \eqref{Eq:ParametrizationFunction1} and  \eqref{Eq:ParametrizationFunction2}, respectively. Therefore, \eqref{Eq:AngularVelocity} reveals that $w_a(t) = 0$.


\begin{remark}
In robotics applications where line-following techniques can be used \cite{Pakdaman2009}, an agent can use its line-following capabilities to control its angular velocity $w_a(t)$ (instead of using \eqref{Eq:AngularVelocity}) - irrespective of its tangential acceleration $u_a(t)$.
\end{remark}

In conclusion, Theorem \ref{Th:AngularVelocity} allows us to dispense of $w_a(t)$ as a control input because it is always determined through $u_a(t)$ (which gives $v_a(t)$ via \eqref{Eq:AgentDynamicsInTangentialDirection0}) and the prespecified shape of the trajectory segment.

\paragraph{\textbf{The Equivalent Dynamic Agent Model}}

Since we now have discussed how an agent $a\in\mathcal{A}$ can control its angular velocity $w_a(t)$ (i.e., via \eqref{Eq:AngularVelocity}), we can omit angular dynamics from \eqref{Eq:AgentDynamics} to construct an equivalent dynamic agent model, that focuses only on the tangential dynamics on a trajectory segment $(i,j)\in\mathcal{E}$. In particular, as a direct consequence of \eqref{Eq:AgentDynamicsInTangentialDirection0}, upon taking the state vector as $[l_a(t),\ v_a(t)]^T$ for some $t\in\mathcal{P}_{ij}$, we can express the corresponding state dynamics as a second-order single-input linear system:
\begin{equation}\label{Eq:AgentDynamics2}
    \begin{bmatrix}
    \dot{l}_a(t)\\
    \dot{v}_a(t)
    \end{bmatrix}
    =
    \begin{bmatrix}
    0 & 1\\
    0 & 0
    \end{bmatrix}
    \begin{bmatrix}
    l_a(t)\\
    v_a(t)
    \end{bmatrix}
    +
    \begin{bmatrix}
    0\\
    1
    \end{bmatrix}
    u_a(t).
\end{equation}
In the sequel, we use \eqref{Eq:AgentDynamics2} and \eqref{Eq:AngularVelocity} to determine the optimal agent control profile segments $\{u_a(t):t\in\mathcal{P}_{ij}\}$ and $\{w_a(t):t\in\mathcal{P}_{ij}\}$, respectively. This particular decomposition of unicycle agent dynamics is fundamentally similar to that proposed in \cite{Kim2020}.

\subsection{ED-RHC Problem (RHCP) Formulation}
\label{SubSec:RHCIntro}

Consider an agent $a\in\mathcal{A}$ residing on a target $i\in\mathcal{T}$ at some time $t_s\in[0,T]$. Recall that control $U_{ia}(t_s)$ in \eqref{RHC problem} includes \emph{dwell-time} decisions $\delta_{i}$ and $\delta_j$ at the current target $i$ and the next-visit target $j\in\mathcal{N}_i$,  respectively. As shown in Fig. \ref{Fig:Timeline}, a dwell-time decision $\delta_{i}$ (or $\delta_{j}$) can be divided into two interdependent decisions: 
(\romannum{1}) the \emph{active time} $\tau_{i}$ (or $\tau_{j}$) and
(\romannum{2}) the \emph{inactive }(or\emph{ idle}) \emph{time} $\bar{\tau}_i$ (or $\bar{\tau}_j$). 
Therefore, the agent has to optimally choose \emph{decision variables} which form the control vector $U_{ia}(t_s)=[\tau_{i},\bar{\tau}_i,j,\{u_a(t)\}, \tau_{j},\bar{\tau}_j]$. Note that here we have: 
(\romannum{1}) omitted representing each of these decision variable's dependence on $t_s$, 
(\romannum{2}) used the notation $\{u_a(t)\}$ to represent $\{u_a(t):t\in\mathcal{P}_{ij}(t_s)\}$ and (\romannum{3}) omitted $\{w_a(t):t\in\mathcal{P}_{ij}(t_s)\}$ as it can be found directly from $\{u_a(t)\}$ and \eqref{Eq:AngularVelocity}.

\paragraph{\textbf{The Receding Horizon Control Problem (RHCP)}}
Let us denote the real-valued component of the control vector $U_{ia}(t_s)$ in \eqref{RHC problem} as $U_{iaj}(t_s)=[\tau_{i},\bar{\tau}_i,\{u_a(t)\},\tau_{j},\bar{\tau}_j]$. The discrete component of $U_{ia}(t_s)$ is simply the next-visit target $j\in\mathcal{N}_{i}$. 
In this setting (see also Fig. \ref{Fig:Timeline}), we define the planning horizon length $\mathsf{w}(U_{ia}(t_s))$ in \eqref{RHC problem} as 
\begin{equation}\label{Eq:VariableHorizon}
    \mathsf{w}(U_{iaj}(t_s)) \triangleq \tau_i + \bar{\tau}_i + \rho_{ij} + \tau_j + \bar{\tau}_j. 
\end{equation}
The current local state $X_{ia}(t_s)$ in \eqref{RHC problem} is considered as $X_{ia}(t_s)=[s_a,v_a,\theta_a,\{R_{j}:j\in\bar{\mathcal{N}}_{i}\}]$ (again, omitting the dependence on $t_s$). Then, the optimal controls are obtained by solving \eqref{RHC problem}, which can be re-stated as the following set of optimization problems, henceforth called the RHC Problem (RHCP):
\begin{align}
&\begin{alignedat}{4}\label{Eq:RHCGenSolStep1}
& U_{iaj}^{\ast} = & \underset{\makebox[2cm]{\footnotesize $U_{iaj}\in\mathbb{U}$}}{\arg\min} & \quad && J_{H}(X_{ia}(t_s),U_{iaj};\ \mathsf{w}(U_{iaj}));\ \forall j\in\mathcal{N}_{i},\ & \\
& & \makebox[2cm]{subject to} &  && \mathsf{w}(U_{iaj}) \leq H & \\
\end{alignedat}\\
&\begin{alignedat}{4}\label{Eq:RHCGenSolStep2}
&\ \ j^{\ast} = & \underset{\makebox[2cm]{\footnotesize $j\in\mathcal{N}_{i}$}}{\arg\min} & \quad && J_{H}(X_{ia}(t_s),U_{iaj}^{\ast}; \ \mathsf{w}(U_{iaj}^*)). &
\end{alignedat}
\end{align}
Note that \eqref{Eq:RHCGenSolStep1} requires solving $\vert \mathcal{N}_i \vert$ optimization problems, one for each neighboring target $j\in\mathcal{N}_i$ ($\vert \cdot\vert$ is the cardinality operator). The next step \eqref{Eq:RHCGenSolStep2} is a simple comparison to determine the optimal next-visit target $j^*$. Therefore, the final optimal controls of the RHCP are
$U_{ia}^*(t_s)=[U_{iaj^{\ast}}^{\ast},j^*]$.

The objective function $J_{H}(\cdot)$ in \eqref{Eq:RHCGenSolStep1} is chosen to reflect the contribution to the main objective $J_{T}$ in \eqref{Eq:MainObjective} by the targets in the neighborhood $\bar{\mathcal{N}}_{i}$ and by the agent $a$, over the planning horizon $[t_s,\,t_s+\mathsf{w}]$ as
\begin{equation}
J_{H}(X_{ia}(t_s),U_{iaj};\ \mathsf{w}) 
\triangleq \alpha_H \underbrace{J_a(t_o,t_f)}_{\triangleq\, J_{eH}} + \underbrace{\frac{1}{\mathsf{w}}\bar{J}_{i}(t_s,t_s+\mathsf{w})}_{\triangleq\, J_{sH}}.
\label{J_H}
\end{equation}
where $w = \mathsf{w}(U_{iaj})$ and $\alpha_H \triangleq \alpha$ (the weight factor used in \eqref{Eq:MainObjective}). In \eqref{J_H}, the form of the $J_{sH}$ component has been selected so that it is analogous to the $J_s$ component in \eqref{Eq:MainObjective} (with $T$ replaced by $w$). As illustrated in Fig. \ref{Fig:Timeline}, note also that $t_e \triangleq t_s + \mathsf{w}$,\ \ $[t_o,t_f]\triangleq \mathcal{P}_{ij}\subseteq[t_s,t_e]$ and $\rho_{ij} \triangleq t_f-t_o$.

\begin{figure}[!h]
\centering
\includegraphics[width=3.2in]{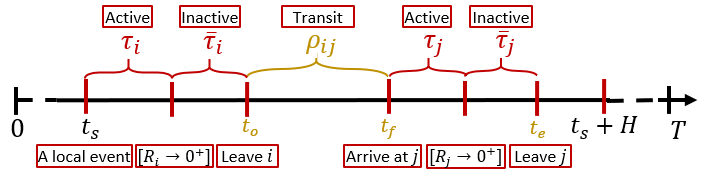} \caption{Event timeline and control decisions in ED-RHC.}%
\label{Fig:Timeline}%
\end{figure}

\paragraph{\textbf{Planning Horizon}}
In conventional RHC methods, the RHCP objective function is evaluated over a fixed planning horizon length $H$, where $H$ is selected exogenously. This makes the RHCP solution dependent on the choice of $H$. In contrast, through \eqref{J_H} and \eqref{Eq:VariableHorizon} above, we have made the RHCP solution (i.e., \eqref{Eq:RHCGenSolStep1} and \eqref{Eq:RHCGenSolStep2}) free of the parameter $H$, by using $H$ only as an upper-bound to the actual planning horizon length $\mathsf{w}(U_{iaj})$ in \eqref{Eq:VariableHorizon} and selecting $H$ to be sufficiently large (e.g., $H=T-t_s$).

In fact, since the planning horizon length $\mathsf{w}(U_{iaj})$ is a control variable, 
the above RHCP formulation simultaneously determines the \emph{optimal planning horizon} length $w^{\ast} = \mathsf{w}(U_{iaj^*}^*)$.  
Moreover, as shown in Fig. \ref{Fig:Timeline}, the time to depart from the current target $i$ (i.e., $t_o$), the time to arrive at the destination target $j$ (i.e., $t_f$) and the corresponding transit-time $\rho_{ij}=t_f-t_o$, are also control dependent. Hence, this RHCP formulation also determines the optimal values of each of these quantities: $t_o^*,\,t_f^*$ and $\rho_{ij^*}^*$, respectively.


\paragraph{\textbf{Overview of the RHCP Solution Process}}\label{Par:OverallSolution}
Looking back at \eqref{Eq:ContributionTarget} and \eqref{Eq:LocalObjectiveFunction}, notice that the sensing component $J_{sH}$ of the RHCP objective \eqref{J_H} does not explicitly depend on the agent control profile segment $\{u_a(t):t\in\mathcal{P}_{ij}\}$, but, it depends on the agent's transit-time $\rho_{ij}$ value and on the other control decisions in $U_{iaj}$: $\tau_i,\bar{\tau}_i,\tau_j,\bar{\tau}_j$. Therefore, let us denote $J_{sH}$ as a function parameterized by $\rho_{ij}$:  $J_{sH}(\tau_i,\bar{\tau}_i,\tau_j,\bar{\tau}_j;\,\rho_{ij})$.  

In contrast, based on \eqref{Eq:ContributionAgent}, notice that the energy component $J_{eH}$ of the RHCP objective $\eqref{J_H}$ only depends on agent control profile segments, specifically on $\{u_a(t):t\in\mathcal{P}_{ij}\}$. Therefore, let us denote $J_{eH}$ simply as $J_{eH}(\{u_a(t)\})$.   

As illustrated in Fig. \ref{Fig:OverviewOfTheRHCPSolution}, we exploit this property of the RHCP objective components ($J_{sH}$ and $J_{eH}$) to solve the RHCP \eqref{Eq:RHCGenSolStep1}. 
In particular, we start with analytically solving the optimization problem which we label as the RHCP($\rho_{ij}$):
\begin{equation}\label{Eq:RHCP_rho_ij}
\begin{alignedat}{4}
& J_{sH}^*(\rho_{ij}) \triangleq & \underset{\makebox[2.7cm]{\footnotesize $(\tau_i,\bar{\tau}_i,\tau_j,\bar{\tau}_j)\in\mathbb{U}_{s}(\rho_{ij})$}}{\min} & \quad && J_{sH}(\tau_i,\bar{\tau}_i,\tau_j,\bar{\tau}_j;\,\rho_{ij}). &
\end{alignedat}
\end{equation}
For this purpose, we exploit a few results established in \cite{Welikala2020J4} where the RHCP($\rho_{ij}$) \eqref{Eq:RHCP_rho_ij} has already been solved while treating $\rho_{ij}$ as a known constant.


Next, we use the $J_{sH}^*(\rho_{ij})$ function obtained from \eqref{Eq:RHCP_rho_ij} and the relationship $\rho_{ij}=t_f-t_o$ to reformulate the problem of optimizing the RHCP objective \eqref{J_H} as an optimal control problem (OCP):
\begin{equation}\label{Eq:OCP}
\begin{alignedat}{4}
& [t_o^*,t_f^*,\{u_a^*(t)\}] = & \underset{\makebox[1.7cm]{\footnotesize $t_o,\,t_f,\,\{u_a(t)\}$}}{\arg\min} & \ && \alpha_H J_{e}(\{u_a(t)\}) + J_{sH}^*(t_f-t_o). &
\end{alignedat}
\end{equation}

Finally, as shown in Fig. \ref{Fig:OverviewOfTheRHCPSolution}, it is straightforward how the RHCP \eqref{Eq:RHCGenSolStep1} solution $U_{iaj}^*$ can be constructed from the obtained solutions of the OCP \eqref{Eq:OCP} and the RHCP($\rho_{ij}$) \eqref{Eq:RHCP_rho_ij}.

\begin{figure}[!h]
    \centering
    \includegraphics[width=3.2in]{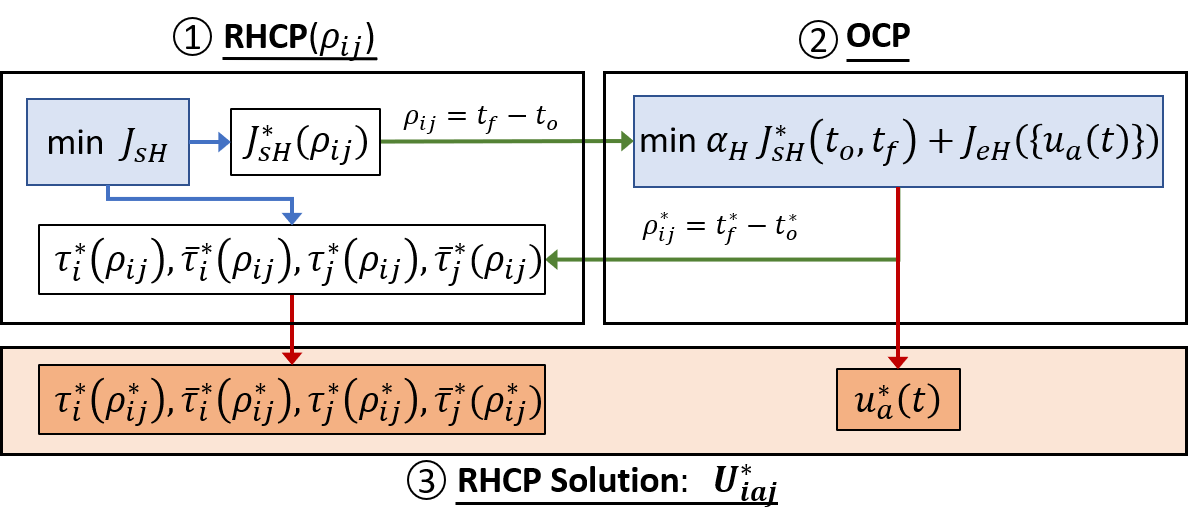}
    \caption{Overview of the RHCP Solution Process when solving \eqref{Eq:RHCGenSolStep1} for some next-visit target $j\in\mathcal{N}_i$. 1. Solving the receding horizon control component  of \eqref{Eq:RHCGenSolStep1} (i.e., \eqref{Eq:RHCP_rho_ij});  
    2. Solving the optimal control component of \eqref{Eq:RHCGenSolStep1} (i.e., \eqref{Eq:OCP});
    3. Constructing the final solution of \eqref{Eq:RHCGenSolStep1}.}
    \label{Fig:OverviewOfTheRHCPSolution}
\end{figure}

\paragraph{\textbf{Event-Driven Action Horizon}}
Each RHCP solution (i.e., $U_{ia}^*(t_s) = [U_{iaj^*}^*, j^*]$ from \eqref{Eq:RHCGenSolStep1}-\eqref{Eq:RHCGenSolStep2}) obtained over a planning horizon $\mathsf{w}(U_{iaj^*}^*)\leq H$ is generally executed over a shorter \emph{action horizon} $h\leq \mathsf{w}(U_{iaj^*}^*)$. In particular, the action horizon $h$ is determined by the first event that takes place after $t_s$, where the RHCP was last solved. Such a subsequent event may be \emph{controllable} if it results from executing the last solved RHCP solution or \emph{uncontrollable} if it results from a random or an external event (if such events are allowed). 

When executing the RHCP solution obtained by an agent at target $i$ at time $t_s$, there are three mutually exclusive controllable events that may occur subsequently. They are:

\paragraph*{\textbf{1. Event }$[h\rightarrow \tau_{i}^{\ast}]$} This event is feasible only if $\tau_{i}^{\ast}(t_s)>0$ and it occurs at a time $t = t_s+\tau_{i}^{\ast}(t_s)$. If $R_i(t)>0$, it coincides with a departure event from target $i$. Otherwise, i.e., if $R_i(t)=0$, it coincides with a $[R_i\rightarrow 0^+]$  event. 

\paragraph*{\textbf{2. Event }$[h\rightarrow \bar{\tau}_i^{\ast}]$} This event is feasible if $\tau_{i}^{\ast}(t_s)=0$ (when $R_i(t_s)=0$) and $\bar{\tau}_i^{\ast}(t_s)\geq0$. It occurs at $t=t_s+\bar{\tau}_i^{\ast}(t_s)$ and coincides with a departure event from target $i$.

\paragraph*{\textbf{3. Event }$[h\rightarrow\rho_{ij^{\ast}}]$} This event is feasible only if a departure event (from target $i$) occurred at $t_s$. Clearly this event coincides with an arrival event at target $j^*(t_s)$.  

In an agent trajectory, at a given time instant, only one of these three controllable events is feasible. However, there are two uncontrollable events that may occur at an agent residing in a target $i$ due to two specific controllable events at a neighboring target $j\in\mathcal{N}_i$. These two types of events are aimed to enforce the ``no simultaneous target sharing'' condition (i.e., the control constraint \eqref{Eq:NoTargetSharing}) and thus, only applies to multi-agent problems. 
To enforce this condition, an agent at target $i$ modifies its neighborhood $\mathcal{N}_i$ to $\mathcal{N}_i\backslash\{j\}$ when:   
(i) another agent already resides at target $j$ or (ii) another agent is en-route to visit target $j$. Therefore, we define the following two \emph{neighbor induced events} at target $i$ due to a neighbor $j\in\mathcal{N}_i$:

\paragraph*{\textbf{4. Covering Event} $C_{j},\,j\in\mathcal{N}_i$} This event causes $\mathcal{N}_i$ to be modified to $\mathcal{N}_i\backslash\{j\}$. 
\paragraph*{\textbf{5. Uncovering Event} $\bar{C}_{j},\,j\in\mathcal{N}_i$} This event causes $\mathcal{N}_i$ to be modified to $\mathcal{N}_i\cup\{j\}$.

If one of these two events occurs while the agent is awaiting an event $[h\rightarrow\tau_i^*]$ or $[h\rightarrow\bar{\tau}_i^*]$, the RHCP is resolved to account for the updated neighborhood $\mathcal{N}_i$. 

\paragraph{\textbf{Three Forms of RHCPs}}
The exact form of the RHCP (\eqref{Eq:RHCGenSolStep1} and \eqref{Eq:RHCGenSolStep2}) that needs to be solved at a certain time depends on the event that triggered the end of the previous action horizon. In particular, corresponding to the three controllable event types, there are three forms of RHCPs:

\textbf{RHCP1:} 
At a target $i$ and time $t_s$, this particular problem form is solved upon: (i) an arrival event $[h\rightarrow\rho_{ki}]$ where $k\in\mathcal{N}_i$ or (ii) a $C_j$ (or a $\bar{C}_j$) event occurred when $R_i(t_s)>0$ where $j\in\mathcal{N}_i$. Since $R_i(t_s)>0$ whenever this problem form is solved, it is equivalent to the generic form of the RHCP that needs to be solved for the complete set of decision variables: $U_{ia}(t_s) = [\tau_i,\bar{\tau}_j,j,\{u_a(t)\},\tau_j,\bar{\tau}_j]$ with $\tau_i \geq 0$.


\textbf{RHCP2:} 
At a target $i$ and time $t_s$, this particular problem form is solved when $R_i(t_s)=0$ upon: (i) an event $[h\rightarrow\tau_i^*]$ or (ii) a $C_j$ (or a $\bar{C}_j$) event where $j\in\mathcal{N}_i$. Since $R_i(t_s)=0$ whenever this problem form is solved, it is the same as \textbf{RHCP1} but with $\tau_{i} = 0$, hence simpler.


\textbf{RHCP3:} 
At a target $i$ and time $t_s$, this particular problem form is solved upon: (i) an event $[h\rightarrow\tau_i^*]$ with $R_i(t_s)>0$ or (ii) an event $[h\rightarrow\bar{\tau}_i^*]$. Simply, this problem form is solved whenever the agent is ready to depart from the target. Therefore, it is the same as \textbf{RHCP1} but with $\tau_{i} = 0$ and $\bar{\tau}_i = 0$.



\section{Solving Event-Driven RHCPs}
\label{Sec:SolutionToTheRHCPs}

In this section, we present the solutions to the three RHCP forms identified above. We begin with \textbf{RHCP3}.

\subsection{Solution of \textbf{RHCP3}}
\textbf{RHCP3} is the simplest RHCP given that $\tau_i = \bar{\tau}_i = 0$ in $U_{ia}$ by default. Therefore, $U_{iaj}$ (i.e., the real-valued component of $U_{ia}$\,, used in \eqref{Eq:RHCGenSolStep1}) is limited to $U_{iaj}=[\{u_a(t)\},\tau_{j},\bar{\tau}_j]$ and the planning horizon $\mathsf{w}(U_{iaj})$ defined in \eqref{Eq:VariableHorizon} becomes $\mathsf{w}(U_{iaj})=\rho_{ij}+\tau_{j}+\bar{\tau}_j$. 

Under these conditions, we next solve \eqref{Eq:RHCGenSolStep1} (via solving RHCP($\rho_{ij}$) \eqref{Eq:RHCP_rho_ij} and OCP \eqref{Eq:OCP}, as shown in Fig. \ref{Fig:OverviewOfTheRHCPSolution}) and \eqref{Eq:RHCGenSolStep2} to obtain the \textbf{RHCP3} solution.


\paragraph{\textbf{Solution of RHCP($\rho_{ij}$) \eqref{Eq:RHCP_rho_ij}}}
As mentioned before, RHCP($\rho_{ij}$) has already been solved in \cite{Welikala2020J4} - while treating $\rho_{ij}$ as a known fixed value. In particular, the RHCP($\rho_{ij}$) solution corresponding to the \textbf{RHCP3} takes the form \cite[Th.~2]{Welikala2020J4}: 
\begin{equation}\label{Eq:OP3FormalArgMinSolution}
\begin{aligned}
(\tau_i^*, \bar{\tau}_i^*) =&\ (0,0),\\
(\tau_{j}^{*},\bar{\tau}_j^{*}) =&\  
\begin{cases}
(0,0) & \mbox{ if } \bar{A} \geq B_{j} \mbox{ or } D_1 > D_2\\
(D_2,0) & \mbox{ else if } D_2 < D_3\\
(D_3,0) & \mbox{ else if } B_{j} > \bar{A} \geq B_{j}\left[
1-\frac{\rho_{ij}^{2}}{(\rho_{ij}+D_3)^{2}}\right] \\
(D_3,\bar{D}_1) & \mbox{ else if } \bar{D}_1 \leq \bar{D}_2\\
(D_3,\bar{D}_2) & \mbox{ otherwise, }
\end{cases}
\\
J_{sH}^*(\rho_{ij}) =&\ J_{sH}(\tau_{j}^*,\bar{\tau}_j^*;\,\rho_{ij}),
\end{aligned}
\end{equation}
where 
\begin{equation}\label{Eq:OP3FormalArgMinSolutionCoeffs}
\begin{gathered}
\bar{A} = \sum_{m\in\bar{\mathcal{N}}_{i}}A_{m},\ 
D_1 = \frac{\bar{A}\rho_{ij}}{B_{i}-\bar{A}},\ 
D_2 = \min\{D_3,\ H-\rho_{ij}\},\\  
D_3 = \frac{R_{j}(t_o)}{B_{j}-A_{j}}+\frac{A_{j}}{B_{j}-A_{j}}\rho_{ij},\\
\bar{D}_1 = \sqrt{\frac{(B_{j}-A_{j})(\rho_{ij}+D_3)^{2}-B_{j}\rho_{ij}^{2}}{\bar{A}_{j}}}-(\rho_{ij}+D_3), \\
\bar{A}_{j} = \bar{A}-A_{j},\ 
\bar{D}_2\triangleq H-(\rho_{ij}+D_3),\\
J_{sH}(\tau_j,\bar{\tau}_j;\,\rho_{ij})=\frac{C_{1}\tau_{j}^{2}+C_{2}\bar{\tau}_j^{2}+C_{3}\tau_{j}\bar{\tau}_j%
+C_{4}\tau_{j}+C_{5}\bar{\tau}_j+C_{6}}{\rho_{ij}+\tau_{j}+\bar{\tau}_j},\\
C_{1}=\frac{1}{2}[\bar{A}-B_{j}],\  C_{2}=\frac{\bar{A}_{j}}{2},\ 
C_{3}=\bar{A}_{j},\  C_{4}=[\bar{R}(t_o)+\bar{A}\rho_{ij}],\\
\bar{R} = \sum_{m\in\bar{\mathcal{N}}_{i}}R_{m},\ 
C_{5}=[\bar{R}_{j}(t_o)+\bar{A}_{j}\rho_{ij}],\ 
\bar{R}_{j} \triangleq \bar{R}-R_{j},\\
C_{6}=\frac{\rho_{ij}}{2}[2\bar{R}(t_o)+\bar{A}\rho_{ij}].
\end{gathered}
\end{equation}


Note that in \eqref{Eq:OP3FormalArgMinSolution}, not only $J_{sH}^*$, but also $\tau_j^*$ and $\bar{\tau}_j^*$ are functions of the transit-time $\rho_{ij}$. To provide intuition about the $J_{sH}^*(\rho_{ij})$ function form, let us consider the first case in \eqref{Eq:OP3FormalArgMinSolution} where $(\tau_j^*,\bar{\tau}_j^*)=(0,0)$ that results in   
\begin{equation}
    J_{sH}^*(\rho_{ij}) = J_{sH}(0,0;\,\rho_{ij}) =  \bar{R}(t_o)+\frac{1}{2}\bar{A}\rho_{ij},
\end{equation}
under the condition $\bar{A} \geq B_{j}$ or $D_1 > D_2$. Using \eqref{Eq:OP3FormalArgMinSolutionCoeffs}, it can be shown that 
$$ D_1 > D_2 \ \iff \ \rho_{ij} > \min 
\{
\frac{R_j(t_o)(B_i-\bar{A})}{\bar{A}B_j-A_jB_i},\ 
H(1-\frac{\bar{A}}{B_i})
\}.
$$
From this example, it is clear that the function $J_{sH}^*(\rho_{ij})$ is dependent on the neighborhood parameters (e.g., $\bar{A}, B_j, B_i$) as well as the current neighborhood state (e.g., $\bar{R}(t_o)$, $R_j(t_o)$).

\paragraph{\textbf{Objective Function of OCP \eqref{Eq:OCP}}}
Note that we now have solved the RHCP($\rho_{ij}$) and have obtained the functions (of $\rho_{ij}$): $\tau_i^*, \bar{\tau}_i^*, \tau_j^*, \bar{\tau}_j^*$, and, most importantly, $J_{sH}^*$. Based on the RHCP solution process outlined in Fig. \ref{Fig:OverviewOfTheRHCPSolution}, our next step is to formulate and solve the corresponding OCP \eqref{Eq:OCP}.  

As shown in \eqref{Eq:OCP}, the sensing objective component of OCP is $J_{sH}^*(t_f-t_o)$. Note that we now can explicitly express this term using the obtained $J_{sH}^*(\rho_{ij})$ function in \eqref{Eq:OP3FormalArgMinSolution} and the relationship $\rho_{ij}=t_f-t_o$. For notational convenience, taking into account that in \textbf{RHCP3}, $t_o$ is the the current event time when the RHCP is solved (i.e., $t_o=t_s$ where $t_s$ is fixed and known), let us denote this sensing objective component of OCP as
\begin{equation}\label{Eq:ORHCP3SensingComponent}
    \phi(t_f) \triangleq J_{sH}^*(t_f-t_o).
\end{equation}

On the other hand, using \eqref{J_H} and \eqref{Eq:ContributionAgent}, the energy objective component of OCP \eqref{Eq:OCP} can be expressed as
\begin{equation}\label{Eq:ORHCP3EnergyComponent}
    J_{eH}(\{u_a(t)\}) = \int_{t_o}^{t_f}u_a^2(t)dt.
\end{equation}

\paragraph{\textbf{Solution of OCP \eqref{Eq:OCP}}}
In the following analysis, for notational convenience, we use $\dot{x} = Ax(t) + Bu(t)$ with 
\begin{equation}\label{Eq:AgentDynamicsCoeffs}
A = 
\begin{bmatrix}
0 & 1\\ 0 & 0
\end{bmatrix}
,\ \ 
B = 
\begin{bmatrix}
0 \\ 1
\end{bmatrix}
,\ \ 
x(t) = 
\begin{bmatrix}
l_a(t) \\
v_a(t)
\end{bmatrix}
,\ \ 
u(t) = u_a(t),    
\end{equation}
to represent the agent dynamics stated in \eqref{Eq:AgentDynamics2}. Under this notation, using \eqref{Eq:ORHCP3SensingComponent} and \eqref{Eq:ORHCP3EnergyComponent}, the OCP \eqref{Eq:OCP} can be stated as 
\begin{equation}\label{Eq:ORHCP3AgentTrajectoryOptimization}
\begin{alignedat}{3}
& \underset{\makebox[2cm]{\footnotesize $t_f,\{u(t)\}$}}{\min}   & \quad &  \alpha_H\int_{t_o}^{t_f}u^2(t)dt + \phi(t_f)\\
& \makebox[2cm]{subject to} &       & \dot{x} = Ax(t) + Bu(t),\\
&  &  & x(t_o) = [0, 0]^T,\ \ x(t_f) = [y_{ij}, 0]^T.
\end{alignedat}
\end{equation}
The last two constraints in \eqref{Eq:ORHCP3AgentTrajectoryOptimization} are simply terminal constraints for the agent motion on the trajectory segment $(i,j)$. Note that \eqref{Eq:ORHCP3AgentTrajectoryOptimization} is a standard free final time, fixed initial and final state optimal control problem. Hence, there is an established solution procedure \cite{bryson1975} as outlined next.

First, the Hamiltonian corresponding to  \eqref{Eq:ORHCP3AgentTrajectoryOptimization} is written as 
\begin{equation}\label{Eq:RHCP3OCPHamiltonian}
    H(x(t),u(t),t) \triangleq \alpha_H u^2(t) + \lambda^T(t) (Ax(t) + Bu(t)),
\end{equation}
where $\lambda(t)$ represents the co-state variables. Next, the adjoined function that combines the terminal constraint on $x(t_f)$ and the terminal cost $\phi(t_f)$ is written as
$$\Phi(x(t_f),t_f) \triangleq \phi(t_f) + \nu^T(x(t_f)-[y_{ij}, 0]^T),$$ 
where $\nu$ is a set of multipliers. 

Finally, the OCP in \eqref{Eq:ORHCP3AgentTrajectoryOptimization} can be solved to obtain the corresponding optimal   $\{x(t),u(t),\lambda(t): t\in[t_o,t_f]\},\,t_f$ and $\nu$ values by solving the following system of equations \cite{bryson1975}:
\begin{gather}
\label{Eq:OptimalControlEquations3}
\frac{\partial H}{\partial u} = 2\alpha_Hu(t) + \lambda^T(t)B = 0,\\
\label{Eq:OptimalControlEquations4}
\dot{\lambda} = -\left(\frac{\partial H}{\partial x}\right)^T = -A^T\lambda(t),\ \ \lambda(t_f) = \left(\frac{\partial \Phi}{\partial x(t_f)}\right)^T = \nu,\\
\label{Eq:OptimalControlEquations5}
\frac{d\Phi}{dt_f} + \alpha_H u^2(t_f) = 
\frac{d\phi}{dt_f} + \nu^T(Ax(t_f) + Bu(t_f)) + \alpha_H u^2(t_f) 
= 0 
\end{gather}
in addition to the agent dynamics and terminal constraints given in \eqref{Eq:ORHCP3AgentTrajectoryOptimization}.
Note that \eqref{Eq:OptimalControlEquations3} is the optimality condition (from Pontryagin's minimum principle), \eqref{Eq:OptimalControlEquations4} are the co-state equations and \eqref{Eq:OptimalControlEquations5} is the transversality condition.



\begin{lemma}\label{Lm:RHCP3OCPSolution}
The optimal terminal time $t_f^*$ of the OCP \eqref{Eq:ORHCP3AgentTrajectoryOptimization} satisfies the equation:
\begin{equation}\label{Eq:RHCP3OCPSolutiont_f}
(t_f-t_o)^4 \, \frac{d\phi(t_f)}{d t_f} = 36 \alpha_H y_{ij}^2,
\end{equation}
where $\phi(t_f)$ is known from \eqref{Eq:ORHCP3SensingComponent} and the corresponding optimal control law $u^*(t)$ is given by
\begin{equation}\label{Eq:RHCP3OCPSolutionu_t}
    u^*(t) = \frac{12 y_{ij}}{(t_f^*-t_o)^3}\left[ \frac{t_f^*+t_o}{2} - t\right],\ \forall t \equiv [t_o,t_f].
\end{equation}
\end{lemma}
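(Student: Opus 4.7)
The plan is to directly apply the standard necessary conditions \eqref{Eq:OptimalControlEquations3}--\eqref{Eq:OptimalControlEquations5} listed in the excerpt, exploit the fact that the dynamics $A,B$ in \eqref{Eq:AgentDynamicsCoeffs} are a double integrator to obtain a closed-form $u^*(t)$, and then reduce the transversality condition to a single scalar equation for $t_f^*$.

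First I would use \eqref{Eq:OptimalControlEquations3} to write $u^*(t) = -\lambda_2(t)/(2\alpha_H)$, since $\lambda^T B = \lambda_2$. The co-state equations \eqref{Eq:OptimalControlEquations4} with $A^T = \begin{bmatrix}0&0\\1&0\end{bmatrix}$ give $\dot\lambda_1 = 0$ and $\dot\lambda_2 = -\lambda_1$, so $\lambda_1$ is constant and $\lambda_2$ is affine in $t$. Consequently $u^*(t)$ is an affine function of $t$, say $u^*(t) = a(t-\bar t)$ for constants $a,\bar t$ to be pinned down by the four boundary conditions $l(t_o)=0$, $v(t_o)=0$, $v(t_f)=0$, $l(t_f)=y_{ij}$.

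Next I would integrate the agent dynamics \eqref{Eq:AgentDynamics2} twice. Enforcing $v(t_o)=v(t_f)=0$ on $v(t) = \int_{t_o}^t u^*(\tau)d\tau$ forces the symmetry $\bar t = (t_f+t_o)/2$, which already yields the bracketed factor in \eqref{Eq:RHCP3OCPSolutionu_t}. A short computation then gives $v(t) = \tfrac{a}{2}(t-t_o)(t-t_f)$, and integrating once more produces $l(t_f) = -\tfrac{a}{12}(t_f-t_o)^3$. Setting this equal to $y_{ij}$ solves for $a = -12 y_{ij}/(t_f-t_o)^3$, which, upon substitution, reproduces \eqref{Eq:RHCP3OCPSolutionu_t} exactly.

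The remaining task is to derive \eqref{Eq:RHCP3OCPSolutiont_f} from the transversality condition \eqref{Eq:OptimalControlEquations5}. At $t=t_f$ we have $Ax(t_f)=A[y_{ij},0]^T = 0$, so only the $\nu^T B u(t_f) = \nu_2 u^*(t_f)$ and $\alpha_H u^{*2}(t_f)$ terms survive alongside $d\phi/dt_f$. From \eqref{Eq:OptimalControlEquations4}, $\nu_2 = \lambda_2(t_f) = -2\alpha_H u^*(t_f)$, so these two terms combine into $-\alpha_H u^{*2}(t_f)$. Evaluating $u^*(t_f) = -6 y_{ij}/(t_f-t_o)^2$ from \eqref{Eq:RHCP3OCPSolutionu_t} gives $\alpha_H u^{*2}(t_f) = 36\alpha_H y_{ij}^2/(t_f-t_o)^4$, and multiplying through by $(t_f-t_o)^4$ yields \eqref{Eq:RHCP3OCPSolutiont_f}.

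The only mildly delicate step is the sign bookkeeping when collapsing $\nu^T B u(t_f) + \alpha_H u^{*2}(t_f)$ via $\nu_2 = -2\alpha_H u^*(t_f)$; everything else is routine integration of the double integrator and substitution into the transversality condition. I do not expect any conceptual obstacle, since the terminal state is fully fixed and the free-final-time framework of \cite{bryson1975} applies verbatim.
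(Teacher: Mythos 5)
Your proof is correct and follows essentially the same route as the paper's own: both apply the optimality, co-state, and transversality conditions \eqref{Eq:OptimalControlEquations3}--\eqref{Eq:OptimalControlEquations5}, solve the resulting double-integrator boundary-value problem in closed form (your symmetry argument $\bar t=(t_f+t_o)/2$ is just a tidier way of arriving at the paper's relation $\nu_2=-\nu_1(t_f-t_o)/2$), and collapse the transversality condition via $\nu_2=-2\alpha_H u^*(t_f)$ to obtain \eqref{Eq:RHCP3OCPSolutiont_f}. Your sign bookkeeping is in fact the consistent one: the paper's intermediate display \eqref{Eq:RHCP3OCPSolutionu_tProofStep2} carries a sign slip in $\nu_1,\nu_2,u(t_f)$ (the correct values are $\nu_1=-24\alpha_H y_{ij}/(t_f-t_o)^3$ and $u^*(t_f)=-6y_{ij}/(t_f-t_o)^2$, as you found), but this does not affect the final equations since only $\alpha_H u^{*2}(t_f)$ enters.
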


\begin{proof}
First, we take $\nu = [\nu_1, \nu_2]^T$, $\lambda(t) = [\lambda_1(t), \lambda_2(t)]^T$ and solve \eqref{Eq:OptimalControlEquations4} for $\lambda_1(t)$ and $\lambda_2(t)$. This gives: 
\begin{equation*}
\lambda_1(t) = \nu_1\ \mbox{ and }\ 
\lambda_2(t) = \nu_2 + \nu_1(t_f-t),\ \forall t\in \mathcal{P}_{ij},
\end{equation*}
(recall $\mathcal{P}_{ij} =[t_o,t_f]$). We then solve \eqref{Eq:OptimalControlEquations3} for $u(t)$ to obtain: \begin{equation}\label{Eq:RHCP3OCPSolutionu_tProofStep1}
    u(t) = -\frac{\lambda_2(t)}{2\alpha_H} = -\frac{1}{2\alpha_{H}}\left(\nu_2 + \nu_1(t_f-t)\right),\ \forall t\in\mathcal{P}_{ij}.
\end{equation}

Next, we take $x(t) = [x_1(t), x_2(t)]^T$ and solve the agent dynamics equation in \eqref{Eq:ORHCP3AgentTrajectoryOptimization} (also using \eqref{Eq:RHCP3OCPSolutionu_tProofStep1}) for $x_2(t)$. This results in:
\begin{equation*}
    x_2(t) = \frac{(t_o-t)}{2\alpha_H}
    \left(
    \nu_2 + \nu_1t_f - \frac{\nu_1}{2}(t_o + t)
    \right),\ \forall t\in\mathcal{P}_{ij}.
\end{equation*}
Now, using the terminal constraint  $x_2(t_f)=0$ on the above, we get $\nu_2 = -\nu_1(t_f-t_o)/2$. Back substituting this in above $x_2(t)$ we get a further simplified expression for it as 
\begin{equation*}
    x_2(t) =  \frac{\nu_1}{4\alpha_H}
    \left(
    t^2 - (t_o+t_f)t + t_ot_f\right),\ \forall t\in\mathcal{P}_{ij}.
\end{equation*}
Applying this result in the relationship $x_1(t) = \int_{t_o}^{t}x_2(t)dt$ (i.e., agent dynamics) we get:
\begin{equation*}
x_1(t) = \frac{v_1(t-t_o)}{24\alpha_{H}}
\left(
2t^2 - (3t_f+t_o)t + t_o(3t_f-t_o)
\right),\ \forall t\in\mathcal{P}_{ij}.
\end{equation*}
Similar to before, using the terminal constraint $x_1(t_f)=y_{ij}$ on the above (and via back substituting), we get
\begin{equation}\label{Eq:RHCP3OCPSolutionu_tProofStep2}
    \nu_1 = \frac{24\alpha_Hy_{ij}}{(t_f-t_o)^3} 
    \ \mbox{(and }
    \nu_2 = -\frac{12\alpha_Hy_{ij}}{(t_f-t_o)^2},\ 
    u(t_f) = \frac{6y_{ij}}{(t_f-t_o)^2}
    \mbox{)}.
\end{equation}

Now we are ready to use \eqref{Eq:OptimalControlEquations5} to solve for the optimal $t_f$ value (i.e., $t_f^*$). Note that \eqref{Eq:OptimalControlEquations5} directly simplifies to the form:
\begin{equation*}
    \frac{d\phi}{dt_f} + \nu_2u(t_f) + \alpha_{H}u^2(t_f)=0,
\end{equation*}
which we can further reduce to the form (using  \eqref{Eq:RHCP3OCPSolutionu_tProofStep2}):
\begin{equation*}
    \frac{d\phi}{dt_f} -\frac{36\alpha_H y_{ij}}{(t_f-t_o)^4} = 0,
\end{equation*}
and obtain \eqref{Eq:RHCP3OCPSolutiont_f}. Finally, the optimal control law $u^*(t)$ in  \eqref{Eq:RHCP3OCPSolutionu_t} can be obtained by substituting \eqref{Eq:RHCP3OCPSolutionu_tProofStep2} in \eqref{Eq:RHCP3OCPSolutionu_tProofStep1}.
\end{proof}

Using the optimal terminal time $t_f^*$ and control $u^*(t)$ (i.e., $u_a^*(t)$) proven in Lemma \ref{Lm:RHCP3OCPSolution}, the optimal energy objective component of this OCP (i.e., \eqref{Eq:ORHCP3EnergyComponent}) can be obtained as 
\begin{equation}\label{Eq:RHCP3EnergySolution}
    J_{eH}(\{u_a^*(t)\}) = \frac{12 y_{ij}^2}{(t_f^*-t_o)^3}.
\end{equation}
The corresponding optimal sensing objective component (i.e., \eqref{Eq:ORHCP3SensingComponent}) is directly given by $\phi(t_f^*) = J_{sH}^*(t_f^*-t_o)$. Finally, the optimal transit-time value is $\rho_{ij}^* = t_f^* - t_o$.

\paragraph{\textbf{Solution of RHCP \eqref{Eq:RHCGenSolStep1} for} $U_{iaj}^*$}
As outlined in Fig. \ref{Fig:OverviewOfTheRHCPSolution}, we now can conclude solving RHCP \eqref{Eq:RHCGenSolStep1}. First, we apply the determined $\rho_{ij}^*$ value in \eqref{Eq:OP3FormalArgMinSolution} to get the optimal control decisions: $\tau_j^*$ and $\bar{\tau}_j^*$ of the control vector $U_{iaj}^*$ \eqref{Eq:RHCGenSolStep1}. 

\begin{remark}
Note that $\tau_j^*$ and $\bar{\tau}_j^*$ in \eqref{Eq:OP3FormalArgMinSolution} are piece-wise functions of $\rho_{ij}$ (with at most five cases).
Hence, $J_{sH}^*(\rho_{ij})$ in \eqref{Eq:OP3FormalArgMinSolution} is also a piece-wise function of $\rho_{ij}$. 
Even though this presents a complication to the proposed RHCP \eqref{Eq:RHCGenSolStep1} solution process, it can be resolved by considering one case (of $J_{sH}^*(\rho_{ij})$) at a time when the corresponding OCP \eqref{Eq:OCP} is solved. Then, the resulting optimal transit-time value $\rho_{ij}^*$ can be used to ensure the validity as well as the optimality of the considered case of $J_{sH}^*(\rho_{ij})$ (compared to other cases).
\end{remark}

Among the remaining control decisions in $U_{iaj}^*$ \eqref{Eq:RHCGenSolStep1}, we have already found the optimal tangential acceleration profile segment $\{u_a^*(t):t\in\mathcal{P}_{ij}\}$. Integrating this, the corresponding tangential velocity profile segment can be obtained as 
\begin{equation}\label{Eq:RHCP3OCPSolutionv}
    v_a^*(t) = \frac{6y_{ij}}{(\rho_{ij}^*)^3}(t-t_o)(t_o+\rho_{ij}^*-t),\ \forall t\in\mathcal{P}_{ij}.
\end{equation}
Finally, the optimal angular velocity profile segment  $\{w_a^*(t):t\in\mathcal{P}_{ij}\}$ (required in $U_{iaj}^*$) can be found using \eqref{Eq:RHCP3OCPSolutionv} in \eqref{Eq:AngularVelocity} together with the information about the shape of the trajectory segment $(i,j)$.

\begin{remark}
Note that the OCP \eqref{Eq:ORHCP3AgentTrajectoryOptimization} (or \eqref{Eq:OCP} in general) only requires the total length $y_{ij}$ value of the trajectory segment $(i,j)$. The shape of $(i,j)$ becomes important only when $w_a^*(t)$ has to be determined to facilitate the agent's departure from target $i$ to reach target $j$ (i.e., at the end of an \textbf{RHCP3} solving process). Therefore, even though we initially assumed the shapes of trajectory segments as prespecified, the proposed RHC framework can adapt even if they change occasionally. For instance, a new class of external events (similar to $C_j$ and $\bar{C}_j$) can be defined based on such trajectory segment shape change events - to make agents react to such events. This flexibility is an advantage as the shape of a trajectory segment may have to be designed (by an upper-level trajectory planner) taking into account moving obstacles and other agents in the mission space as well as the agent's own motion and controller constraints. 
\end{remark}

\paragraph{\textbf{Solution of RHCP \eqref{Eq:RHCGenSolStep2} for} $j^*$}
We now have solved RHCP \eqref{Eq:RHCGenSolStep1} and have obtained the optimal control vector $U_{iaj}^*$ corresponding to the next-visit target $j$. Next, this process should be repeated for all the neighboring targets $j\in\mathcal{N}_i$ to get the control vectors: $\{U_{iaj}:j\in\mathcal{N}_i\}$. Finally, the optimal next-visit target $j^*$ can be found from \eqref{Eq:RHCGenSolStep2} as $j^{\ast} = {\arg\min}_{j\in\mathcal{N}_{i}} J_{H}(X_{ia}(t_s),U_{iaj}^{\ast};\ \mathsf{w}(U_{iaj}))$.

Upon solving \textbf{RHCP3}, agent $a$ departs from the target $i$ and starts following the trajectory segment $(i,j)$ while executing the obtained optimal agent controls until it arrives at the target $j^*$. According to the proposed RHC architecture, upon arrival, the agent will solve an instance of \textbf{RHCP1}.

\subsection{Solution of \textbf{RHCP1}}
We now directly consider \textbf{RHCP1} as it encompasses \textbf{RHCP2} and is the most general form of the RHCP (\eqref{Eq:RHCGenSolStep1}-\eqref{Eq:RHCGenSolStep2}) in that no active or idle time is restricted to zero. In this case, the planning horizon $\mathsf{w}(U_{iaj})$ is same as in \eqref{Eq:VariableHorizon}.


Similar to before, we next solve \eqref{Eq:RHCGenSolStep1} (via RHCP($\rho_{ij}$) \eqref{Eq:RHCP_rho_ij} and OCP \eqref{Eq:OCP}) and \eqref{Eq:RHCGenSolStep2} to obtain the solution of \textbf{RHCP1}.

\paragraph{\textbf{Solution of RHCP($\rho_{ij}$)} \eqref{Eq:RHCP_rho_ij}} 
As mentioned before, RHCP($\rho_{ij}$) corresponding to the \textbf{RHCP1} has already been solved in \cite{Welikala2020J4} to obtain: 
\begin{equation}\label{Eq:OP1FormalArgMinSolution}
\begin{alignedat}{5}
& 
(\tau_i^*,\bar{\tau}_i^*,\tau_j^*,\bar{\tau}_j^*) 
&= 
& \underset{\makebox[2.7cm]{\footnotesize $(\tau_i,\bar{\tau}_i,\tau_j,\bar{\tau}_j)\in\mathbb{U}_{s}(\rho_{ij})$}}{\arg\min} 
& \  
& J_{sH}(\tau_i,\bar{\tau}_i,\tau_j,\bar{\tau}_j;\,\rho_{ij}) & \\
& 
J_{sH}^*(\rho_{ij}) 
&= 
& J_{sH}(\tau_i^*, \bar{\tau}_i^*, \tau_j^*, \bar{\tau}_j^*;\, \rho_{ij}) & & &
\end{alignedat}
\end{equation}
where 
\begin{equation}\label{Eq:RHCP1J_sH}
    \begin{gathered}
    \begin{aligned}
    &J_{H}(\tau_{i},\bar{\tau}_i,\tau_{j},\bar{\tau}_j;\,\rho_{ij})\\
    &=\frac{\left[
\begin{split}
C_{1}\tau_{i}^{2}+C_{2}\bar{\tau}_i^{2}+C_{3}\tau_{j}^{2}+C_{4}\bar{\tau}_j^{2}+C_{5}\tau_{i}%
\bar{\tau}_i\\
+C_{6}\tau_{i}\tau_{j}+C_{7}\tau_{i}\bar{\tau}_j+C_{8}\bar{\tau}_i\tau_{j}+C_{9}\bar{\tau}_i\bar{\tau}_j%
+C_{10}\tau_{j}\bar{\tau}_j\\
+C_{11}\tau_{i}+C_{12}\bar{\tau}_i+C_{13}\tau_{j}+C_{14}\bar{\tau}_j+C_{15}%
\end{split}
\right]}{\tau_{i}+\bar{\tau}_i+\rho_{ij}+\tau_{j}+\bar{\tau}_j},    
    \end{aligned}\\
C_{1} = \frac{\bar{A}-B_i}{2},\  
C_{2} = \frac{\bar{A}_i}{2},\ 
\bar{A}_i = \bar{A}-A_i,\ 
C_{3} = \frac{\bar{A}-B_j}{2},\\  
C_{4} = \frac{\bar{A}_j}{2},\ 
C_{5} = C_{8} = \bar{A}_{i},\   
C_{6} = \bar{A}-B_i,\ 
C_{7} = \bar{A}_j-B_i,\\   
C_{9} = \bar{A}_{ij} = \bar{A}_i-A_j,\   
C_{10} = \bar{A}_j,\   
C_{11} = [\bar{R}(t_O)+(\bar{A}-B_i)\rho_{ij}],\\  
C_{12} = [\bar{R}_i(t_O)+\bar{A}_i\rho_{ij}],\ 
\bar{R}_i = \bar{R}-R_i,\ 
C_{13} = [\bar{R}(t_O)+\bar{A}\rho_{ij}],\\ 
C_{14} = [\bar{R}_j(t_O)+\bar{A}_j\rho_{ij}],\ 
C_{15} = \frac{\rho_{ij}}{2}[2\bar{R}(t_O) + \bar{A}\rho_{ij}].        
    \end{gathered}
\end{equation}
Explicit expressions of $\tau_i^*,\bar{\tau}_i^*,\tau_j^*$ and $\bar{\tau}_j^*$ (each as a function of $\rho_{ij}$) are determined using the rational function optimization technique proposed in \cite[App.~A]{Welikala2020J4}. However, due to space constraints, we omit giving their exact forms.

\paragraph{\textbf{Objective Function of OCP \eqref{Eq:OCP}}}
The sensing objective component of OCP: $J_{sH}^*(t_f-t_o)$ now can be expressed explicitly using the $J_{sH}^*(\rho_{ij})$ function in \eqref{Eq:OP1FormalArgMinSolution} and the relationship $\rho_{ij} = t_f-t_o$. Note however that in \textbf{RHCP1}, both $t_o$ and $t_f$ are free.   Therefore, let us denote this sensing objective component of OCP as 
\begin{equation}\label{Eq:ORHCP1SensingComponent}
\phi(t_o,t_f) \triangleq J_{sH}^*(t_f-t_o).    
\end{equation}
However, the energy objective component of OCP: $J_{eH}(\{u_a(t)\})$ in \textbf{RHCP1} takes the same form as in \eqref{Eq:ORHCP3EnergyComponent}.

\paragraph{\textbf{Solution of OCP \eqref{Eq:OCP}}}
Similar to before, using  $\dot{x} = Ax(t) + Bu(t)$ with \eqref{Eq:AgentDynamicsCoeffs} to represent the agent dynamics \eqref{Eq:AgentDynamics2}, the OCP corresponding to the \textbf{RHCP1} can be stated as \begin{equation}\label{Eq:ORHCP1AgentTrajectoryOptimization}
\begin{alignedat}{3}
& \underset{\makebox[2cm]{\footnotesize $t_o,\,t_f,\,\{u(t)\}$}}{\min}   & \quad &  \alpha_H\int_{t_o}^{t_f}u^2(t)dt + \phi(t_o,t_f)\\
& \makebox[2cm]{subject to} &       & \dot{x} = Ax(t) + Bu(t),\\
&  &  & x(t_o) = [0, 0]^T,\ \ x(t_f) = [y_{ij}, 0]^T.
\end{alignedat}
\end{equation}
Note that \eqref{Eq:ORHCP1AgentTrajectoryOptimization} is a standard free initial and final time, fixed initial and final state optimal control problem.  Hence, similar to \eqref{Eq:ORHCP3AgentTrajectoryOptimization}, there is an established solution procedure \cite{bryson1975} as outlined next.

First, note that Hamiltonian corresponding to \eqref{Eq:ORHCP1AgentTrajectoryOptimization} takes the same form as in \eqref{Eq:RHCP3OCPHamiltonian}. Next, the adjoined function that combines the terminal constraints on $x(t_o)$ and $x(t_f)$ with the terminal cost $\phi(t_o,t_f)$ are written as 
\begin{equation}
\begin{aligned}
  \Phi(x(t_o),t_o,x(t_f),t_f)\ \triangleq \  &\phi(t_o,t_f)\\
  &+\nu_o^T x(t_o) + \nu_f^T(x(t_f)-[y_{ij},0]^T)
\end{aligned}
\end{equation}
where $\nu_o$ and $\nu_f$ are sets of multipliers.

Finally, the OCP in \eqref{Eq:ORHCP1AgentTrajectoryOptimization} can be solved to obtain the corresponding optimal  
$\{x(t),u(t),\lambda(t): t\in[t_o,t_f]\}, t_o, t_f, \nu_o$ and $\nu_f$ values by solving the following system of equations \cite{bryson1975}:
\begin{gather}
\label{Eq:OP1OptimalControlEquations3}
\frac{\partial H}{\partial u} = 2\alpha_Hu(t) + \lambda^T(t)B = 0, \\ \label{Eq:OP1OptimalControlEquations4}
\dot{\lambda} = -\left(\frac{\partial H}{\partial x}\right)^T = -A^T\lambda(t),\ 
\lambda(t_o) = -\left(\frac{\partial \Phi}{\partial x(t_o)}\right)^T = -\nu_o,\\  \label{Eq:OP1OptimalControlEquations5}
\lambda(t_f) = \left(\frac{\partial \Phi}{\partial x(t_f)}\right)^T = \nu_f,\\ \label{Eq:OP1OptimalControlEquations6}
\frac{\partial\Phi}{\partial t_o} - H\vert_{t=t_o} = \frac{\partial \phi}{\partial t_o} - \alpha_H u^2(t_o) + \nu_o^T(Ax(t_o)+Bu(t_o)) 
=0,\\ \label{Eq:OP1OptimalControlEquations7}
\frac{\partial\Phi}{\partial t_f} + H\vert_{t=t_f} =  
\frac{\partial \phi}{\partial t_f} + \alpha_H u^2(t_f) + \nu_f^T(Ax(t_f)+Bu(t_f))
= 0 
\end{gather}
in addition to the agent dynamics and terminal constraints given in \eqref{Eq:ORHCP1AgentTrajectoryOptimization}.
Note that \eqref{Eq:OP1OptimalControlEquations3} is the optimality condition (from Pontryagin’s minimum principle), \eqref{Eq:OP1OptimalControlEquations4}-\eqref{Eq:OP1OptimalControlEquations5} are the co-state equations and \eqref{Eq:OP1OptimalControlEquations6}-\eqref{Eq:OP1OptimalControlEquations7} are the transversality conditions.

\begin{lemma}\label{Lm:RHCP1OCPSolution}
The optimal transit time $\rho_{ij}^* = t_f^*-t_o^*$ of the OCP \eqref{Eq:ORHCP1AgentTrajectoryOptimization} satisfies the equation
\begin{eqnarray}
\label{Eq:OptimalTransitTimeEquationRHCP1}
\rho_{ij}^4 \, \frac{d\phi(t_o,t_f)}{d \rho_{ij}} = 36 \alpha y_{ij}^2,  
\end{eqnarray}
where $\phi(t_o,t_f)$ \eqref{Eq:ORHCP1SensingComponent} is considered as a function of $\rho_{ij}=t_f-t_o$. Thus, the optimal terminal times $t_o^*$ and $t_f^*$ of \eqref{Eq:ORHCP1AgentTrajectoryOptimization} are  
\begin{equation}\label{Eq:RHCP1TerminalTimes}
    t_o^* = t_s + \tau_i^*(\rho_{ij}^*) + \bar{\tau}_i^*(\rho_{ij}^*)\ \ \mbox{ and }\ \ 
    t_f^* = t_o + \rho_{ij}^*,
\end{equation}
respectively ($\tau_i^*(\rho_{ij})$ and $\bar{\tau}_i^*(\rho_{ij})$ are as in  \eqref{Eq:OP1FormalArgMinSolution}). The corresponding optimal control law $u^*(t)$ of \eqref{Eq:ORHCP1AgentTrajectoryOptimization} is given by 
\begin{equation}\label{Eq:RHCP1ControlSolution}
    u^*(t) = \frac{12 y_{ij}}{(t_f^*-t_o^*)
    ^3}\left[\frac{t_f^*+t_o^*}{2} - t\right] 
    ,\ \ \forall t\in[t_o,t_f].
\end{equation}
\end{lemma}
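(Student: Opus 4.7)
The plan is to follow the same machinery used for Lemma~\ref{Lm:RHCP3OCPSolution}, with the only substantive new feature being the second transversality condition at the (now free) initial time $t_o$. Since the agent dynamics and both terminal states $x(t_o)=[0,0]^T$, $x(t_f)=[y_{ij},0]^T$ are identical to those of the RHCP3 OCP, I expect the interior structure of the optimal trajectory to carry over verbatim; the novelty lies in checking that the two transversality conditions are consistent and together pin down $\rho_{ij}^{*}$.

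First I would integrate the co-state equation \eqref{Eq:OP1OptimalControlEquations4} to get $\lambda_1(t)\equiv$ const and $\lambda_2(t)$ affine in $t$, and then invoke the optimality condition \eqref{Eq:OP1OptimalControlEquations3} to obtain $u(t) = -\lambda_2(t)/(2\alpha_H)$, which is affine in $t$. Integrating the dynamics \eqref{Eq:AgentDynamics2} twice and imposing the four scalar endpoint conditions determines the two free parameters of $u(t)$ purely in terms of $y_{ij}$ and $\rho_{ij}:=t_f-t_o$, yielding \eqref{Eq:RHCP1ControlSolution}. A direct consequence worth flagging for the next step is the boundary symmetry $u(t_o)=6y_{ij}/\rho_{ij}^2 = -u(t_f)$, so that $\alpha_H u^2(t_o)=\alpha_H u^2(t_f)=36\alpha_H y_{ij}^2/\rho_{ij}^4$.

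Next I would attack the transversality conditions \eqref{Eq:OP1OptimalControlEquations6}-\eqref{Eq:OP1OptimalControlEquations7}. At $t_o$, $x(t_o)=0$ kills the $Ax$ term, and substituting $\nu_o=-\lambda(t_o)$ together with the optimality relation rewrites $\nu_{o,2}u(t_o)$ as $2\alpha_H u^2(t_o)$; this collapses \eqref{Eq:OP1OptimalControlEquations6} to $\partial_{t_o}\phi + \alpha_H u^2(t_o)=0$. The parallel computation at $t_f$ uses $Ax(t_f)=0$ (because $v_a(t_f)=0$) and $\nu_{f,2}=\lambda_2(t_f)=-2\alpha_H u(t_f)$, collapsing \eqref{Eq:OP1OptimalControlEquations7} to $\partial_{t_f}\phi - \alpha_H u^2(t_f)=0$. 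Because $\phi(t_o,t_f)=J_{sH}^*(t_f-t_o)$ depends on its arguments only through $\rho_{ij}$, one has $\partial_{t_o}\phi=-d\phi/d\rho_{ij}$ and $\partial_{t_f}\phi = d\phi/d\rho_{ij}$, so \emph{both} transversality conditions reduce to the single equation $d\phi/d\rho_{ij}=36\alpha_H y_{ij}^2/\rho_{ij}^4$, which is \eqref{Eq:OptimalTransitTimeEquationRHCP1}. Finally, since the OCP \eqref{Eq:ORHCP1AgentTrajectoryOptimization} fixes only the difference $t_f^*-t_o^*=\rho_{ij}^*$, I would close the proof by invoking the outer \textbf{RHCP1} structure: the agent first spends $\tau_i^*(\rho_{ij}^*)+\bar{\tau}_i^*(\rho_{ij}^*)$ at target $i$ before departing, giving \eqref{Eq:RHCP1TerminalTimes}.

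The main obstacle I anticipate is precisely this consistency check: a free-initial/free-final time OCP generically produces two transversality equations, and it is not immediate that they collapse to the same condition on $\rho_{ij}^*$. The reduction relies on two observations that must be verified along the way -- the endpoint symmetry $u^2(t_o)=u^2(t_f)$ inherited from the affine structure of $u$, and the purely $\rho_{ij}$-dependence of $\phi$ -- after which the two conditions fuse cleanly into \eqref{Eq:OptimalTransitTimeEquationRHCP1}, confirming that $\rho_{ij}^*$ is uniquely determined even though $t_o^*,t_f^*$ individually are not.
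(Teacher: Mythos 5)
Your proposal is correct and follows essentially the same route as the paper, which itself only sketches this proof by reducing it to the machinery of Lemma~\ref{Lm:RHCP3OCPSolution} (costate integration, the optimality condition yielding an affine $u(t)$, the four endpoint conditions, then transversality). Your explicit check that the two transversality conditions at the free $t_o$ and $t_f$ are degenerate --- both collapsing to $\rho_{ij}^4\,d\phi/d\rho_{ij}=36\alpha_H y_{ij}^2$ because $u^2(t_o)=u^2(t_f)$ and $\phi$ depends only on $t_f-t_o$ --- is a point the paper glosses over (its outline even suggests the two conditions ``solve for $t_o$ and $t_f$'' separately), and your observation that $t_o^*$ is therefore not fixed by the OCP but must come from the outer RHCP($\rho_{ij}$) solution is exactly what \eqref{Eq:RHCP1TerminalTimes} encodes.
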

\begin{proof}
The proof follows the same steps as that of Lemma \ref{Lm:RHCP3OCPSolution} and is, therefore, omitted. However, the main steps are: (i) solve for $\lambda(t), u(t)$ and $x(t), \forall t\in[t_o,t_f]$ using \eqref{Eq:OP1OptimalControlEquations4}, \eqref{Eq:OP1OptimalControlEquations3} and agent dynamics, respectively, in that order, in terms of $t_o,t_f,\nu_o$ and $\nu_f$.
(ii) use the terminal constraint $x(t_f)=[y_{ij},0]^T$ and \eqref{Eq:OP1OptimalControlEquations4} to determine $\nu_o,\nu_f$ in terms of $t_o,t_f$ and (iii) solve for $t_o$ and $t_f$ using \eqref{Eq:OP1OptimalControlEquations5} and \eqref{Eq:OP1OptimalControlEquations6}.    
\end{proof}

Using the optimal terminal times $t_f^*,t_o^*$ and the control $u^*(t)$ (i.e., $u_a^*(t)$) proven in Lemma \ref{Lm:RHCP1OCPSolution}, the optimal energy objective component of this OCP \eqref{Eq:ORHCP1AgentTrajectoryOptimization} can be evaluated as
\begin{equation}\label{Eq:RHCP1EnergySolution}
    J_{eH}(\{u^*(t)\}) = \frac{12y_{ij}^2}{(t_f^*-t_o^*)^3}. 
\end{equation}
The corresponding sensing objective component \eqref{Eq:ORHCP1SensingComponent} is directly given by $\phi(t_o^*,t_f^*) = J_{sH}^*(t_f^*-t_o^*) = J_{sH}(\rho_{ij}^*)$.

\paragraph{\textbf{Solution of RHCP \eqref{Eq:RHCGenSolStep1} for} $U_{iaj}^*$}
We now conclude the solution process of RHCP \eqref{Eq:RHCGenSolStep1} (outlined in Fig. \ref{Fig:OverviewOfTheRHCPSolution}) by applying the determined optimal transit-time $\rho_{ij}^*$ in \eqref{Eq:OP1FormalArgMinSolution} to obtain the optimal control decisions $\tau_i^*, \bar{\tau}_i^*, \tau_j^*$ and $\bar{\tau}_j^*$ included in the control vector $U_{iaj}^*$ \eqref{Eq:RHCGenSolStep1}. Note that here it is not necessary to evaluate the optimal agent angular velocity profile $\{w_a^*(t):t\in[t_o^*,t_f^*]\}$ (unlike in \textbf{RHCP3}) as the agent does not plan to depart from the current target immediately. 

\paragraph{\textbf{Solution of RHCP \eqref{Eq:RHCGenSolStep2} for} $j^*$}
We now have solved the RHCP \eqref{Eq:RHCGenSolStep1} and have obtained the optimal control vector $U_{iaj}^*$ corresponding to a next-visit target $j\in\mathcal{N}_i$. Executing this process for all $j\in\mathcal{N}_i$ and subsequently evaluating \eqref{Eq:RHCGenSolStep2} gives the optimal next-visit target $j^*$ as $j^{\ast} = {\arg\min}_{j\in\mathcal{N}_{i}} J_{H}(X_{ia}(t_s),U_{iaj}^*;\ \mathsf{w}(U_{iaj}^*))$.

Upon solving \textbf{RHCP1}, agent $a$ remains active on target $i$ for a duration of $\tau_i^*$, and in the meantime, if any other external event such as $C_j$ or $\bar{C}_j$ for some $j\in\mathcal{N}_i$ occurs, it re-computes the remaining active time at target $i$. However, if the agent completes executing the  determined active time (i.e., if the corresponding event $[h\rightarrow\tau_i^*]$ occurs) with $R_i(t_O+\tau_i^*)=0$, then, the agent will have to subsequently solve an instance of \textbf{RHCP2} to determine the remaining inactive time at target $i$. Otherwise (i.e.,  if the event $[h\rightarrow\tau_i^*]$ occurs with $R_i(t_O+\tau_i^*)>0$), the agent will have to subsequently solve an instance of \textbf{RHCP3} to determine the next-visit target and depart from target $i$.

\begin{remark}
Upon solving \textbf{RHCP1}, over the subsequent active time at target $i$, the agent can choose to control its angular velocity $w_a(t)$ (while keeping $u_a(t)=0$) 
to adjust its heading $\theta_a(t)$ to accommodate the impending departure towards the next-visit target $j^*$ found in \eqref{Eq:RHCGenSolStep2}.
However, the agent can also choose to keep $w_a(t)=0$ over such dwell-time periods and plan the shapes of the trajectory segments $\{(i,j):j\in\mathcal{N}_i\}$ accordingly. 
\end{remark}

\section{Optimal Controls for First-Order Agents}
\label{Sec:FOMethods}

In the previous sections, we have proposed an RHC based solution for PMN problems that uses energy-aware second-order agents. In this section, for comparison purposes, we first present the details of a similar RHC solution \cite{Welikala2020J4} that uses energy-agnostic first-order agents. Subsequently, motivated by a few practical qualities that such first-order agent behaviors (controls) possess, we derive energy-aware control laws for governing actual second-order agents in a way that they imitate first-order agents.

In particular, this section explores how the agent controls $\{u_a^*(t):t\in\mathcal{P}_{ij}\}$ derived for energy-aware second-order agents \eqref{Eq:AgentDynamics} (given in Lemmas \ref{Lm:RHCP3OCPSolution}, \ref{Lm:RHCP1OCPSolution}) should be modified if we are to replace them with energy-agnostic first-order agents or with energy-aware second-order agents that imitate first-order agents. Based on the proposed modular RHCP \eqref{Eq:RHCGenSolStep1} solution process (outlined in Fig. \ref{Fig:OverviewOfTheRHCPSolution}), note that, a change in the agent dynamic model will affect only the OCP \eqref{Eq:OCP} (not the RHCP($\rho_{ij}$) \eqref{Eq:RHCP_rho_ij}) - specifically through the energy objective component $J_{eH}(\{u_a(t)\})$. Therefore, in this section, our main focus is on re-stating (and solving) the OCP \eqref{Eq:OCP} assuming its sensing objective component $J_{sH}^*(\rho_{ij})$ is given. Note also that such a change in the OCP objective will directly affect the resulting optimal transit-time (i.e., $\rho_{ij}^*$) and consequently will also affect all the other control variables in $U_{iaj}^*$ \eqref{Eq:RHCGenSolStep1}. 

Since we assume the $J_{sH}^*(\rho_{ij})$ function is given in this section, we keep the ensuing analysis independent of the exact RHCP form (i.e., \textbf{RHCP1}, \textbf{RHCP2} or \textbf{RHCP3}). To this end, we start with generalizing the optimal agent controls established for second-order agents in Lemmas \ref{Lm:RHCP3OCPSolution} and \ref{Lm:RHCP1OCPSolution}, in a theorem. For convenience, let us label the PMN solution that uses energy-aware actual second-order agents (developed in the previous sections) as the ``SO Method''.

\paragraph{\textbf{SO Method}} 
First, note that both equations \eqref{Eq:RHCP3OCPSolutiont_f} and \eqref{Eq:OptimalTransitTimeEquationRHCP1} are equivalent and thus can be written generally as 
\begin{equation}\label{Eq:OptimumTransitTimeGeneral}
    \rho_{ij}^4 \, \frac{J_{sH}^*(\rho_{ij})}{d\rho_{ij}} = 36 \alpha y_{ij}^2.
\end{equation}
The $\rho_{ij}$ value that satisfies the above equation is the optimal transit-time to be used under the SO method (irrespective of the RHCP form). Second, note that both \eqref{Eq:RHCP3EnergySolution} and \eqref{Eq:RHCP1EnergySolution} are also equivalent. This implies that the optimal agent energy consumption can be expressed independent of the RHCP form. Finally, note that both \eqref{Eq:RHCP3OCPSolutionu_t} and \eqref{Eq:RHCP1ControlSolution} represent the same agent control profile albeit with a shifted starting point $t_o=t_o^*$ (compared to being $t_o=t_s$) in the latter. However, as shown in \eqref{Eq:RHCP1TerminalTimes}, this starting point $t_o^*$ depends only on sensing objective related quantities and the determined optimal transit-time value. Moreover, since our main focus in this section is on agent controls $\{u_a(t):t\in\mathcal{P}_{ij}\}$, without loss of generality, we can assume $t_o=t_o^*=0$ and $\mathcal{P}_{ij} = [0,\rho_{ij}]$. 


With regard to the SO method, let us denote: 
(i) optimal transit-time as $\rho_{SO}$,\ \ 
(ii) optimal tangential velocity and acceleration as $v_a^*(t)$ and $u_a^*(t)$, respectively, for $t\in[0,\rho_{SO}]$,\ \ 
(iii) maximum tangential velocity and acceleration as $v_{SO} \triangleq \max\{v_a^*(t)\}$ and $u_{SO}\triangleq \max\{u_a^*(t)\}$, respectively, and, 
(iv) optimal agent energy consumption for the transition as $E_{SO}$.

\begin{theorem}\label{Th:SOMethodOptimalControls}
Under the SO method: $\rho_{SO}$ is given by \eqref{Eq:OptimumTransitTimeGeneral}, 
\begin{equation}\label{Eq:SOMethodProfiles}
    v_a^*(t) = \frac{6y_{ij}}{\rho_{SO}^3}\,t\,(\rho_{SO}-t),     \ \ \ \ 
    u_a^*(t) = \frac{6y_{ij}}{\rho_{SO}^2}\left(1-\frac{2t}{\rho_{SO}}\right),
\end{equation}
for $t\in[0,\rho_{SO}]$ and 
\begin{equation}\label{Eq:SOMethodMaxValues}
    v_{SO} = \frac{3y_{ij}}{2\rho_{SO}},\ \ 
    u_{SO} = \frac{6y_{ij}}{\rho_{SO}^2},\ \ 
    E_{SO} = \frac{12 y_{ij}^2}{\rho_{SO}^3}.
\end{equation}
\end{theorem}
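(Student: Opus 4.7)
The approach is to consolidate Lemmas \ref{Lm:RHCP3OCPSolution} and \ref{Lm:RHCP1OCPSolution} under the common time shift $t_o = t_o^* = 0$ justified in the paragraph preceding the theorem (since $u_a^*$ depends only on $t - t_o^*$ and $\rho_{ij}^* = t_f^* - t_o^*$). With this normalization, $\mathcal{P}_{ij} = [0,\rho_{SO}]$ and the two lemmas collapse to a single statement, so the proof reduces to four short verifications: establishing the transit-time equation, deriving the velocity profile from the already-known acceleration profile, locating the maxima, and integrating the squared acceleration.

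First, I would derive \eqref{Eq:OptimumTransitTimeGeneral}. In the \textbf{RHCP3} case, \eqref{Eq:RHCP3OCPSolutiont_f} gives $(t_f-t_o)^4\,d\phi/dt_f = 36\alpha_H y_{ij}^2$, and with $t_o$ fixed and $\phi(t_f) = J_{sH}^*(t_f-t_o)$, the chain rule converts $d\phi/dt_f$ to $dJ_{sH}^*/d\rho_{ij}$ since $\rho_{ij} = t_f - t_o$. In the \textbf{RHCP1} case, \eqref{Eq:OptimalTransitTimeEquationRHCP1} is already written in terms of $\rho_{ij}$. Both reduce to \eqref{Eq:OptimumTransitTimeGeneral} upon identifying $\alpha_H = \alpha$, which defines $\rho_{SO}$.

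Second, I would obtain $u_a^*(t)$ in \eqref{Eq:SOMethodProfiles} by substituting $t_o^* = 0$ and $t_f^* = \rho_{SO}$ into \eqref{Eq:RHCP3OCPSolutionu_t} (equivalently \eqref{Eq:RHCP1ControlSolution}) and rewriting the bracket $[\rho_{SO}/2 - t]$ as $(\rho_{SO}/2)(1 - 2t/\rho_{SO})$, absorbing the $1/2$ into the prefactor to yield $u_a^*(t) = (6y_{ij}/\rho_{SO}^2)(1 - 2t/\rho_{SO})$. Then $v_a^*(t) = \int_0^t u_a^*(\tau)\,d\tau$, which is a direct polynomial integration (using $v_a^*(0) = 0$); one checks the terminal condition $v_a^*(\rho_{SO}) = 0$ as a sanity check. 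This exactly matches \eqref{Eq:RHCP3OCPSolutionv} with $t_o = 0$.

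Third, I would compute the three quantities in \eqref{Eq:SOMethodMaxValues}. Since $v_a^*$ is a downward-opening parabola, its maximum is attained at the unique critical point $t = \rho_{SO}/2$ (where $u_a^* = 0$), giving $v_{SO} = v_a^*(\rho_{SO}/2) = 3y_{ij}/(2\rho_{SO})$. The acceleration $u_a^*(t)$ is affine in $t$, so $\max_t |u_a^*(t)|$ is attained at an endpoint, yielding $u_{SO} = |u_a^*(0)| = 6y_{ij}/\rho_{SO}^2$. Finally, the energy $E_{SO} = \int_0^{\rho_{SO}} u_a^{*2}(t)\,dt$ is already computed as $12y_{ij}^2/\rho_{SO}^3$ in \eqref{Eq:RHCP3EnergySolution} and \eqref{Eq:RHCP1EnergySolution}, which can be re-derived in one line by substituting $u_a^*(t) = (6y_{ij}/\rho_{SO}^2)(1 - 2t/\rho_{SO})$ and integrating.

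There is no real obstacle here because all four claims are essentially repackagings of results already established in Lemmas \ref{Lm:RHCP3OCPSolution} and \ref{Lm:RHCP1OCPSolution}. The only care needed is justifying that the ``shift'' $t_o^* = 0$ is without loss of generality for the \textbf{RHCP1} case (since $t_o^*$ there is generally nonzero per \eqref{Eq:RHCP1TerminalTimes}); this is handled by the observation in the paragraph preceding the theorem that $t_o^*$ depends only on sensing-side quantities and on $\rho_{ij}^*$, so the \emph{shape} of the optimal acceleration profile over $\mathcal{P}_{ij}$ and all derived quantities ($v_{SO}, u_{SO}, E_{SO}$) depend only on $\rho_{SO}$ and $y_{ij}$.
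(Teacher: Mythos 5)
Your proposal is correct and follows essentially the same route as the paper: the paper's proof likewise reads the profiles off \eqref{Eq:RHCP3OCPSolutionu_t} and \eqref{Eq:RHCP3OCPSolutionv}, obtains the maxima by elementary calculus, and takes $E_{SO}$ from \eqref{Eq:RHCP3EnergySolution}, with the $t_o=0$ normalization and the equivalence of the two lemmas justified in the paragraph preceding the theorem exactly as you note. Your write-up simply makes those steps explicit.
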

\begin{proof}
The results stated in \eqref{Eq:SOMethodProfiles} directly follow from \eqref{Eq:RHCP3OCPSolutionu_t} and \eqref{Eq:RHCP3OCPSolutionv}. The relationships given in \eqref{Eq:SOMethodMaxValues} can be obtained using \eqref{Eq:SOMethodProfiles} (via calculus) and \eqref{Eq:RHCP3EnergySolution}.
\end{proof}

Figure \ref{Fig:FirstOrderAgentStateTrajectory} illustrates an example agent tangential velocity profile segment $\{v_a^*(t):t\in[0,\rho_{SO}]\}$. In the subsequent subsections, we explore several alternative approaches to this SO method. In particular, each of these alternative methods has its root in the first-order agent model used in \cite{Welikala2020J4,Zhou2019} - where each agent is assumed to travel at a fixed predefined velocity over each trajectory segment, and thus, does not involve an OCP in RHCPs. We label this approach of controlling agents as the ``FO-0 Method'' and an example agent tangential velocity profile observed is shown in Fig. \ref{Fig:FirstOrderAgentStateTrajectory}. 

\begin{figure}[!t]
    \centering
    \includegraphics[width=2.7in]{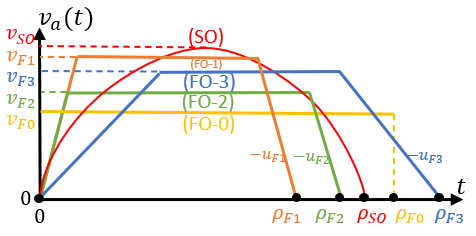}
    \caption{Tangential velocity profiles on a trajectory segment $(i,j) \in \mathcal{E}$ of length $y_{ij}$ under second-order (SO) and four different forms of (approximate) first-order (FO-0,1,2,3) agent models.
    Under the FO-$n$ agent model $n\in\{0,1,2,3\}$, $\rho_{Fn} \triangleq y_{ij}/v_{mn}$ where $v_{mn}$ is the average velocity, $v_{Fn}$ is the maximum velocity and $u_{Fn}$ is the maximum absolute acceleration level. 
    }
    \label{Fig:FirstOrderAgentStateTrajectory}
\end{figure}


However, note that we can neither characterize the total energy consumption nor control a real-world agent over such a tangential velocity profile as in the FO-0 curve in Fig. \ref{Fig:FirstOrderAgentStateTrajectory} - due to the involved instantaneous infinite accelerations. 
Therefore, to facilitate a comparison between SO and FO-0 methods, we propose to use actual second-order agents (instead of first-order ones) but enforce each agent controller to approximate a first-order agent behavior (FO-0). We label this approximate version of the FO-0 method as the ``FO-1 Method'' and a corresponding agent tangential velocity profile is shown in Fig. \ref{Fig:FirstOrderAgentStateTrajectory}.



\paragraph{\textbf{FO-1 Method}} 
Under the FO-1 method, as shown in Fig. \ref{Fig:FirstOrderAgentStateTrajectory}, each agent is assumed to go through a sequence of: constant acceleration (of $u_{F1}$), constant velocity (of $v_{F1}$) and constant deceleration (of $-u_{F1}$) stages over a period of length $\rho_{F1}$ every time it travels on a trajectory segment. In particular, the acceleration/deceleration magnitude $u_{F1}$ and the average velocity value $v_{m1} = y_{ij}/\rho_{F1}$ are assumed to be prespecified, commonly for all $(i,j)\in\mathcal{E}$. The resulting maximum velocity level on a trajectory segment $(i,j)\in\mathcal{E}$ is denoted as $v_{F1}^{ij}$ and can be expressed in terms of $y_{ij},u_{F1}$ and $v_{m1}$ as  
\begin{equation}\label{Eq:FO1MethodMaxEdgeVelocity}
    v_{F1}^{ij}(u_{F1},v_{m1}) =
    \begin{cases}
    \frac{y_{ij}u_{F1}-\sqrt{y_{ij}^2u_{F1}^2-4v_{m1}^2y_{ij}u_{F1}}}{2v_{m1}}&\ \mbox{ if } y_{ij}\geq \frac{4v_{m1}^2}{u_{F1}}\\
    \sqrt{y_{ij}u_{F1}}&\ \mbox{ otherwise. }
    \end{cases}
\end{equation}

To conduct a fair comparison between the SO and FO-1 methods, the two parameters $u_{F1}$ and $v_{m1}$ that define the FO-1 method are selected as follows. First, let us define $u_{SO}^{max}$ and $v_{SO}^{max}$ as the respective maximum values of all the $u_{SO}$ and $v_{SO}$ values (empirical) observed in the interested PMN problem. Then, we propose to enforce: \begin{equation}\label{Eq:FO1MethodParameters}
\begin{aligned}
  &u_{F1} =\ u_{SO}^{max} \ \mbox{ and }\\
  & \begin{alignedat}{4}
& v_{m1} = & \underset{\makebox[2.5cm]{\footnotesize $v_m > 0,\ (i,j)\in\mathcal{E}$}}{\makebox[2.5cm]{argmax}}   & \quad &&  v_{F1}^{ij}(u_{SO}^{max},v_m)\\
& & \makebox[2.5cm]{subject to} &       && v_{F1}^{ij}(u_{SO}^{max},v_m)\leq v_{SO}^{max},
\end{alignedat}
\end{aligned}
\end{equation}
to ensure the maximum velocity and acceleration values resulting from the FO-1 method are identical or as close as possible to those of the SO method. 

\begin{proposition}\label{Pr:FO1}
The $v_{m1}$ expression given in \eqref{Eq:FO1MethodParameters} can be simplified into the form
\begin{equation}\label{Eq:FO1Parameters2}
\begin{alignedat}{4}
& v_{m1} = & \underset{\makebox[2cm]{\footnotesize $(i,j)\in\mathcal{E}$}}{\min} & \quad && \frac{y_{ij}u_{SO}^{max}v_{SO}^{max}}{(v_{SO}^{max})^2+y_{ij}u_{SO}^{max}} & \\
&  & \makebox[2cm]{subject to} &  && y_{ij}\geq  (v_{SO}^{max})^2/u_{SO}^{max}. 
\end{alignedat}
\end{equation}
\end{proposition}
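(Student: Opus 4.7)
The plan is to exploit the monotonicity of $v_{F1}^{ij}(u_{SO}^{max},\cdot)$ in $v_m$: this converts the uniform velocity constraint $v_{F1}^{ij}\leq v_{SO}^{max}$ into a per-edge upper bound on $v_m$, and then the overall maximum $v_{m1}$ is obtained by taking the minimum of these bounds over the edges that actually bind.

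First, I would replace the explicit trapezoidal formula in \eqref{Eq:FO1MethodMaxEdgeVelocity} with the implicit quadratic relation from which it is derived, namely $v_m v_F^2 - y_{ij}u_{SO}^{max} v_F + y_{ij}u_{SO}^{max} v_m = 0$, equivalently $v_m = y_{ij}u_{SO}^{max} v_F/(v_F^2 + y_{ij}u_{SO}^{max})$. A direct differentiation in $v_F$ shows the right-hand side is increasing on the physically admissible branch $v_F\in(0,\sqrt{y_{ij}u_{SO}^{max}}\,]$, which is exactly the branch selected by the minus sign in \eqref{Eq:FO1MethodMaxEdgeVelocity}. Inverting, $v_F$ is strictly increasing in $v_m$ on the trapezoidal range $v_m\in(0,\sqrt{y_{ij}u_{SO}^{max}}/2\,]$, and is constant at $\sqrt{y_{ij}u_{SO}^{max}}$ on the triangular range $v_m>\sqrt{y_{ij}u_{SO}^{max}}/2$; hence $v_{F1}^{ij}(u_{SO}^{max},v_m)$ is non-decreasing in $v_m$ throughout.

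Second, using this monotonicity, the constraint $v_{F1}^{ij}(u_{SO}^{max},v_m)\leq v_{SO}^{max}$ splits into two cases per edge. If $y_{ij} < (v_{SO}^{max})^2/u_{SO}^{max}$, the triangular peak $\sqrt{y_{ij}u_{SO}^{max}}$ is already below $v_{SO}^{max}$, so every $v_m>0$ is feasible and that edge imposes no bound. If $y_{ij}\geq (v_{SO}^{max})^2/u_{SO}^{max}$, monotonicity and continuity force the feasible $v_m$ to saturate the constraint, so substituting $v_F = v_{SO}^{max}$ into the implicit relation yields the per-edge upper bound $v_m \leq y_{ij}u_{SO}^{max}v_{SO}^{max}/((v_{SO}^{max})^2 + y_{ij}u_{SO}^{max})$. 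Taking the minimum of these bounds over the binding edges reproduces exactly \eqref{Eq:FO1Parameters2}.

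The main obstacle I expect is the monotonicity step. Working directly with the explicit formula in \eqref{Eq:FO1MethodMaxEdgeVelocity}, which subtracts a square root, makes the sign analysis of $\partial v_{F1}^{ij}/\partial v_m$ messy and would require treating the two branches separately; passing through the implicit quadratic form, and verifying that its minus-sign root picks out precisely the branch on which monotonicity holds, is what makes the argument transparent and reduces the remaining inversion to a one-line substitution.
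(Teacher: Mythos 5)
Your proposal is correct and follows essentially the same route as the paper's proof: establish that $v_{F1}^{ij}(u_{SO}^{max},\cdot)$ is non-decreasing in $v_m$ with a plateau at $\sqrt{y_{ij}u_{SO}^{max}}$, convert the constraint $v_{F1}^{ij}\leq v_{SO}^{max}$ into a per-edge upper bound $v_m^{ij}$ (finite exactly when $y_{ij}\geq (v_{SO}^{max})^2/u_{SO}^{max}$), intersect over edges, and invoke monotonicity of the objective to conclude that the optimum is the largest feasible $v_m$. The only difference is cosmetic: you justify the monotonicity via the implicit quadratic $v_m v_F^2 - y_{ij}u_{SO}^{max}v_F + y_{ij}u_{SO}^{max}v_m = 0$ and branch inversion, whereas the paper simply asserts it as ``easy to show'' from \eqref{Eq:FO1MethodMaxEdgeVelocity}.
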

\begin{proof}
Provided in Appendix \ref{App:Lm_FO1_Proof}.
\end{proof}
Note that according to Proposition \ref{Pr:FO1}, we need to assume the given $u_{SO}^{max}$ and $v_{SO}^{max}$ satisfy that $\exists (i,j)\in\mathcal{E}$ with $y_{ij}\geq  (v_{SO}^{max})^2/u_{SO}^{max}$ $v_{m1}$. However, if this assumption does not hold, we simply can use a lower $v_{SO}^{max}$ value than its actual value when evaluating $v_{m1}$ in \eqref{Eq:FO1Parameters2}. Note also that the maximum velocity value observed in the FO-1 method is given by 
\begin{equation}
    v_{F1} = \max_{(i,j)\in\mathcal{E}}\ v_{F1}^{ij}(u_{SO}^{max},v_{m1}),
\end{equation}
and the agent energy consumption on a trajectory segment $(i,j)$ can be proven to be
\begin{equation}
  E_{F1}^{ij} = 2u_{SO}^{max}v_{F1}^{ij}(u_{SO}^{max},v_{m1}).  
\end{equation}

We can now compare the FO-1 and SO methods as we can compute the total agent energy consumption in the FO-1 method (numerical results are provided in Section \ref{Sec:ExtentionsAndExamples}, e.g., Tab \ref{Tab:Results}). We again highlight that the FO-1 method: (i) does not consider the agent energy when solving its RHCPs (i.e., RHCPs do not involve an OCP) and (ii) uses actual second-order agents whose controllers constrained to approximate first-order agent behaviors. We conclude our discussion about the FO-1 method with the following remark - which will motivate us to refine the proposed FO-1 method.

\begin{remark}
Notice that the optimal second-order agent control $u_a^*(t)$ \eqref{Eq:SOMethodProfiles} (in the SO method) decreases linearly and includes a zero-crossing point (at $t=(t_o+t_f)/2$). In practice, such a control input can be difficult to realize due to dead-bands in response of the used motion actuator (near $u_a(t) = 0$). In contrast, the bang-zero-bang type of a control input required when controlling a second-order agent so that it approximates a first-order agent (like in the FO-1 method) - can be conveniently implemented.  
\end{remark}

\paragraph{\textbf{FO-2 Method}} 
Even though we have proposed a reasonable and consistent way to select the parameters involved in the FO-1 method (i.e., $u_{F1}$ and $v_{m1}$), it is clear that such an approach is agnostic to the agent energy consumption. To address this concern, we next propose the FO-2 method, which as shown in Fig. \ref{Fig:FirstOrderAgentStateTrajectory}, is identical to the FO-1 method in many ways except for its choice of acceleration/deceleration magnitude $u_{F2}$ and average velocity value $v_{m2}$. In particular, as opposed to selecting $u_{F2},v_{m2}$ according to \eqref{Eq:FO1MethodParameters}, here, an energy-optimized approach is followed.


\begin{theorem}\label{Th:FO2MethodOptimalControls}
Under the FO-2 method, for a fixed average velocity $v_{m2} = y_{ij}/\rho_{F2}$, on a trajectory segment $(i,j)\in\mathcal{E}$, the optimal agent energy consumption is $E_{F2} = \frac{27v_{m2}^3}{2y_{ij}}$ and it is achieved when $u_{F2}=\frac{9v_{m2}^2}{2y_{ij}}$ and $v_{F2}= \frac{3v_{m2}}{2}$ are used.
\end{theorem}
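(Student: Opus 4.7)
The plan is to parameterize the trapezoidal tangential-velocity profile of the FO-2 method by the two free variables $u_{F2}$ and $v_{F2}$, use the two kinematic constraints to eliminate one of them (thanks to $v_{m2}$ being fixed), compute the energy integral as a one-variable function, and minimize it by standard calculus.

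More concretely, first I would write the three-phase profile: an acceleration phase of duration $t_1 = v_{F2}/u_{F2}$ covering distance $v_{F2}^2/(2u_{F2})$, a constant-velocity phase of duration $t_2$ covering $v_{F2}\,t_2$, and a symmetric deceleration phase identical to the acceleration phase. The two constraints are total distance $= y_{ij}$ and total time $= \rho_{F2} = y_{ij}/v_{m2}$. Eliminating $t_2$ between them yields the relation
\begin{equation*}
\frac{y_{ij}}{v_{m2}} \;=\; \frac{v_{F2}}{u_{F2}} + \frac{y_{ij}}{v_{F2}},
\end{equation*}
so that
\begin{equation*}
u_{F2} \;=\; \frac{v_{F2}^{\,2}\,v_{m2}}{y_{ij}(v_{F2}-v_{m2})}.
\end{equation*}

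Second, I would compute $J_a$ from \eqref{Eq:ContributionAgent}, observing that $u_a(t)=0$ on the constant-velocity phase and $u_a(t)=\pm u_{F2}$ on the two outer phases, each of length $t_1=v_{F2}/u_{F2}$. This immediately gives $E_{F2}=2u_{F2}^{\,2}\,t_1 = 2u_{F2}v_{F2}$, and substituting the expression for $u_{F2}$ above yields
\begin{equation*}
E_{F2}(v_{F2}) \;=\; \frac{2\,v_{m2}}{y_{ij}}\cdot\frac{v_{F2}^{\,3}}{v_{F2}-v_{m2}}, \qquad v_{F2}>v_{m2}.
\end{equation*}

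Third, I would minimize $E_{F2}(v_{F2})$ over $v_{F2}\in(v_{m2},\infty)$ by differentiating: the numerator of the derivative simplifies to $v_{F2}^{\,2}(2v_{F2}-3v_{m2})$, whose unique positive zero in the admissible range is $v_{F2}^{*}=3v_{m2}/2$. A quick sign check on the derivative (negative for $v_{F2}<3v_{m2}/2$, positive for $v_{F2}>3v_{m2}/2$) establishes that this is indeed the global minimizer on the admissible interval. Back-substitution then gives $u_{F2}=9v_{m2}^{\,2}/(2y_{ij})$ and $E_{F2}=27v_{m2}^{\,3}/(2y_{ij})$, as claimed. I do not anticipate a serious obstacle here; the only subtlety is checking the domain $v_{F2}>v_{m2}$ (needed for a nonnegative constant-velocity phase $t_2\ge 0$, equivalently $y_{ij}\ge v_{F2}^{\,2}/u_{F2}$) and confirming that the stationary point $3v_{m2}/2$ lies strictly inside it, which it does.
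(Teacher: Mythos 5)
Your proposal is correct and follows essentially the same route as the paper's proof: impose the trapezoidal-profile distance/time constraints, note $E_{F2}=2u_{F2}v_{F2}$, eliminate $u_{F2}$ to get a one-variable energy function, and minimize by calculus to find $v_{F2}^*=3v_{m2}/2$. The only differences are cosmetic (you work in $v_{m2}$ throughout rather than in $\rho_{F2}$ with a final substitution, and you make the admissible domain $v_{F2}>v_{m2}$ and the global-minimum sign check explicit, which the paper leaves implicit).
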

\begin{proof}
Since the total distance traveled by the (FO-2) agent over the period $[0,\rho_{F2}]$ is $y_{ij}$, we can state that
\begin{equation}\label{Eq:FO2MethodProofStep1}
    \frac{1}{2}\left(\rho_{F2} + \left(\rho_{F2}-2\frac{v_{F2}}{u_{F2}}\right)\right)v_{F2} = y_{ij}.
\end{equation}
Over the same period, the corresponding total agent energy requirement (denoted as $E_{F2}$) can be evaluated by integrating the square of the acceleration profile used. This gives
\begin{equation}\label{Eq:FO2MethodProofStep2}
    E_{F2} = \int_0^{\rho_{F2}}u^2(t)dt = 2u_{F2}^2 \frac{v_{F2}}{u_{F2}} = 2u_{F2}v_{F2}. 
\end{equation}
This expression can be further simplified using \eqref{Eq:FO2MethodProofStep1} to obtain
\begin{equation}\label{Eq:FO2MethodProofStep3}
    E_{F2} = \frac{2v_{F2}^3}{v_{F2}\rho_{F2}-y_{ij}}.
\end{equation}
Recall that both $y_{ij}$ and $\rho_{F2}(=y_{ij}/v_{m2})$ are fixed in this case. Therefore, $E_{F2}$ in \eqref{Eq:FO2MethodProofStep3} is a function of (only) $v_{F2}$. Thus, we can use calculus to determine the choice of $v_{F2}$ that minimizes $E_{F2}$. This (and back substitution) reveals:
\begin{equation}\label{Eq:FO2MethodProofStep4}
\begin{gathered}
v_{F2} = \frac{3y_{ij}}{2\rho_{F2}},\ \ 
u_{F2} = \frac{9y_{ij}}{2\rho_{F2}^2},\ \ 
E_{F2} = \frac{27y_{ij}^2}{2\rho_{F2}^3}.
\end{gathered}
\end{equation}
Finally, this proof can be completed by replacing $\rho_{F2}$ terms with $y_{ij}/v_{m2}$ in each of the above expressions.
\end{proof}

\begin{corollary}\label{CO:FO2}
If $v_{m2}$ in the FO-2 method is such that $\frac{y_{ij}}{v_{m2}} = \rho_{SO}$ (i.e., $\rho_{F2} = \rho_{SO}$), then $$v_{F2}=v_{SO},\ \ u_{F2}=\frac{3}{4}u_{SO},\ \ E_{F2} = \frac{9}{8}E_{SO}.$$
\end{corollary}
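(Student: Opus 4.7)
The plan is to prove this corollary by direct substitution, since both sides of each equality are already available in closed form from previously established results. Specifically, Theorem \ref{Th:FO2MethodOptimalControls} (or equivalently the intermediate expressions \eqref{Eq:FO2MethodProofStep4} in its proof) gives $v_{F2}$, $u_{F2}$, and $E_{F2}$ as explicit functions of $y_{ij}$ and $\rho_{F2}$, while Theorem \ref{Th:SOMethodOptimalControls} (via \eqref{Eq:SOMethodMaxValues}) provides $v_{SO}$, $u_{SO}$, and $E_{SO}$ as explicit functions of $y_{ij}$ and $\rho_{SO}$. Hence the hypothesis $\rho_{F2}=\rho_{SO}$ reduces the corollary to an algebraic check of three ratios.

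Concretely, I would proceed in three short steps. First, I would recall from \eqref{Eq:FO2MethodProofStep4} the identities $v_{F2} = \tfrac{3y_{ij}}{2\rho_{F2}}$, $u_{F2} = \tfrac{9y_{ij}}{2\rho_{F2}^2}$, and $E_{F2} = \tfrac{27y_{ij}^2}{2\rho_{F2}^3}$, and from \eqref{Eq:SOMethodMaxValues} the identities $v_{SO} = \tfrac{3y_{ij}}{2\rho_{SO}}$, $u_{SO} = \tfrac{6y_{ij}}{\rho_{SO}^2}$, and $E_{SO} = \tfrac{12y_{ij}^2}{\rho_{SO}^3}$. Second, I would substitute $\rho_{F2} = \rho_{SO}$ into the three FO-2 expressions. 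Third, I would form the ratios $v_{F2}/v_{SO}$, $u_{F2}/u_{SO}$, and $E_{F2}/E_{SO}$ to read off the factors $1$, $\tfrac{3}{4}$, and $\tfrac{9}{8}$, respectively.

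There is no substantive obstacle here; the result is essentially a sanity check of the relative scalings between the energy-optimal bang--zero--bang profile (FO-2) and the energy-optimal smooth profile (SO) when both are constrained to traverse the same segment length $y_{ij}$ in the same time $\rho_{SO}$. The only care needed is to verify that the constants in \eqref{Eq:SOMethodMaxValues} and in \eqref{Eq:FO2MethodProofStep4} are both correctly stated in terms of $y_{ij}$ and the relevant transit time before taking ratios. Since both theorems have already been proved in the paper, the corollary follows in a few lines of arithmetic.
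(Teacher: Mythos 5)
Your proposal is correct and matches the paper's own proof, which likewise obtains the result by directly comparing \eqref{Eq:SOMethodMaxValues} with \eqref{Eq:FO2MethodProofStep4} after setting $\rho_{F2}=\rho_{SO}$. The three ratios $1$, $\tfrac{3}{4}$, and $\tfrac{9}{8}$ check out exactly as you describe.
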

\begin{proof}
This result directly follow from comparing Theorem \ref{Th:SOMethodOptimalControls} \eqref{Eq:SOMethodMaxValues} with \eqref{Eq:FO2MethodProofStep4}. 
\end{proof}

Next, we use Theorem \ref{Th:FO2MethodOptimalControls} to develop energy-optimized choices for the $v_{m2}$ and $u_{F2}$ parameters of the FO-2 method. However, similar to before, we also use $v_{SO}^{max}$ and $u_{SO}^{max}$ values (empirical) as known inputs in this process to make sure the maximum velocity and acceleration values resulting from the FO-2 method are identical or as close as possible to those of the SO method.

Note that the optimal choices of $v_{F2}$ and $u_{F2}$ given in Theorem \ref{Th:FO2MethodOptimalControls} are dependent on both $(i,j)\in\mathcal{E}$ and $v_{m2}$. Therefore, let us denote those as functions:
\begin{equation}\label{Eq:FO2Parameter0}
v_{F2}^{ij}(v_{m2}) = \frac{3}{2}v_{m2},\ \ 
u_{F2}^{ij}(v_{m2}) = \frac{9v_{m2}^2}{2y_{ij}}.
\end{equation}
Now, we propose to select the parameter $v_{m2}$ based on the above two relationships and the given $v_{SO}^{max}$, $u_{SO}^{max}$ values as
\begin{equation}\label{Eq:FO2Parameter}
\begin{alignedat}{4}
& v_{m2} = & \underset{\makebox[2.5cm]{\footnotesize $v_m>0,\ (i,j)\in\mathcal{E}$}}{\arg\max}
& \quad && v_{F2}^{ij}(v_{m}) & \\
& & \makebox[2.5cm]{subject to} & && v_{F2}^{ij}(v_{m})\leq v_{SO}^{max}, &\\
& & & && u_{F2}^{ij}(v_{m})\leq u_{SO}^{max}. &
\end{alignedat}
\end{equation}

\begin{proposition}\label{Pr:FO2}
The $v_{m2}$ expression given in \eqref{Eq:FO2Parameter} can be simplified into the form
\begin{equation}
    v_{m2} = \min\ \{\frac{1}{3}\sqrt{2y_{min} u_{SO}^{max}},\ \frac{2}{3}v_{SO}^{max}\}
\end{equation}
where $y_{min} = \min_{(i,j)\in\mathcal{E}}y_{ij}$.
\end{proposition}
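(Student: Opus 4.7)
The plan is to unpack the objective and constraints of \eqref{Eq:FO2Parameter} using the explicit forms of $v_{F2}^{ij}(v_m)$ and $u_{F2}^{ij}(v_m)$ in \eqref{Eq:FO2Parameter0}, and then observe that the objective is edge-independent while only the second constraint depends on $(i,j)$.

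First, I would substitute $v_{F2}^{ij}(v_m) = \tfrac{3}{2}v_m$ into both the objective and the first constraint. The objective becomes $\tfrac{3}{2}v_m$, which is strictly increasing in $v_m$ and independent of the edge index $(i,j)$, so maximizing it is equivalent to maximizing $v_m$. The first constraint collapses to $\tfrac{3}{2}v_m \leq v_{SO}^{max}$, i.e., $v_m \leq \tfrac{2}{3}v_{SO}^{max}$, which is also edge-independent.

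Next, substituting $u_{F2}^{ij}(v_m) = \tfrac{9v_m^2}{2y_{ij}}$ into the second constraint gives $\tfrac{9v_m^2}{2y_{ij}} \leq u_{SO}^{max}$, which rearranges to $v_m \leq \tfrac{1}{3}\sqrt{2y_{ij}u_{SO}^{max}}$. Because (as in the analogous Proposition \ref{Pr:FO1}) $v_{m2}$ is a single scalar parameter that must be applied to every trajectory segment, this bound must in fact hold simultaneously for every $(i,j)\in\mathcal{E}$. Since $\sqrt{2y_{ij}u_{SO}^{max}}$ is monotonically increasing in $y_{ij}$, taking the tightest such bound reduces to $v_m \leq \tfrac{1}{3}\sqrt{2y_{min}u_{SO}^{max}}$ with $y_{min} \triangleq \min_{(i,j)\in\mathcal{E}} y_{ij}$.

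Finally, since $v_m$ is to be maximized and must satisfy both upper bounds, the optimum is attained at the tighter of the two, yielding $v_{m2} = \min\{\tfrac{2}{3}v_{SO}^{max},\, \tfrac{1}{3}\sqrt{2y_{min}u_{SO}^{max}}\}$, as claimed. The main subtlety---and the step most likely to cause confusion rather than technical difficulty---is the interpretation of the joint argmax over $(v_m,(i,j))$ in \eqref{Eq:FO2Parameter}: because $v_{m2}$ is intended as one parameter used uniformly across $\mathcal{E}$, the edge-indexed constraints are implicitly universally quantified over $(i,j)\in\mathcal{E}$, and this universal quantification is exactly what turns the inner $(i,j)$-selection into the $y_{min}$ seen in the final expression. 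No calculus is required; the argument is a direct algebraic manipulation of the two linear/quadratic-in-$v_m$ inequalities.
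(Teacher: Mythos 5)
Your proof is correct and follows essentially the same route as the paper, which omits the details but states that the argument mirrors the proof of Proposition~\ref{Pr:FO1}: establish monotonicity of the objective in $v_m$, convert each edge-indexed constraint into an upper bound on $v_m$ holding for all $(i,j)\in\mathcal{E}$ (yielding the $y_{min}$ term), and take the largest feasible $v_m$. Your explicit handling of the universal quantification over edges matches the paper's treatment in Appendix~\ref{App:Lm_FO1_Proof}, and the algebra checks out.
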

\begin{proof}
The proof follows the same steps as that of Proposition \ref{Pr:FO1} and is, therefore, omitted.
\end{proof}

We point out that even though the average velocity $v_{m2}$ computed above is used commonly across all the trajectory segments, the acceleration/deceleration level of an agent in a trajectory segment $(i,j)$ has to be selected as $u_{F2}^{ij}(v_{m2})$ \eqref{Eq:FO2Parameter0} so as to \emph{optimize} the agent energy consumption. Hence, the overall maximum acceleration/deceleration level observed in the FO-2 method is (via \eqref{Eq:FO2Parameter0})
\begin{equation}
u_{F2} = \max_{(i,j)\in\mathcal{E}}\  u_{F2}^{ij}(v_{m2}).  
\end{equation}

Consider the scenario where $\rho_{F2}=\rho_{SO}$ on a certain trajectory segment. In such a case, sensing objective wise, both FO-2 and SO methods perform equally. However, Corollary \ref{CO:FO2} states that energy objective wise, the FO-2 method shows a $12.5\%$ loss (i.e., a higher energy consumption) compared to the SO method. Moreover, recall that the FO-2 method does not consider the energy expenditure when solving its RHCPs (i.e., no OCP is involved, similar to FO-0 and FO-1 methods). To mitigate these two obvious disadvantages, we next propose the FO-3 method - where we try to optimize the energy objective further compromising the sensing objective in an OCP.

\paragraph{\textbf{FO-3 Method}}

As shown in Fig. \ref{Fig:FirstOrderAgentStateTrajectory}, the FO-3 method has similarly shaped agent state trajectories to FO-1 and FO-2 methods. However, as we will see next, the FO-3 method does not involve any parameter that needs to be selected based on external information like $u_{SO}^{max}$ and $v_{SO}^{max}$. 

On the other hand, note that in the FO-2 method, the optimal agent energy consumption  $E_{F2}$ \eqref{Eq:FO2MethodProofStep4} is inversely proportional to the transit-time $\rho_{F2}$. Motivated by this, the FO-3 method proposes to use a larger transit-time $\rho_{F3} \geq \rho_{F2}$ (see Fig. \ref{Fig:FirstOrderAgentStateTrajectory}) compromising the sensing objective so as to achieve a better (lower) energy objective. However, to make this trade-off a profitable one (in terms of the total objective \eqref{Eq:MainObjective}) we need to use the OCP \eqref{Eq:OCP}.

Note that we assume $J_{sH}^*(\rho_{ij})$ (i.e., the sensing objective component of the OCP \eqref{Eq:OCP}) as a known function in this section. Therefore, the sensing objective component of OCP \eqref{Eq:OCP} under the FO-3 method can be written as $J_{sH}^*(\rho_{F3})$. On the other hand, under the FO-3 method, the energy objective component of the OCP can be written as $J_{eH}(\{u_a(t)\}) = E_{F3} \triangleq \frac{27y_{ij}^2}{2\rho_{F3}^3}$ (using $E_{F2}$ in \eqref{Eq:FO2MethodProofStep4} and replacing $\rho_{F2}$ with $\rho_{F3}$). Hence, the objective function of the OCP \eqref{Eq:OCP} under the FO-3 method is 
\begin{equation}\label{Eq:FO3CompositeObjective}
    J_H = \alpha_H E_{F3} + J_{sH}^*(\rho_{F3}) = \alpha_H \frac{27y_{ij}^2}{2\rho_{F3}^3} + J_{sH}^*(\rho_{F3}).
\end{equation}

\begin{theorem}\label{Th:FO3MethodOptimalControls}
Under the FO-3 method, the optimal transit-time is $\rho_{ij} = \rho_{F3}$ that satisfies the equation:
\begin{equation}\label{Eq:FO3MethodOptimalControls1}
    \rho_{ij}^4\,\frac{dJ_{sH}^*(\rho_{ij})}{d\rho_{ij}} = \frac{81}{2}\alpha y_{ij}^2.
\end{equation}
The corresponding optimal values of $v_{F3}$, $u_{F3}$ and $E_{F3}$ are 
\begin{equation}\label{Eq:FO3MethodOptimalControls2}
v_{F3} = \frac{3y_{ij}}{2\rho_{F3}},\ \ 
u_{F3} = \frac{9y_{ij}}{2\rho_{F3}^2},\ \ 
E_{F3} = \frac{27y_{ij}^2}{2\rho_{F3}^3},
\end{equation}
i.e., 
$
v_{F3} = \frac{1}{k}v_{SO},\ 
u_{F3} = \frac{3}{4k^2}u_{SO},\ 
E_{F3} = \frac{9}{8k^3}E_{SO}
$
where $k\triangleq\frac{\rho_{F3}}{\rho_{SO}}$.
\end{theorem}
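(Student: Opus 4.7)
The plan is to treat \eqref{Eq:FO3CompositeObjective} as a univariate optimization in the scalar $\rho_{F3}$ and invoke a first-order necessary condition, then read off the remaining quantities by specializing Theorem \ref{Th:FO2MethodOptimalControls}. The key observation is that Theorem \ref{Th:FO2MethodOptimalControls} already carried out the inner optimization over the bang-zero-bang profile parameters $(v_{F2},u_{F2})$ for a \emph{given} average velocity, so the only degree of freedom that remains in the FO-3 method is the transit time $\rho_{F3}$.

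First I would differentiate \eqref{Eq:FO3CompositeObjective} with respect to $\rho_{F3}$, which gives
\begin{equation*}
\frac{dJ_H}{d\rho_{F3}} \;=\; -\frac{81\,\alpha_H\, y_{ij}^2}{2\,\rho_{F3}^4} \;+\; \frac{dJ_{sH}^{*}(\rho_{F3})}{d\rho_{F3}}.
\end{equation*}
Setting this to zero, using $\alpha_H = \alpha$, and rearranging immediately yields the stationarity condition \eqref{Eq:FO3MethodOptimalControls1}. Next, for a fixed $\rho_{F3}$ the FO-3 agent traces out the same trapezoidal (bang-zero-bang) profile as the FO-2 agent with prescribed average velocity $v_m = y_{ij}/\rho_{F3}$; hence the energy-optimal $(v_{F3},u_{F3},E_{F3})$ are obtained by substituting $v_{m2}\!\leftarrow\! y_{ij}/\rho_{F3}$ into the formulas of Theorem \ref{Th:FO2MethodOptimalControls}, producing \eqref{Eq:FO3MethodOptimalControls2} verbatim. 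The three ratio identities then follow by dividing \eqref{Eq:FO3MethodOptimalControls2} term by term by the SO-method expressions in \eqref{Eq:SOMethodMaxValues} and substituting $k = \rho_{F3}/\rho_{SO}$; this is straightforward algebra giving $1/k$, $3/(4k^2)$ and $9/(8k^3)$ respectively.

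The one subtle step I would flag is that the first-order condition is only necessary, so one should justify that the stationary point indeed minimizes $J_H$. This follows from two structural observations: the energy term $\tfrac{27\,\alpha_H\, y_{ij}^2}{2\rho_{F3}^3}$ is strictly convex and decreasing in $\rho_{F3}$ with limit $0$ as $\rho_{F3}\to\infty$, while $J_{sH}^{*}(\rho_{F3})$ inherited from \eqref{Eq:OP3FormalArgMinSolution}--\eqref{Eq:OP1FormalArgMinSolution} is non-decreasing in $\rho_{F3}$ on each piecewise branch (longer transit-times accumulate more neighborhood uncertainty). Thus $J_H \to \infty$ as $\rho_{F3}\to 0^+$ and grows linearly or faster as $\rho_{F3}\to\infty$, so an interior minimizer satisfying \eqref{Eq:FO3MethodOptimalControls1} exists. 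I would mention this briefly rather than verify convexity branch-by-branch, since the piecewise structure of $J_{sH}^{*}$ is already handled (case-by-case) in the broader RHCP solution pipeline outlined in Fig. \ref{Fig:OverviewOfTheRHCPSolution}.
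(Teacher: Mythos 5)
Your proof is correct and takes essentially the same route as the paper: differentiate the composite objective \eqref{Eq:FO3CompositeObjective} in $\rho_{F3}$ to get \eqref{Eq:FO3MethodOptimalControls1}, then specialize Theorem \ref{Th:FO2MethodOptimalControls} with $\rho_{F2}$ replaced by $\rho_{F3}$ and compare against \eqref{Eq:SOMethodMaxValues}. Your closing remark justifying that the stationary point is indeed a minimizer is a small, sensible addition that the paper's proof omits.
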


\emph{Proof: }
The OCP objective $J_H$ given in \eqref{Eq:FO3CompositeObjective} depends only on the choice of $\rho_{F3}$. Therefore, the optimal $\rho_{F3}$ value that minimizes $J_H$ can be found using the equation $\frac{dJ_H}{d\rho_{F3}}=0$, which translates into \eqref{Eq:FO3MethodOptimalControls1}. Since both FO-2 and FO-3 methods assume structurally similar velocity profiles, we still can use Theorem \ref{Th:FO2MethodOptimalControls} in the context of FO-3 after replacing $\rho_{F2}$ with $\rho_{F3}$. In this way, \eqref{Eq:FO3MethodOptimalControls2} (and the remaining results) can be obtained using \eqref{Eq:FO2MethodProofStep4} (and Theorem \ref{Th:SOMethodOptimalControls} \eqref{Eq:SOMethodMaxValues}).
\hfill $\blacksquare$

Note that even though equations \eqref{Eq:FO3MethodOptimalControls1} and \eqref{Eq:OptimumTransitTimeGeneral} are structurally similar, their subtle difference (in the coefficient on the RHS) causes the FO-3 method to have a different transit-time value compared to the SO method. Based on the difference between \eqref{Eq:FO3MethodOptimalControls1} and \eqref{Eq:OptimumTransitTimeGeneral}, we can anticipate $\rho_{F3}>\rho_{SO}$ (also, as we intended in the first place). In such a case, the parameter $k$ defined in Theorem \ref{Th:FO3MethodOptimalControls} follows $k>1$. This implies that (from Theorem \ref{Th:FO3MethodOptimalControls}) $v_{F3}<v_{SO}$ and $u_{F3}<u_{SO}$, i.e., the FO-3 method requires smaller velocity and acceleration values compared to the SO method.

As shown in Appendix \ref{SubSec:SOAgentsWithConstraints} and \ref{SubSec:FOAgentsWithConstraints}, this analysis regarding the optimal (approximate) first and second-order agent behaviors on trajectory segments can be extended effortlessly for scenarios where agents have additional velocity and acceleration constraints.

\section{Numerical Results}
\label{Sec:ExtentionsAndExamples}

In this section, we first explore the nature of individual \textbf{RHCP3} and \textbf{RHCP1} solutions presented in Section \ref{Sec:SolutionToTheRHCPs} under second-order agents (i.e., SO method). 
Then, we compare the performance metrics $J_T, J_e$ and $J_s$ defined in \eqref{Eq:MainObjective} obtained for several different PMN problem configurations (shown in Fig. \ref{Fig:FinalConditions}) under different agent control methods: SO, FO-1, FO-2 and FO-3.

\subsection{Numerical Results for a \textbf{RHCP3}}
\label{SubSec:NumResRHCP3}

To numerically evaluate quantities relating to a \textbf{RHCP3} and its solution, we chose target parameters as $A_m = 1$, $B_m = 10, \forall m\in\mathcal{N}_i$ with $\vert \bar{\mathcal{N}}_i \vert = 3$. Moreover, the \emph{default} values of $\alpha,\,R_j(t_s),\,y_{ij}$ and $H$ are chosen respectively as $0.5,\,100,\,50$ and $250$. Figures \ref{Fig:ORHCP3SolWithAlpha} - \ref{Fig:ORHCP3SolWithy_ij} respectively show how the \textbf{RHCP3} solution (i.e., $v_a(t), u_a(t), J_{sH}, J_{eH}$ and $J_H$) changes when the three parameters $\alpha, R_j(t_s)$ and $y_{ij}$ are varied.

\begin{figure}[!b]
     \centering
     \begin{subfigure}[b]{0.32\columnwidth}
         \centering
         \includegraphics[width = \textwidth]{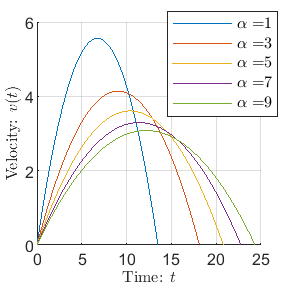}
         \caption{Tan. Vel.:$v_a(t)$}
         
     \end{subfigure} 
     \hfill
     \begin{subfigure}[b]{0.32\columnwidth}
         \centering
         \includegraphics[width = \textwidth]{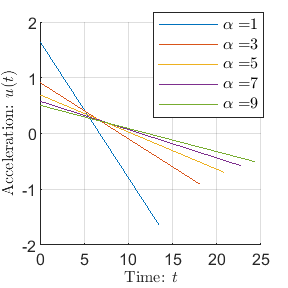}
         \caption{Tan. Acc.: $u_a(t)$}
         
     \end{subfigure} 
     \hfill
     \begin{subfigure}[b]{0.32\columnwidth}
         \centering
         \includegraphics[width = \textwidth]{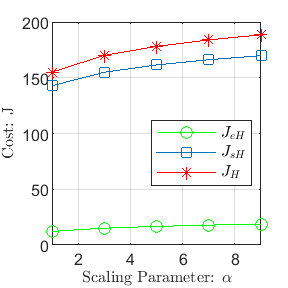}
         \caption{Cost: $J_{sH}, J_{eH}, J_{H}$}
         
     \end{subfigure}
     \caption{\textbf{RHCP3} solution under different weight factors (i.e., $\alpha$ in\eqref{Eq:MainObjective}) values.}
    \label{Fig:ORHCP3SolWithAlpha}
\end{figure}

\begin{figure}[!b]
     \centering
     \begin{subfigure}[b]{0.32\columnwidth}
         \centering
         \includegraphics[width = \textwidth]{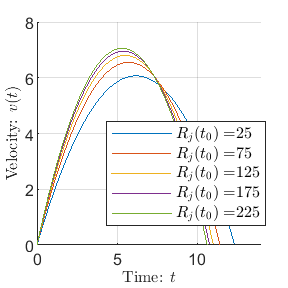}
         \caption{Tan. Vel.:$v_a(t)$}
         
     \end{subfigure} 
     \hfill
     \begin{subfigure}[b]{0.32\columnwidth}
         \centering
         \includegraphics[width = \textwidth]{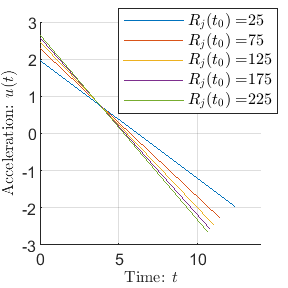}
         \caption{Tan. Acc.: $u_a(t)$}
         
     \end{subfigure} 
     \hfill
     \begin{subfigure}[b]{0.32\columnwidth}
         \centering
         \includegraphics[width = \textwidth]{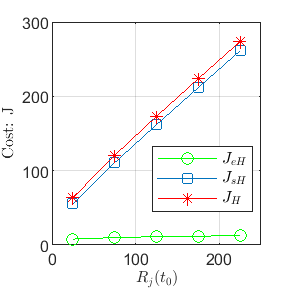}
         \caption{Cost: $J_{sH}, J_{eH}, J_{H}$}
         
     \end{subfigure}
     \caption{\textbf{RHCP3} solution under different initial target uncertainty (i.e., $R_j(t_s)$) values.}
    \label{Fig:ORHCP3SolWithR_j}
\end{figure}

\begin{figure}[!b]
     \centering
     \begin{subfigure}[b]{0.32\columnwidth}
         \centering
         \includegraphics[width = \textwidth]{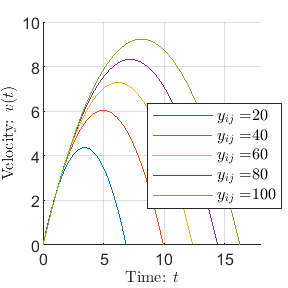}
         \caption{Tan. Vel.:$v_a(t)$}
         
     \end{subfigure} 
     \hfill
     \begin{subfigure}[b]{0.32\columnwidth}
         \centering
         \includegraphics[width = \textwidth]{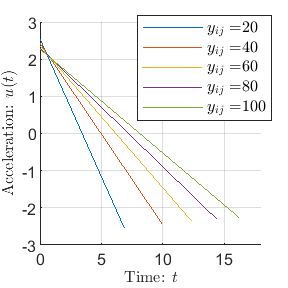}
         \caption{Tan. Acc.: $u_a(t)$}
         
     \end{subfigure} 
     \hfill
     \begin{subfigure}[b]{0.32\columnwidth}
         \centering
         \includegraphics[width = \textwidth]{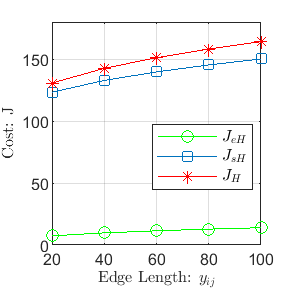}
         \caption{Cost: $J_{sH}, J_{eH}, J_{H}$}
         
     \end{subfigure}
     \caption{\textbf{RHCP3} solution under different trajectory segment length (i.e., $y_{ij}$) values.}    \label{Fig:ORHCP3SolWithy_ij}
\end{figure}

Figure \ref{Fig:ORHCP3SolWithAlpha} confirms that by increasing the weight factor $\alpha$ (i.e., by giving more weight to the energy objective), we can constrain the agent tangential velocity and acceleration profiles. A converse behavior can be seen in Fig. \ref{Fig:ORHCP3SolWithR_j} with respect to the next-visit target $j$'s initial uncertainty $R_j(t_s)$. In particular, when $R_j(t_s)$ is high, the agent is required to arrive at target $j$ quickly (resulting in high tangential velocity and acceleration levels). In contrast, Fig. \ref{Fig:ORHCP3SolWithy_ij} reveals that when the trajectory segment length $y_{ij}$ is varied, the agent may not try to significantly regulate: (i) the arrival time at target $j$ (i.e., the transit-time $\rho_{ij}=t_f$) or (ii) the magnitude of the maximum tangential acceleration.

\subsection{Numerical Results for a \textbf{RHCP1}}

Similar to before, to numerically evaluate quantities relating to a \textbf{RHCP1} and its solution, we use the same parameter values mentioned before, along with an additional \emph{default} (initial) value $R_i(t_s)=50$. Figures \ref{Fig:ORHCP1SolWithAlpha} - \ref{Fig:ORHCP1SolWithR_i} respectively show how the \textbf{RHCP1} solution (i.e., $v_a(t), u_a(t), J_{sH}, J_{eH}$ and $J_H$) changes when the four parameters $\alpha, R_j(t_s), y_{ij}$ and $R_i(t_o)$ are varied.

\begin{figure}[!b]
     \centering
     \begin{subfigure}[b]{0.32\columnwidth}
         \centering
         \includegraphics[width = \textwidth]{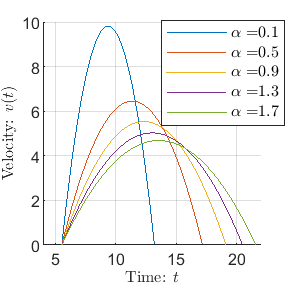}
         \caption{Tan. Vel.:$v_a(t)$}
         
     \end{subfigure} 
     \hfill
     \begin{subfigure}[b]{0.32\columnwidth}
         \centering
         \includegraphics[width = \textwidth]{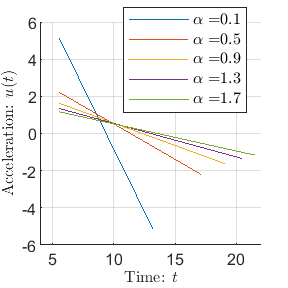}
         \caption{Tan. Acc.: $u_a(t)$}
         
     \end{subfigure} 
     \hfill
     \begin{subfigure}[b]{0.32\columnwidth}
         \centering
         \includegraphics[width = \textwidth]{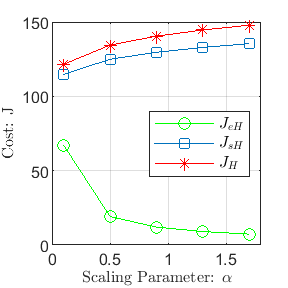}
         \caption{Cost: $J_{sH}, J_{eH}, J_{H}$}
         
     \end{subfigure}
     \caption{\textbf{RHCP1} solution under different $\alpha$ (in \eqref{Eq:MainObjective})  values.}
    \label{Fig:ORHCP1SolWithAlpha}
\end{figure}

\begin{figure}[!b]
     \centering
     \begin{subfigure}[b]{0.32\columnwidth}
         \centering
         \includegraphics[width = \textwidth]{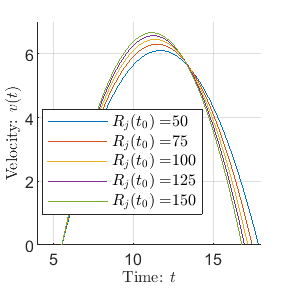}
         \caption{Tan. Vel.:$v_a(t)$}
         
     \end{subfigure} 
     \hfill
     \begin{subfigure}[b]{0.32\columnwidth}
         \centering
         \includegraphics[width = \textwidth]{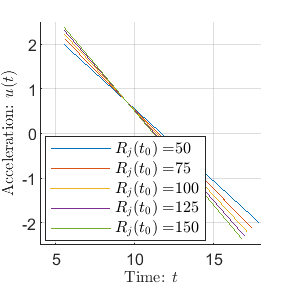}
         \caption{Tan. Acc.: $u_a(t)$}
         
     \end{subfigure} 
     \hfill
     \begin{subfigure}[b]{0.32\columnwidth}
         \centering
         \includegraphics[width = \textwidth]{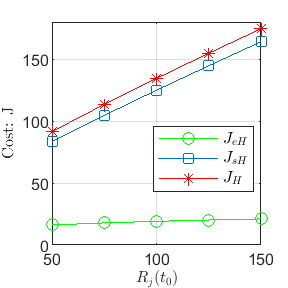}
         \caption{Cost: $J_{sH}, J_{eH}, J_{H}$}
         
     \end{subfigure}
     \caption{\textbf{RHCP1} solution under different $R_j(t_s)$ values.}
    \label{Fig:ORHCP1SolWithR_j}
\end{figure}

\begin{figure}[!t]
     \centering
     \begin{subfigure}[b]{0.32\columnwidth}
         \centering
         \includegraphics[width = \textwidth]{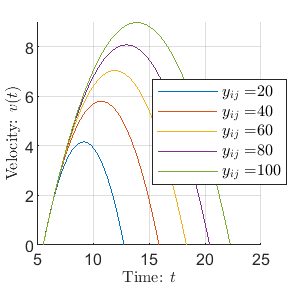}
         \caption{Tan. Vel.:$v_a(t)$}
         
     \end{subfigure} 
     \hfill
     \begin{subfigure}[b]{0.32\columnwidth}
         \centering
         \includegraphics[width = \textwidth]{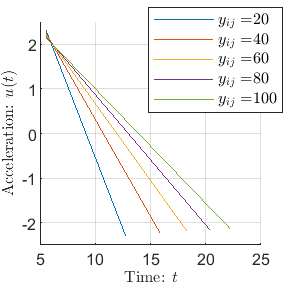}
         \caption{Tan. Acc.: $u_a(t)$}
         
     \end{subfigure} 
     \hfill
     \begin{subfigure}[b]{0.32\columnwidth}
         \centering
         \includegraphics[width = \textwidth]{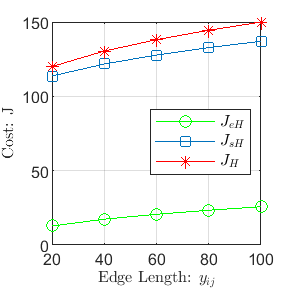}
         \caption{Cost: $J_{sH}, J_{eH}, J_{H}$}
         
     \end{subfigure}
     \caption{\textbf{RHCP1} solution under different $y_{ij}$ values.}    \label{Fig:ORHCP1SolWithy_ij}
\end{figure}

\begin{figure}[!t]
     \centering
     \begin{subfigure}[b]{0.32\columnwidth}
         \centering
         \includegraphics[width = \textwidth]{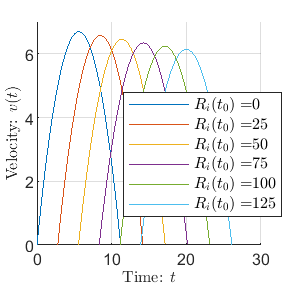}
         \caption{Tan. Vel.:$v_a(t)$}
         
     \end{subfigure} 
     \hfill
     \begin{subfigure}[b]{0.32\columnwidth}
         \centering
         \includegraphics[width = \textwidth]{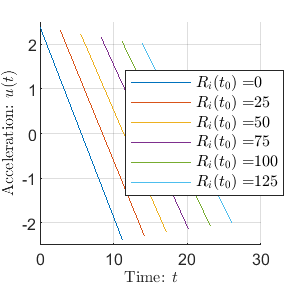}
         \caption{Tan. Acc.: $u_a(t)$}
         
     \end{subfigure} 
     \hfill
     \begin{subfigure}[b]{0.32\columnwidth}
         \centering
         \includegraphics[width = \textwidth]{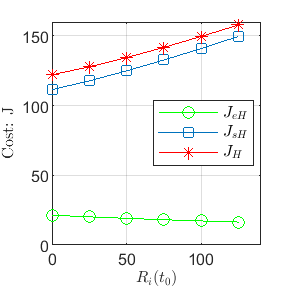}
         \caption{Cost: $J_{sH}, J_{eH}, J_{H}$}
         
     \end{subfigure}
     \caption{\textbf{RHCP1} solution under different $R_i(t_s)$ values.}
    \label{Fig:ORHCP1SolWithR_i}
\end{figure}

The \textbf{RHCP1} solution properties illustrated in Figs. \ref{Fig:ORHCP1SolWithAlpha}-\ref{Fig:ORHCP1SolWithy_ij} are identical to those of the \textbf{RHCP3} (shown in Figs. \ref{Fig:ORHCP3SolWithAlpha}-\ref{Fig:ORHCP3SolWithy_ij}), except for the fact that now $t_o>t_s=0$ (recall that $t_o$ is the planned time to leave the target $i$). However, Figs. \ref{Fig:ORHCP1SolWithAlpha}-\ref{Fig:ORHCP1SolWithy_ij} imply that $t_o$ is independent of the $\alpha, R_j(t_s)$ or $y_{ij}$ value. In contrast, Fig. \ref{Fig:ORHCP1SolWithR_i} reveals that $t_o$ is directly proportional to the $R_i(t_s)$ value. Moreover, Fig. \ref{Fig:ORHCP1SolWithR_i} shows that the maximum values of tangential velocity and acceleration decreases by a  small margin when $R_i(t_s)$ is increased. This implies that the agent plans to travel less urgently when it has to do more ``sensing'' at the current target $i$.

\subsection{Overall Performance in PMN Problems}

\begin{table*}[!t]
\centering
\caption{A comparison of performance metrics: $J_e$, $J_s$, $J_T$ (defined in \eqref{Eq:MainObjective}) and  $v_{max}$, $u_{max}$ (defined in \eqref{Eq:AgentPerformanceMetrics}) observed under different agent control methods SO, FO-1, FO-2 and FO-3 for each PMN Problem Configuration (PC) shown in Fig. \ref{Fig:FinalConditions}.}
\resizebox{\textwidth}{!}{
\begin{tabular}{|c|r|r|r|r|r|r|r|r|r|r|r|r|r|r|r|r|r|r|r|r|} \hline
\multirow{2}{*}{\textbf{PC}} & \multicolumn{4}{c|}{$J_T$}                                                                                       & \multicolumn{4}{c|}{$J_e/10000$}                                                                             & \multicolumn{4}{c|}{$J_s$}                                                                                   & \multicolumn{4}{c|}{$v_{max}$}                                                                                 & \multicolumn{4}{c|}{$u_{max}$}                                                                                  \\ \cline{2-21}
                  & \multicolumn{1}{c|}{SO} & \multicolumn{1}{c|}{FO-1} & \multicolumn{1}{c|}{FO-2} & \multicolumn{1}{c|}{FO-3} & \multicolumn{1}{c|}{SO} & \multicolumn{1}{c|}{FO-1} & \multicolumn{1}{c|}{FO-2} & \multicolumn{1}{c|}{FO-3} & \multicolumn{1}{c|}{SO} & \multicolumn{1}{c|}{FO-1} & \multicolumn{1}{c|}{FO-2} & \multicolumn{1}{c|}{FO-3} & \multicolumn{1}{c|}{SO} & \multicolumn{1}{c|}{FO-1} & \multicolumn{1}{c|}{FO-2} & \multicolumn{1}{c|}{FO-3} & \multicolumn{1}{c|}{SO} & \multicolumn{1}{c|}{FO-1} & \multicolumn{1}{c|}{FO-2} & \multicolumn{1}{c|}{FO-3}  \\ \hline
PC1               & \textbf{457.4}                   & 776.6                     & 593.2                     & 457.6                     & 165.7                   & 317.9                     & 229.7                     & \textbf{163.9}                    & 103.8                   & \textbf{98.5}                      & 103.2                     & 108.0                     & 84.6                    & 84.6                      & \textbf{70.8}                     & 82.1                      & 87.9                    & 87.9                      & 87.9                      & \textbf{62.2}                       \\ \hline
PC2               & 363.1                   & 1084.5                    & 252.9                     & \textbf{359.9}                    &    155.1                     & 156.1                     & 97.2                      & \textbf{153.1}                     & \textbf{32.2}                    & 751.5                     & 45.5                      & 33.4                      & 127.5                   & 127.5                     & \textbf{66.1}                     & 123.8                     & 87.4                    & 87.4                      & 87.4                      & \textbf{61.8}                       \\ \hline
PC3               & \textbf{1013.9}                  & 2090.8                    & 1489.3                    & 1020.1                    & \textbf{445.9}                   & 955.1                     & 671.5                     & 448.2                     & 62.7                    & \textbf{53.2}                      & 56.7                      & 63.9                      & 146.2                   & 120.5                     & \textbf{85.2}                      & 135.3                     & 124.5                   & 124.5                     & 124.5                     & \textbf{88.0}                       \\ \hline
PC4               & 705.7                   & 1212.8                    & 721.4                     & \textbf{704.9}                     & 306.0                   & 542.6                     & 310.7                     & \textbf{305.1}                     & \textbf{52.8}                    & 55.3                      & 58.6                      &  54.1                         & 126.4                   & 108.6                     & \textbf{76.8}                      & 122.9                     & 113.0                   & 113.0                     & 113.0                     & \textbf{79.9}                       \\ \hline
PC5               & \textbf{614.2}                   & 713.3                     & 655.4                     & 629.9                     & \textbf{66.9}                    & 120.4                     & 86.2                      & 69.4                      & 471.4                   & \textbf{456.5}                     &  471.5                         & 481.8                     & 157.4                   & 129.5                     & \textbf{91.7}                      & 140.0                     & 144.2                   & 144.2                     & 144.2                     & \textbf{102.0}                      \\ \hline
PC6               & \textbf{432.4}                   & 784.4                     & 706.1                     & 435.7                     & 141.3                   & 314.1                     & 268.6                     & \textbf{140.5}                     & 131.0                   & \textbf{114.4}                     & 133.2                     & 135.9                     & 94.8                    & 94.8                      & \textbf{86.7}                      & 91.1                      & 75.4                    & 75.4                      & 75.4                      & \textbf{53.3}                       \\ \hline
PC7               & \textbf{493.1}                   & 1093.6                    & 738.1                     & 497.4                     & 180.9                   & 468.1                     & 298.2                     & \textbf{180.7}                     & 107.2                   &  \textbf{94.9}                         & 102.0                     & 112.0                     & 106.6                   & 104.9                     & \textbf{74.2}                      & 98.1                      & 113.4                   & 113.4                     & 113.4                     & \textbf{80.2}                       \\ \hline
PC8               & \textbf{615.5}                   & 1503.3                    & 1069.6                    & 621.0                     & \textbf{248.3}                   & 669.8                     & 465.6                     & 249.5                     & 85.8                    & \textbf{74.4}                      & 76.3                      & 88.7                      & 103.5                   & 103.5                     & \textbf{81.7}                      & 100.5                     & 124.4                   & 124.4                     & 124.4                     & \textbf{87.9}                       \\ \hline
Avg.           & \textbf{586.9}                   & 1157.4                    & 778.2                     & 590.8                     & \textbf{213.8}                   & 443.0                     & 303.5                     & 213.8                     & \textbf{130.9}                   & 212.3                     & 130.9                     & 134.7                     & 118.4                   & 109.2                     & \textbf{79.2}                      & 111.7                     & 108.8                   & 108.8                     & 108.8                     &    \textbf{76.9}                        \\ \hline
\end{tabular}
}
\label{Tab:Results}
\end{table*}

In this final section, we compare the performance metrics $J_T, J_e$ and $J_s$ defined in \eqref{Eq:MainObjective} obtained for several different PMN problem configurations using agents behaving under: (i) SO, (ii) FO-1, (iii) FO-2 and (iv) FO-3 methods. In addition to $J_T, J_e$ and $J_s$, we also use the performance metrics:
\begin{equation}\label{Eq:AgentPerformanceMetrics}
    v_{max} \triangleq \max_{\substack{a\in\mathcal{A},\,t\in[0,T]}} v_a(t)\ \mbox{ and }\ 
    u_{max} \triangleq \max_{\substack{a\in\mathcal{A},\,t\in[0,T]}}
    \vert u_a(t)\vert,
\end{equation}
to represent overall agent behaviors rendered by different agent models. The proposed RHC solution to the PMN problem (under SO, FO-1, FO-2 and FO-3 methods) has been implemented in a JavaScript based simulator available at \href{http://www.bu.edu/codes/simulations/shiran27/PersistentMonitoring/}{http:www.bu.edu/codes/ simulations/shiran27/PersistentMonitoring/}.

\begin{figure}[!t]
     \centering
     \vspace{-3mm}
     \begin{subfigure}[b]{0.24\columnwidth}
         \centering
         \includegraphics[width=\textwidth]{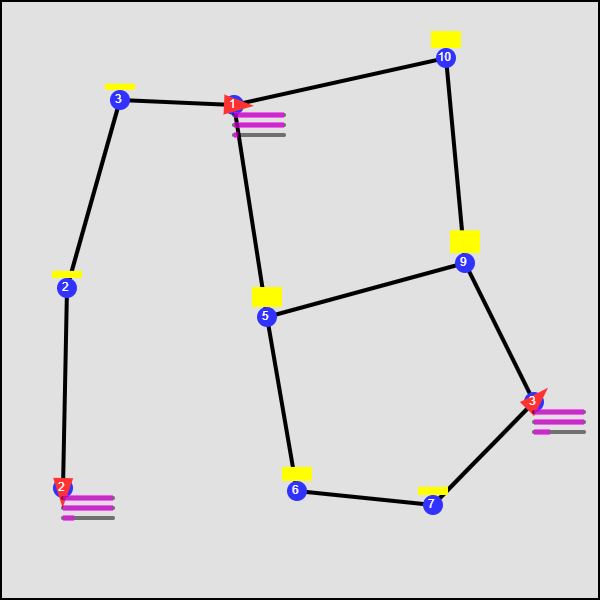}
         \caption{PC1}
         \label{SubFig:1}
     \end{subfigure}
     \begin{subfigure}[b]{0.24\columnwidth}
         \centering
         \includegraphics[width=\textwidth]{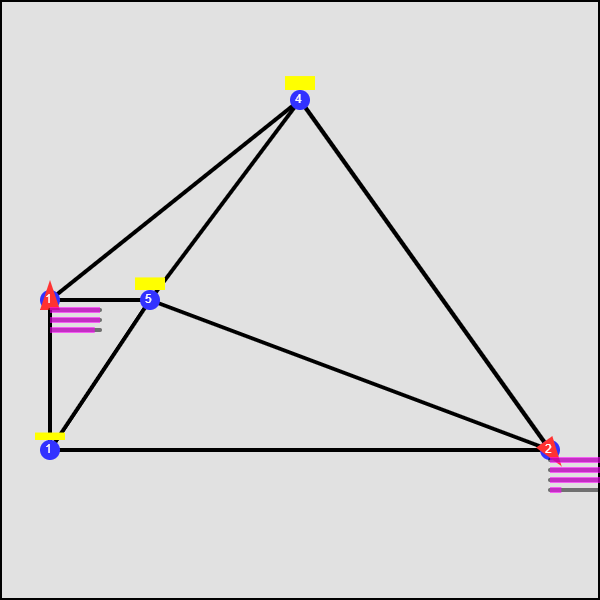}
         \caption{PC2}
         \label{SubFig:2}
     \end{subfigure}
     \begin{subfigure}[b]{0.24\columnwidth}
         \centering
         \includegraphics[width=\textwidth]{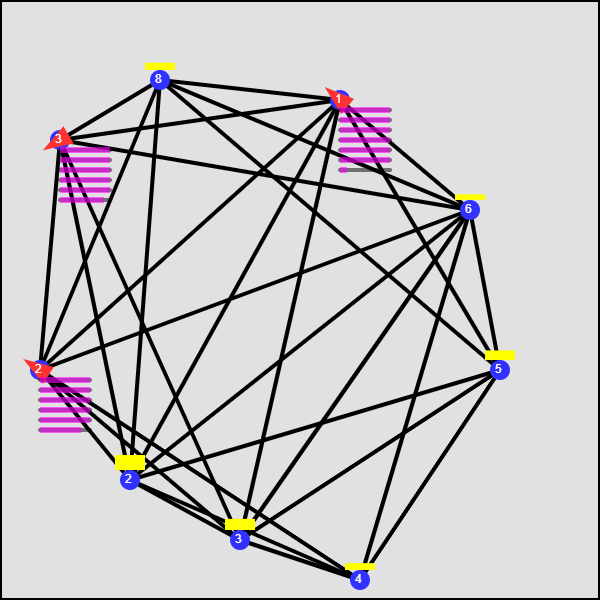}
         \caption{PC3}
         \label{SubFig:3}
     \end{subfigure}
     \begin{subfigure}[b]{0.24\columnwidth}
         \centering
         \includegraphics[width=\textwidth]{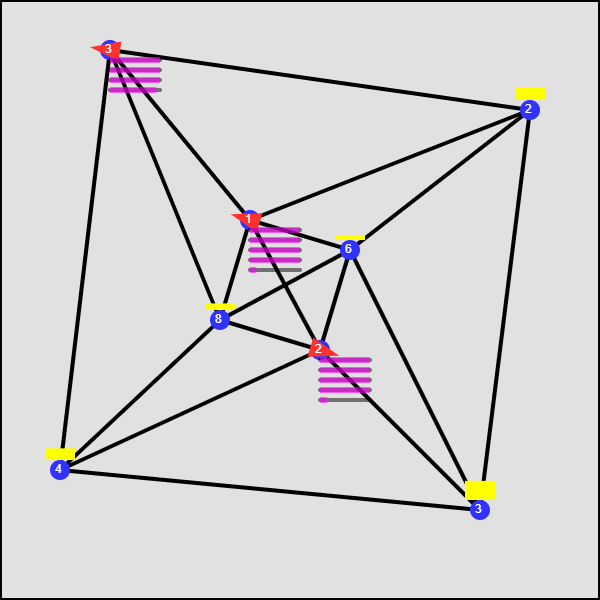}
         \caption{PC4}
         \label{SubFig:4}
     \end{subfigure}
     \begin{subfigure}[b]{0.24\columnwidth}
         \centering
         \includegraphics[width=\textwidth]{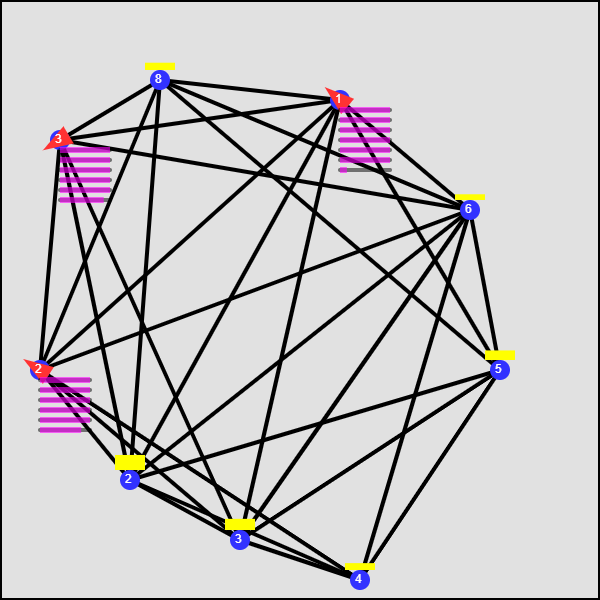}
         \caption{PC5}
         \label{SubFig:5}
     \end{subfigure}
     \begin{subfigure}[b]{0.24\columnwidth}
         \centering
         \includegraphics[width=\textwidth]{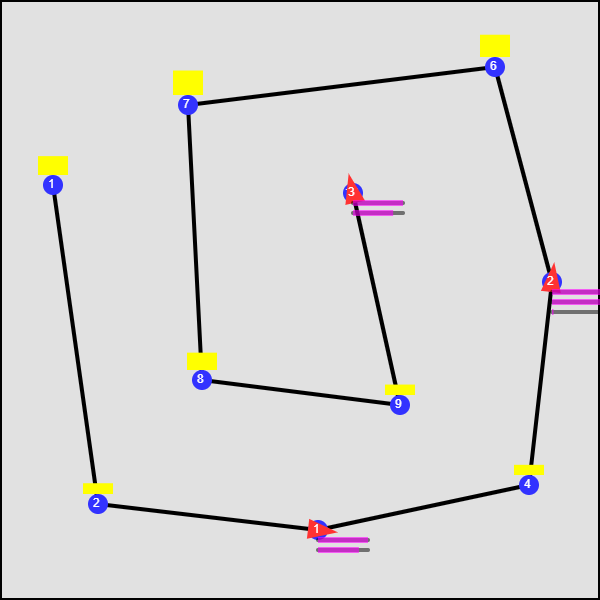}
         \caption{PC6}
         \label{SubFig:6}
     \end{subfigure}
     \begin{subfigure}[b]{0.24\columnwidth}
         \centering
         \includegraphics[width=\textwidth]{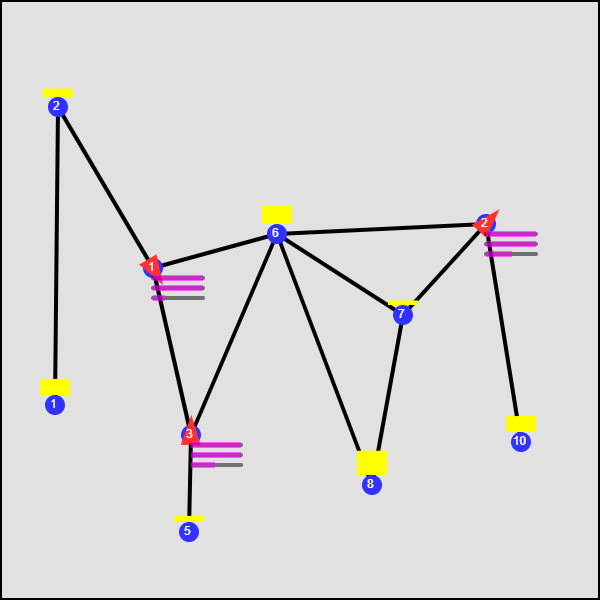}
         \caption{PC7}
         \label{SubFig:7}
     \end{subfigure}
     \begin{subfigure}[b]{0.24\columnwidth}
         \centering
         \includegraphics[width=\textwidth]{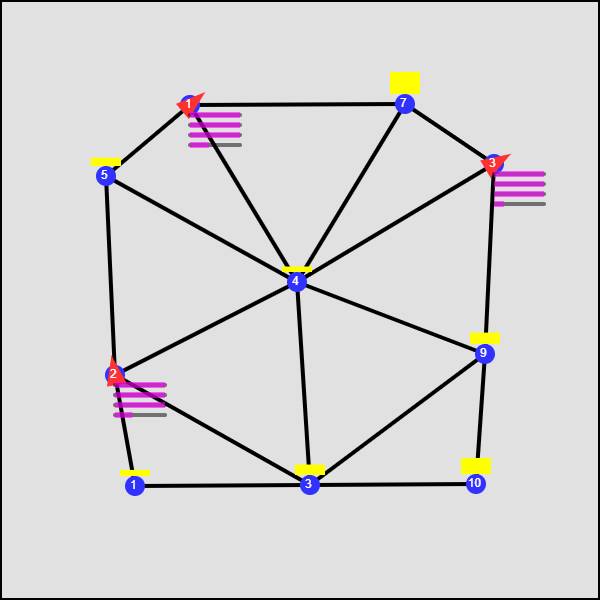}
         \caption{PC8}
         \label{SubFig:8}
     \end{subfigure}
     
        \caption{Final state of the PCs after using the highest performing agent model (and the respective optimal control): FO-3 in PC2 and PC4, and SO in all the other PCs.}
        \label{Fig:FinalConditions}
\end{figure}

In particular, we consider the eight multi-agent PMN problem configurations (PCs) shown in Fig. \ref{Fig:FinalConditions} (labeled PC1-PC8). In each PC, blue circles represent targets and black lines represent the trajectory segments available for the agents to travel between targets. Yellow vertical bars, purple horizontal bars and red triangles indicate target uncertainty levels, agent energy consumption levels and agent locations (i.e., $R_i(t),J_a(0,t)$ and $s_a(t)$), respectively. Since these three quantities are time-dependent, in the figures, only their terminal state (i.e., at $t=T$) is shown when the highest performing agent model (control method) is used.

The parameters of each PC have been chosen as follows: $A_{i}=1,\ B_{i}=10,\ R_{i}(0)=0.5,\ \forall i\in\mathcal{T}$ and target locations (i.e., $Y_i$) are specified in each PC figure. In all PCs, targets have been placed inside a $600\times600$ mission space. The time horizon was set to $T=500$. The initial locations of the agents were chosen such that $s_{a}(0)=Y_{i}$ with $i=1+(a-1)\ast\mathrm{round}(M/N)$. The upper bound on the planning horizon (i.e., $H$) was chosen as $H=\frac{T}{2}=250$ and the weight factor $\alpha$ in \eqref{Eq:MainObjective} was chosen as $\alpha =  213.3\times10^{-6}$.

Obtained comparative results are summarized in Tab. \ref{Tab:Results}. According to these results, on average, the energy-aware second-order agents (i.e., the SO method) have outperformed the energy-agnostic and energy-aware versions of first-order agents (i.e., FO-1 and FO-2, FO-3 methods, respectively) in terms of sensing objective $J_s$, energy objective $J_e$ as well as the total objective $J_T$.

However, the energy-aware (approximate) first-order agent control method FO-3 has shown relatively closer (within $1.17\%$ on average) performance levels to those of the SO method. Moreover, the FO-3 method has outperformed the SO method in terms of the performance metrics $u_{max}$ and $v_{max}$. This observation is reasonable because the motivation behind developing the FO-3 method was to improve the agent energy consumption. Recall also that we already have proven in Theorem \ref{Th:FO3MethodOptimalControls} that $v_{F3} = \frac{1}{k}v_{SO}$, $u_{F3} = \frac{3}{4k^2}v_{SO}$ with $k\geq1$ - which implies that agents under FO-3 method show lower maximum velocity and acceleration levels compared to agents under SO method. Nevertheless, we point out that SO and FO-3 methods have identical computational costs as their respective computational bottlenecks are in solving the non-linear equations \eqref{Eq:OptimumTransitTimeGeneral} and \eqref{Eq:FO3MethodOptimalControls1} - that has the same form.

\section{Conclusion}
\label{Sec:Conclusion} 

This paper considers the persistent monitoring problem defined on a network of targets that needs to be monitored by a team of energy-aware dynamic agents. Starting from an existing event-driven receding horizon control (RHC) solution, we exploit optimal control techniques to incorporate agent dynamics and agent energy consumption into the RHC problem setup. The proposed overall RHC solution is computationally efficient, distributed, on-line and gradient-free. Numerical results are provided to highlight the improvements with respect to an RHC solution that uses energy-agnostic first-order agents. Ongoing work aims to combine the proposed solution with a path planning algorithm to address situations where the agent trajectory segment shapes have to be optimally determined.


\appendix

\subsection{Selecting the Weight Factor: $\alpha$}
\label{App:NormalizationFactor}
The weight factor $\alpha$ present in both the main objective $J_T$ \eqref{Eq:MainObjective} and the RHCP objective $J_H$ \eqref{J_H} is an important factor that decides the trade-off between energy objective and the sensing objective components (i.e., $J_{eH}$ and $J_{sH}$, respectively, in the latter case). Moreover, note that $\alpha$ can be used to bound the resulting optimal agent velocities and accelerations from the proposed RHC solution. Therefore, it is important to have an intuitive technique to select (and vary) $\alpha\in [0,\infty)$. 

To develop such a technique, we use the RHCP form \eqref{J_H}: $J_H = \alpha J_{eH} + J_{sH}$, rather than the main optimization problem form \eqref{Eq:MainObjective}. A typical RHCP objective function $J_H$ that considers both energy and sensing objectives (i.e., $J_{eH}$ and $J_{sH}$, respectively) can be written as 
\begin{equation}
    J_H = \beta \frac{J_{eH}}{E_{H}^{max}} + (1-\beta) \frac{J_{sH}}{S_{H}^{max}},
\end{equation}
where $E_H^{max}$ and $S_H^{max}$ are upper-bounds to the terms $J_{eH}$ and $J_{sH}$ respectively and $\beta$ is a parameter such that $\beta \in [0,1]$. Next, let us re-arrange the above expression to isolate the sensing objective component as
\begin{equation}\label{Eq:NormalizedJ_H}
    J_H = \underbrace{\left[\frac{\beta}{1-\beta}\frac{S_H^{max}}{E_H^{max}}\right]}_{\alpha} J_{eH} + J_{sH}.
\end{equation}
Now, if the ratio $S_H^{max}/E_H^{max}$ is known, a candidate $\alpha$ value can be obtained intuitively by selecting $\beta\in[0,1)$ appropriately. For an example, selecting $\beta=0.5$ gives an equal weight to both energy and sensing objective components.  

To estimate the ratio between $S_H^{max}$ and $E_H^{max}$ we can consider a simple RHCP that occurs when an agent $a$ is ready to leave a target $i$ with a (single) neighboring target $j$ connected through a trajectory segment $(i,j)$. For such a scenario, assuming steady-state operation, using Theorem \ref{Th:ContributionTarget}, we can show that $S_{H}^{max} \propto \rho_{ij}$. Next, let us define quantities $E_H^{max}$, $v_{max}$ and $u_{max}$ based on Theorem \ref{Th:SOMethodOptimalControls} \eqref{Eq:SOMethodMaxValues} as 
$$ E_H^{max} \triangleq E_{SO} \propto \frac{y_{ij}^2}{\rho_{ij}^3},\ \ 
v_{max} \triangleq v_{SO}\propto \frac{y_{ij}}{\rho_{ij}},\ \ 
u_{max} \triangleq u_{SO}\propto \frac{y_{ij}}{\rho_{ij}^2}.
$$
Combining $S_H^{max}$, $E_H^{max}$ together with $v_{max}$ or (alternatively) $u_{max}$ stated above, we can show that 
\begin{eqnarray}
\frac{S_H^{max}}{E_H^{max}} \propto \frac{y_{ij}^2}{v_{max}^4}\ \ \ 
\mbox{ or }\ \ \ 
\frac{S_H^{max}}{E_H^{max}} \propto \frac{1}{u_{max}^2},
\end{eqnarray}
respectively. Here, $v_{max}$ and $u_{max}$ can be thought of as the preferred tangential velocity and acceleration bounds for the agents, respectively. And $y_{ij}$ can be thought of as the mean trajectory segment length over all $(i,j)\in\mathcal{E}$. Finally, neglecting the constants of proportionality in the above statements and using \eqref{Eq:NormalizedJ_H}, we can state $\alpha$ as 
\begin{equation}\label{Eq:CandidateAlpha}
    \alpha = \frac{\beta}{1-\beta} \frac{y_{ij}^2}{v_{max}^4} 
    \ \ \ \mbox{ or } \ \ \ 
    \alpha = \frac{\beta}{1-\beta} \frac{1}{u_{max}^2}.
\end{equation}

This result \eqref{Eq:CandidateAlpha} provides a systematic way to select $\alpha$ while accounting for: (i) the relative balance between sensing and energy objectives (via $\beta\in[0,1]$) and (ii) the preferred tangential velocity and acceleration bounds (via $v_{max}$ and  $u_{max}$, respectively). For an example, with $\beta = 0.5,\ v_{max}=50,\ y_{ij}=25$, the relationship in \eqref{Eq:CandidateAlpha} gives $\alpha = 1.01\times 10^{-4}$.

\subsection{Proof of Theorem \ref{Th:AngularVelocity}}
\label{App:AngularVelocityProof}

First, we transform the parametric form $\{(x(p),y(p)):p\in[p_o,p_f]\}$ of the trajectory segment shape in to the form $\{(x(l),y(l)):l\in[0,y_{ij}]\}$ where $l$ represents the distance along the trajectory segment from  $(x(p_o),y(p_o))$ to $(x(p),y(p))$,\ $p\in[p_o,p_f]$ (recall that $y_{ij}$ is the total length of the interested trajectory segment $(i,j)\in\mathcal{E}$). To achieve the said transformation, we should be able to express the parameter $p$ explicitly in terms of the distance $l$. For this purpose, exploiting the geometry (see also Fig. \ref{Fig:AgentTrajectory2}), we can write a differential equation: 
\begin{equation}\label{Eq:TrajectoryParametrizationTransform}
dl = \sqrt{(x_p^\prime)^2 + (y_p^\prime)^2}\, dp.
\end{equation}
Under assumption \ref{As:AngularVelocity}, \eqref{Eq:TrajectoryParametrizationTransform} can be solved to obtain explicit relationships: $l=f(p)$ and $p=f^{-1}(l)$, where $f:[p_o,p_f]\rightarrow [0,y_{ij}]$ is as in \eqref{Eq:ParametrizationFunction1}. Thus, we now can express the trajectory segment shape in the form: $\{(x(l),y(l)):l\in[0,y_{ij}]\}$.


Second, according to Fig. \ref{Fig:AgentTrajectory2}, note that when the agent $a\in\mathcal{A}$ is at $s_a(t) \equiv (x(l),y(l))$, its orientation $\theta$ satisfies
\begin{equation}
    \tan \theta = \frac{\dot{y}(l)}{\dot{x}(l)} = \frac{\frac{dy(l)}{dl}\frac{dl}{dt}}{\frac{dx(l)}{dl}\frac{dl}{dt}} = \frac{y'}{x'}.
\end{equation}
In the above, the notation ``\,$'$\,'' (without a subscript) has been used here to represent the $\frac{d\cdot}{dl}$ operator. The time derivative of this relationship gives 
\begin{equation}
    \sec^2\theta \, \frac{d\theta}{dt} = \frac{(x'y''-y'x'')}{(x')^2}\,\frac{dl}{dt}. 
\end{equation}
Note that if $l=l_a(t)$ is used to represent the total distance the agent traveled on the trajectory segment by the time $t\in[t_o,t_f]$, we can also write $\frac{dl}{dt} = v_a(t)$ (i.e., the agent tangential velocity) and $\frac{d\theta}{dt} = w_a(t)$ (i.e., the agent angular velocity). Therefore, using the above two relationships and the trigonometric identity: $\sec^2\theta = 1 + \tan^2\theta$, we can obtain $w_a(t)$ for any $t\in[t_o,t_f]$ as
\begin{equation}\label{Eq:AngularVelocity2}
    w_a(t) = \underbrace{\frac{x'y''-y'x''}{(x')^2 + (y')^2}}_{G(l)}\,v_a(t). 
\end{equation}
Here, note that the first term $G(l)$ is a function of $l=l_a(t)$.  

Finally, we can transform this $G(l)$ term in \eqref{Eq:AngularVelocity2} to obtain a function of parameter $p$, using the following relationships (from the chain rule and the fact that $l=f(p)$):
\begin{equation}\label{Eq:ChainRules1}
    x^\prime = \frac{x_p^\prime}{f_p^\prime},\ 
    y^\prime = \frac{y_p^\prime}{f_p^\prime},\ 
    x^{\prime\prime} = \frac{x_p^{\prime\prime}f_p^\prime -x_p^\prime f_p^{\prime\prime}}{(f_p^\prime)^3},\ 
    y^{\prime\prime} = \frac{y_p^{\prime\prime}f_p^\prime -y_p^\prime f_p^{\prime\prime}}{(f_p^\prime)^3},
\end{equation}
and (using \eqref{Eq:ParametrizationFunction1})
\begin{equation}\label{Eq:ChainRules2}
    f_p^\prime = \sqrt{(x_p^\prime)^2 + (y_p^\prime)^2}.
\end{equation}
Recall that, in the above, we have used the notation ``\,$^\prime$\,'' (with a subscript $p$) to denote the operator $\frac{d\cdot}{dp}$. Now, using \eqref{Eq:ChainRules1} and \eqref{Eq:ChainRules2}, $G(l)$ in \eqref{Eq:AngularVelocity2} can be written as  
\begin{equation}
    G(l) = G(f(p)) = \frac{x_p^\prime y_p^{\prime\prime} - y_p^{\prime} x_p^{\prime\prime}}{\left((x_p^\prime)^2 + (y_p^\prime)^2\right)^{\frac{3}{2}}}. 
\end{equation}
Comparing the above result with \eqref{Eq:ParametrizationFunction2}, notice that $G(l)=F(p)$. Therefore, \eqref{Eq:AngularVelocity2} can be written as $w_a(t) = F(p)v_a(t)$ where now $p$ can be replaced with $p=f^{-1}(l) = f^{-1}(l_a(t))$ to obtain \eqref{Eq:AngularVelocity}: 
$$
    w_a(t) = F(f^{-1}(l_a(t)))\,v_a(t), 
$$
which completes the proof.

\subsection{Proof of Proposition \ref{Pr:FO1}}
\label{App:Lm_FO1_Proof}
It is easy to show that $v_{F1}^{ij}(u_{SO}^{max},v_m)$ \eqref{Eq:FO1MethodMaxEdgeVelocity} is a monotonically increasing function with respect to $v_m$. In particular, if $v_m>\frac{1}{2}\sqrt{y_{ij}u_{SO}^{max}}$, the function $v_{F1}^{ij}(u_{SO}^{max},v_m)$ plateaus at a level $\sqrt{y_{ij}u_{SO}^{max}}$.   Therefore, the set of $v_m$ values that satisfies the inequality $v_{F1}^{ij}(u_{SO}^{max},v_m) \leq v_{SO}^{max}$ can be stated as 
\begin{equation}
    v_m \leq v_m^{ij} \triangleq 
    \begin{cases}
    \frac{y_{ij}u_{SO}^{max}v_{SO}^{max}}{(v_{SO}^{max})^2 + y_{ij}u_{SO}^{max}}\ \mbox{ if } y_{ij} \geq \frac{(v_{SO}^{max})^2}{u_{SO}^{max}}\\
    \infty \ \mbox{ otherwise. }
    \end{cases}
\end{equation}
According to \eqref{Eq:FO1MethodParameters}, the inequality $v_{F1}^{ij}(u_{SO}^{max},v_m) \leq v_{SO}^{max}$ should hold for all $(i,j)\in\mathcal{E}$. Therefore, the feasible set of $v_m$ in \eqref{Eq:FO1MethodParameters} is: $v_m \leq \min_{(i,j)\in\mathcal{E}}v_m^{ij}$. Again, using the monotonicity property of $v_{F1}^{ij}(u_{SO}^{max},v_m)$ (which is also the the objective function of \eqref{Eq:FO1MethodParameters}), we can show that the optimal $v_m$ value (i.e., $v_{m1}$) of \eqref{Eq:FO1MethodParameters} is the maximum feasible $v_m$ value, i.e., 
\begin{equation}
\begin{alignedat}{4}
& v_{m1} = \underset{\makebox[1.2cm]{\footnotesize $(i,j)\in\mathcal{E}$}}{\min}v_m^{ij}\  = & \underset{\makebox[2cm]{\footnotesize $(i,j)\in\mathcal{E}$}}{\min} & \ \ && \frac{y_{ij}u_{SO}^{max}v_{SO}^{max}}{(v_{SO}^{max})^2+y_{ij}u_{SO}^{max}} & \\
&  & \makebox[2cm]{subject to} &  && y_{ij}\geq  (v_{SO}^{max})^2/u_{SO}^{max}. 
\end{alignedat}
\end{equation}

\subsection{Second-Order Agent Models with Constraints}
\label{SubSec:SOAgentsWithConstraints}
In this section, we show how a second-order agent $a\in\mathcal{A}$ should select its behavior (including the transit-time) on a trajectory segment when solving a RHCP under tangential velocity or acceleration bounds.



\paragraph{\textbf{SO-V Method}}
The SO-V method assumes that the agent tangential velocity is bounded such that $\vert v_a(t) \vert \leq \bar{v}$ where $\bar{v}$ is predefined and satisfies $\bar{v}<v_{SO} =\frac{3y_{ij}}{2\rho_{SO}}$ (recall that $\rho_{SO}$ is the optimal transit-time found for the unconstrained SO method). Based on the optimal unconstrained velocity profile \eqref{Eq:SOMethodProfiles}, we can expect the optimal constrained velocity profile to contain three different phases: two quadratic segments at the beginning and the end and a constant velocity segment in the middle, as shown in Fig. \ref{Fig:SO-V}. 

A generalized version of the optimal unconstrained velocity profile \eqref{Eq:SOMethodProfiles} can be written as $v(t) = \alpha_0 t (\beta - t),\ t\in[0,\beta]$ where $\beta$ can be thought of as controllable parameter and $\alpha_0 = \frac{6y_{ij}}{\beta^3}$ (enforcing the condition: $\int_0^\beta v(t)dt = y_{ij}$). We next use this $v(t)$ profile to construct the optimal constrained velocity profile $v_a(t)$ as
\begin{equation}\label{Eq:SO2VelocityProfile}
    v_a(t) \triangleq 
    \begin{cases}
    v(t) &t\in [0,t_1)\\
    \bar{v} &t\in[t_1,t_2)\\
    v(t-(\rho_{SV}-\beta)) &t\in[t_2,\rho_{SV}],
    \end{cases}
\end{equation}
where $t_1$ is such that $v(t_1)=\bar{v}$ (existence of such a $t_1$ is guaranteed when $\beta \leq \rho_{SO}$), $t_2 = \rho_{SV}-t_1$ (from symmetry) and the transit-time $\rho_{SV}$ is such that $\int_0^{\rho_{SV}} v_a(t)dt = y_{ij}$. In particular, it can be shown that 
\begin{eqnarray}
        t_1 &=& \frac{\beta}{2}\Big(1-\big(1-\frac{\beta}{t_v}\big)^{\frac{1}{2}}\Big),\nonumber\\\label{Eq:SO2rhoandbeta}
        \rho_{SV} &=& \beta + \frac{2t_v}{3}\big(1-\frac{\beta}{t_v}\big)^{\frac{3}{2}}, 
\end{eqnarray}
where $t_v \triangleq \frac{3y_{ij}}{2\bar{v}}$. We highlight that the agent velocity profile $v_a(t)$ defined in \eqref{Eq:SO2VelocityProfile} depends only on the parameter $\beta$.

\begin{figure}[!h]
    \centering
    \includegraphics[width=2in]{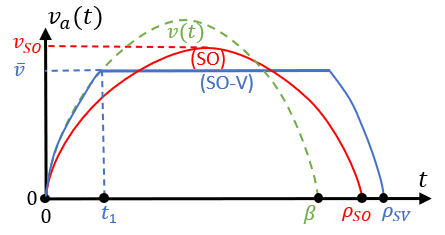}
    \caption{Tangential velocity profiles on a trajectory segment $(i,j)\in\mathcal{E}$ under unconstrained (SO) and constrained (SO-V) second-order agent models.}
    \label{Fig:SO-V}
\end{figure}

Under the SO-V method, the sensing objective component of the OCP \eqref{Eq:OCP} is $J_{sH}^*(\rho_{SV})$ and the energy objective component of the OCP can be written as 
\begin{align}
    E_{SV} \triangleq& \int_0^{\rho_{SV}}(\frac{dv_a(t)}{dt})^2 dt 
    =   \frac{12y_{ij}^2}{\beta^3}
    \Big( \big(1-\frac{\beta}{t_v}\big)^{\frac{3}{2}} - 1 \Big).
\end{align}
Therefore, the OCP objective that needs to be optimized in a RHCP under the SO-V method is 
\begin{equation}\label{Eq:SO2CompositeObjective}
    J_H = \alpha E_{SV} + J_{sH}^*(\rho_{SV}).
\end{equation}   
Thus, the optimal transit-time $\rho_{SV}$ (and hence the optimal $\beta$ value via \eqref{Eq:SO2rhoandbeta}) can be found using:
\begin{equation}
\frac{dJ_H}{d\rho_{SV}} = \alpha \frac{dE_{SV}}{d\beta}/\frac{d\rho_{SV}}{d\beta} + \frac{dJ_{sH}^*(\rho_{SV})}{d\rho_{SV}} = 0.   
\end{equation}
As shown in Fig. \ref{Fig:OverviewOfTheRHCPSolution}, note that finding the optimal transit-time corresponding to the OCP \eqref{Eq:OCP} enables determining the remaining control inputs in $U_{iaj}^*$ of the RHCP \eqref{Eq:RHCGenSolStep1}.

\paragraph{\textbf{SO-A Method}}
The SO-A method assumes that the agent tangential acceleration is bounded such that $\vert u_a(t) \vert \leq \bar{u}$ where $\bar{u}$ is predefined and satisfies $\bar{u}<u_{SO} = \frac{6y_{ij}}{\rho_{SO}^2}$. Based on the optimal unconstrained acceleration profile \eqref{Eq:SOMethodProfiles}, we can expect the optimal constrained acceleration profile to be a composition of three stages: two constant acceleration sessions at the beginning and the end and a linearly decreasing acceleration session in the middle, as shown in Fig. \ref{Fig:SO-A}. 

In particular, the optimal constrained acceleration profile $u_a(t)$ can be written as 
\begin{equation}\label{Eq:SO3Acceleration}
    u_a(t) \triangleq 
    \begin{cases}
    \bar{u} & t \in [0,t_1]\\
    \bar{u}-2\beta(t-t_1) & t \in [t_1,t_2]\\
    -\bar{u} & t \in [t_2,\rho_{SA}],
    \end{cases}
\end{equation}
where $t_1,t_2$ are switching times such that $v_a(t) = v_{SA}$ and $\rho_{SA}$ is the transit-time. Using the symmetry and the relationship $\int_0^{\rho_{SA}}v_a(t)dt = y_{ij}$, it can be shown that 
\begin{align}
    v_{SA} =& \sqrt{y_{ij}\bar{u}-\frac{\bar{u}^4}{6\beta^2}},\nonumber\\ \label{Eq:SO3rhoandbeta}
    \rho_{SA} =&
    \frac{2}{\bar{u}} \sqrt{y_{ij}\bar{u}-\frac{\bar{u}^4}{6\beta^2}} + \frac{\bar{u}}{\beta}.
\end{align}
Notice that $\beta$ is a controllable parameter that fully defines the optimal constrained acceleration profile in \eqref{Eq:SO3Acceleration}. 

\begin{figure}[!h]
    \centering
    \includegraphics[width=2in]{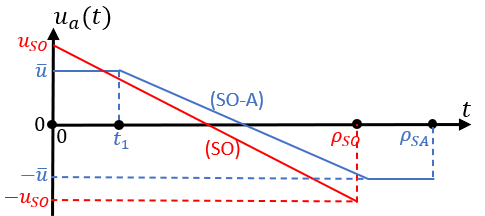}
    \caption{Tangential acceleration profiles on a trajectory segment $(i,j)\in\mathcal{E}$ under unconstrained (SO) and constrained (SO-A) second-order agent models.}
    \label{Fig:SO-A}
\end{figure}

Under the SO-A method, the sensing objective component of a RHCP is $J_{sH}^*(\rho_{SA})$ and the energy objective component of the OCP \eqref{Eq:OCP} can be written as 
\begin{align}
    E_{SA} \triangleq& \int_0^{\rho_{SA}}u_a^2(t) dt =   2\bar{u}\sqrt{y_{ij}\bar{u}-\frac{\bar{u}^4}{6\beta^2}} + \frac{\bar{u}^3}{3\alpha}.
\end{align}
Therefore, the composite objective function of the OCP under the SO-A method is 
\begin{equation}\label{Eq:SO3CompositeObjective}
    J_H = \alpha E_{SA} + J_{sH}^*(\rho_{SA}).
\end{equation}   
Thus, the optimal transit-time $\rho_{SA}$ (and hence the optimal $\beta$ value via \eqref{Eq:SO3rhoandbeta}) can be found using the equation:
\begin{equation}
\frac{dJ_H}{d\rho_{SA}} = \alpha \frac{dE_{SA}}{d\beta}/\frac{d\rho_{SA}}{d\beta} + \frac{dJ_{sH}^*(\rho_{SA})}{d\rho_{SA}} = 0.   
\end{equation}

\subsection{First-Order Agent Models with Constraints}
\label{SubSec:FOAgentsWithConstraints}
In this section, we investigate how a first-order agent $a\in \mathcal{A}$ should select its behavior (including the transit-time) on a trajectory segment when solving an RHCP under tangential velocity or acceleration bounds. 

\paragraph{\textbf{FO-V Method}}
The FO-V method assumes that the agent tangential velocity is bounded such that $v_a(t)\leq \bar{v}$ where $\bar{v}$ is predefined and satisfies $\bar{v}<v_{F3} = \frac{3y_{ij}}{2\rho_{F3}}$ (recall that $\rho_{F3}$ is the transit-time found for the unconstrained FO-3 method). 

Under this constrained setting, the optimal agent tangential velocity profile is shown in Fig. \ref{Fig:FO-V} where $u_{FV}$ is a controllable parameter. Taking the corresponding transit-time as $\rho_{FV}$ and using the fact that $\int_0^{\rho_{FV}}v_a(t)dt = y_{ij}$, it can be shown that 
\begin{equation}\label{Eq:FVAcceleration}
    u_{FV} = \frac{\bar{v}^2}{\bar{v}\rho_{FV} - y_{ij}}.
\end{equation}

Similar to before, under this FO-V method, the sensing objective component of the OCP \eqref{Eq:OCP} is $J_{sH}^*(\rho_{FV})$ and the energy objective component of a RHCP can be written as 
\begin{equation}
    E_{FV} = \frac{2\bar{v}^3}{\bar{v}\rho_{FV}-y_{ij}}.
\end{equation}
The composite objective function that needs to be optimized the OCP \eqref{Eq:OCP} under the FO-V method is 
\begin{equation}\label{Eq:FOVCompositeObjective}
    J_H = \alpha E_{FV} + J_{sH}^*(\rho_{FV}).
\end{equation}   
Therefore, the optimal transit-time $\rho_{FV}$ (and hence the optimal $u_{FV}$ value via \eqref{Eq:FVAcceleration}) can be found using:
\begin{equation}
\frac{dJ_H}{d\rho_{FV}} = \alpha \frac{dE_{FV}}{d\rho_{FV}} + \frac{dJ_{sH}^*(\rho_{FV})}{d\rho_{FV}} = 0.   
\end{equation}

\begin{figure}[!h]
    \centering
    \includegraphics[width=2in]{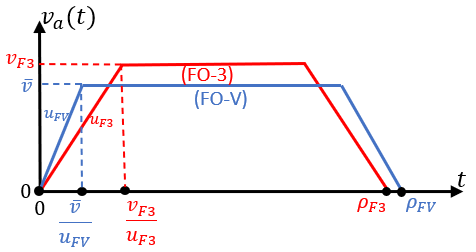}
    \caption{Tangential velocity profiles on a trajectory segment $(i,j)\in\mathcal{E}$ under unconstrained (FO-3) and constrained (FO-V) first-order agent models.}
    \label{Fig:FO-V}
\end{figure}

\paragraph{\textbf{FO-A Method}}
The FO-A method assumes that the agent tangential acceleration is bounded such that $\vert u_a(t)\vert \leq \bar{u}$ where $\bar{u}$ is predefined and satisfies $\bar{u} < u_{F3} = \frac{9y_{ij}}{2\rho_{F3}^2}$. 

Under this constrained setting, the optimal agent tangential velocity profile is shown in Fig. \ref{Fig:FO-A} where $v_{FA}$ is a controllable parameter. Taking the corresponding transit-time as $\rho_{FA}$ and using the fact that $\int_0^{\rho_{FA}}v_a(t)dt = y_{ij}$, it can be shown that 
\begin{equation}\label{Eq:FAVelocity}
    v_{FA} = \frac{\rho_{FA}\bar{u}}{2}-\frac{\bar{u}}{2}\sqrt{\rho_{FA}^2-\frac{4y_{ij}}{\bar{u}}}.
\end{equation}

Following the same procedure as before, under this FO-A method, the sensing objective component of the OCP is $J_{sH}^*(\rho_{FA})$ and the energy objective component of a RHCP can be written as 
\begin{equation}
    E_{FA} = \bar{u}^2\big(\rho_{FA}-\sqrt{\rho_{FA}^2-\frac{4y_{ij}}{\bar{u}}}\big).
\end{equation}
The composite objective function that needs to be optimized in a RHCP under the FO-A method is 
\begin{equation}\label{Eq:FOVCompositeObjective}
    J_H = \alpha E_{FA} + J_{sH}^*(\rho_{FA}).
\end{equation}   
Therefore, the optimal transit-time $\rho_{FA}$ (and hence the optimal $v_{FA}$ value via \eqref{Eq:FAVelocity}) can be found using:
\begin{equation}
\frac{dJ_H}{d\rho_{FA}} = \alpha \frac{dE_{FA}}{d\rho_{FA}} + \frac{dJ_{sH}^*(\rho_{FA})}{d\rho_{FA}} = 0.   
\end{equation}

\begin{figure}[!h]
    \centering
    \includegraphics[width=2in]{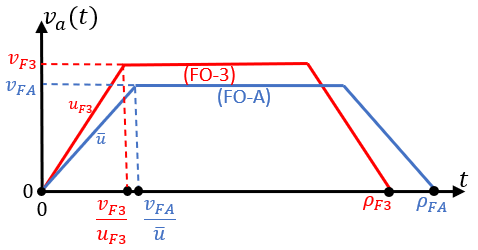}
    \caption{Tangential velocity profiles on a trajectory segment $(i,j)\in\mathcal{E}$ under unconstrained (FO-3) and constrained (FO-A) first-order agent models.}
    \label{Fig:FO-A}
\end{figure}









\bibliographystyle{IEEEtran}
\bibliography{References}

\end{document}